\newtheorem{theorem}{Theorem}
\newtheorem{proposition}[theorem]{Proposition}
\newtheorem{lemma}[theorem]{Lemma}
\newtheorem{corollary}[theorem]{Corollary}
\numberwithin{equation}{section} 
\numberwithin{figure}{section}
\numberwithin{theorem}{section}
\newcommand{\e}{ {\rm e} }
\newcommand{\leftArr}[1]{\overset{\text{\tiny$\leftarrow$}}{#1}}
\renewcommand{\ge}{\geqslant}
\renewcommand{\le}{\leqslant}
\newcommand{\chit}{\protect\raisebox{0.25ex}{$\chi$}}
\newcommand*\squeeze[2]{
\mbox{$\displaystyle\squeezespaces{#1} #2 $}
}
\newcommand*\squeezespaces[1]{
  \thickmuskip=\scalemuskip{\thickmuskip}{#1}%
  \medmuskip=\scalemuskip{\medmuskip}{#1}%
  \thinmuskip=\scalemuskip{\thinmuskip}{#1}%
  \nulldelimiterspace=#1\nulldelimiterspace%
  \scriptspace=#1\scriptspace%
}
\newcommand*\scalemuskip[2]{%
  \muexpr #1*\numexpr\dimexpr#2pt\relax\relax/65536\relax
}
\begin{document}
\Large
\begin{center}
    \textbf{GUE-corners process in two-periodic Aztec diamonds}\\ 
    
    \vspace{10pt}
    
    \large
    Nicolas \textsc{Robert} and Philippe \textsc{Ruelle}
    \\
    
    \vspace{10pt}
    
    \small  
    \textit{Institut de Recherche en Mathématique et Physique \\
        Université catholique de Louvain, Louvain-la-Neuve, B-1348, Belgium}\\

    \vspace{10pt}
    
    \texttt{nicolas.robert@uclouvain.be \qquad philippe.ruelle@uclouvain.be}
    
\end{center}

\vspace{10pt}

\small
\begin{center}
    \textbf{Abstract}
\end{center}

\begin{adjustwidth}{1cm}{1cm}
    
    Links between uniform Aztec diamonds and random matrices are numerous in the literature. In particular \cite{johansson2006eigenvalues,Forrester} established that, under correct rescaling, the probability density function of a certain subclass of dominos converges to the GUE-corners (GUE minor) process in the large size limit. We are interested to see whether this result holds when we modify the probability measure on the space of configurations. In the first part, we look at the case of biased Aztec diamonds, where different weights are associated to vertical and horizontal dominos. In the second part, we examine the case of two-periodic weightings. In both situations, we observe the convergence to GUE-corners with a rescaling that depends on the weighting.

\end{adjustwidth} 
\normalsize

\vspace{0.7cm}\hrule\vspace{0.1cm}

\tableofcontents

\section{Introduction}

Aztec diamonds (ADs) were originally introduced in 1992 \cite{elkies1,elkies2}, and interest in the model rapidly grew in the following years. A main motivation to its study comes from the fact that it is one of the simplest cases where an arctic phenomenon (i.e.\ a spatial transition between statistically ordered and disordered regions, see Figure~\ref{fig:AD definition}) takes place \cite{arctic_circle_theorem}. Moreover, ADs show connections with other statistical models such as the six-vertex model, Alternating Sign Matrices (ASMs), lozenge tilings of hexagons, etc.

\begin{figure}[ht!]
    \centering
    \includegraphics[width=0.3\linewidth]{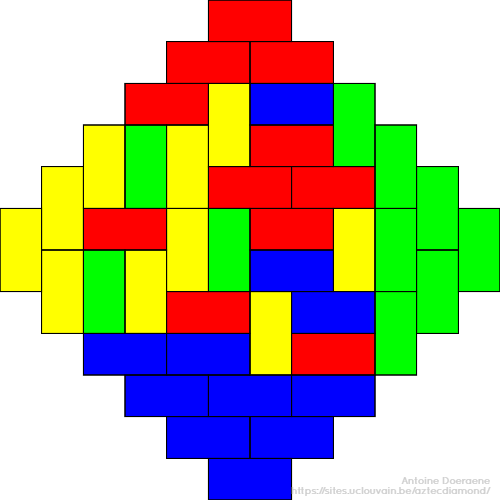}
    \hspace{3cm}
    \begin{tikzpicture}[>=stealth]
        \node[anchor=center,inner sep=0] at (0,0) {\includegraphics[width=0.3\linewidth]{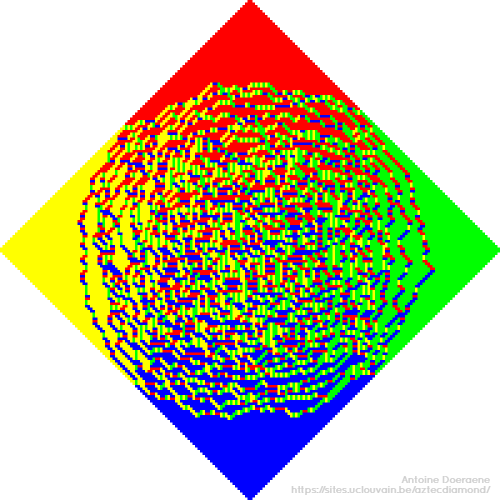}};
        \draw[line width=1.5pt](-1.3,1.3)  circle (10pt)
        (-1.5,1.5)node[anchor=south east] {GUE-corners};
    \end{tikzpicture}
    \caption{\textit{Left}: Aztec diamond of order $n=6$. \textit{Right}: Aztec diamond of order $n=100$ and arctic phenomenon. The order $n$ of an Aztec diamond corresponds to the number of corners along one edge of the domain.}
    \label{fig:AD definition}
\end{figure}

Another remarkable feature is the relation between point processes occurring in Aztec diamonds and random matrix theory. In particular, it is known that near the contact point between the arctic curve and the edge of an Aztec diamond, the distribution of a certain subclass of dominos converges to the GUE-corners process \cite{johansson2006eigenvalues, Forrester} in the case of uniformly weighted configurations. A similar phenomenon has been observed in lozenge tilings \cite{johansson2006eigenvalues,Gorin, mkrtchyan_q-lozenge}, in plane partitions \cite{mkrtchyan2021turning, mkrtchyan2014plane}, in the six-vertex model \cite{dimitrov2020six, dimitrov2022gue}, and in ASMs \cite{gorin2014alternating} for various probability measures.

In the present work, our main objective is to emphasize the universality of the GUE-corners process and prove that it continues to be the limiting process in \textit{two-periodic Aztec diamonds} whereas the biased case serves more as an example to introduce the method. We also take advantage of the determinantal structure to obtain explicit expressions for the distribution of the point process in the discrete setting (before taking the large size limit).

The paper is structured as follows: we start by describing our method in the simple (yet interesting) biased case. The first step consists in properly identifying the determinantal point process of interest. After that, we deduce the probability measure associated with this point process from the biased measure on Aztec diamonds. We show how it is possible to integrate recursively the probability density function of this process to obtain the distribution on a subset of points. Finally, we prove the convergence in distribution of this probability density to the GUE-corners process by extracting the dominant order in $n$ of the distribution.

In Section~\ref{sec:Two-periodic Aztec diamonds}, we repeat the same procedure in the case of two-periodic Aztec diamonds. For this measure, less is known about the model. However, the determinantal structure is preserved, and this is sufficient for our purposes.

\section{Biased Aztec diamonds}

The name \textit{biased Aztec diamonds} refers to a specific choice of measure over the set of configurations where horizontal and vertical dominos do not necessarily carry the same weights. One can recover the \textit{uniform} Aztec diamond by setting the same weight on all dominos. Biased ADs have been largely studied in the literature and are used here as a toy example both for reviewing the method in a simple case and for introducing most of the notation. The spirit of our method is mainly inspired by \cite{Forrester}, yet some steps have been modified to better fit to the two-periodic case we analyse right after.

First of all, let us mention that it is not useful to consider weights on both horizontal and vertical dominos; since their total number is fixed, only the ratio between the two weights is relevant. We therefore associate a weight $1$ to horizontal dominos and $\sqrt{\lambda}$ (called the bias) to the vertical ones. The weight of a configuration $c$ is obtained by taking the product over all domino weights, $w(c)=\lambda^{N_v(c)/2}$ where $N_v(c)$ is the number of vertical dominos in the configuration (this number is always even). The probability of observing this configuration is defined as
\begin{equation}
    \mathbb{P}(c) = \frac{w(c)}{Z_n},
    \qquad
    Z_n = \sum_{c} w(c),
\end{equation}
where $n$ refer to the order of the Aztec diamond, $Z_n$ is the partition function, and where the sum runs over all configurations.

\subsection{Interlaced particles and adjacency}
\label{sec:Interlaced particles}

The main focus of this paper is the study of an underlying (interlaced) particle system shown in Figure~\ref{fig:IPS}. In this picture, the dashed lines give a coordinate system for the location of the cells with a given parity (e.g.\ if we think of the dominos as lying on a checkerboard-type domain, these would correspond, say, to the coordinates of the black cells). We choose to colour in white (resp.\ black) the dominos having their right (resp.\ left) part crossing the dashed lines. In the left part of each black domino, we also draw a red dot called \textit{particle} in this description. We use the notation $x_i^{(\ell)}=0,1,\dots,n$ to designate the vertical position of the $i$-th particle on the \textit{level}\footnote{We borrow this terminology from \cite{duits2020two}.} $\ell$. For instance, one can see in Figure~\ref{fig:IPS} that there is one particle on the level $\ell=1$ and $x_1^{(1)}=4$. There are two particles on the second level with $x_1^{(2)}=3$ and $x_2^{(2)}=6$, etc. In addition, one can remark that the particle system is interlaced in the sense that, for all $1\le i\le\ell\le n$, we have $x_i^{(\ell+1)} \le x_i^{(\ell)} \le x_{i+1}^{(\ell+1)}$. More details about this observation are given at the beginning of Section~\ref{sec:PDF 2P}. The notation $x^{(\ell)}$ has to be interpreted as the collection of all $x_i^{(\ell)}$ for $i=1,\dots,\ell$. In what follows, we will slightly abuse notation, and will use $x_i^{(\ell)}$ to denote the particle at position $(\ell,x_i^{(\ell)})$.

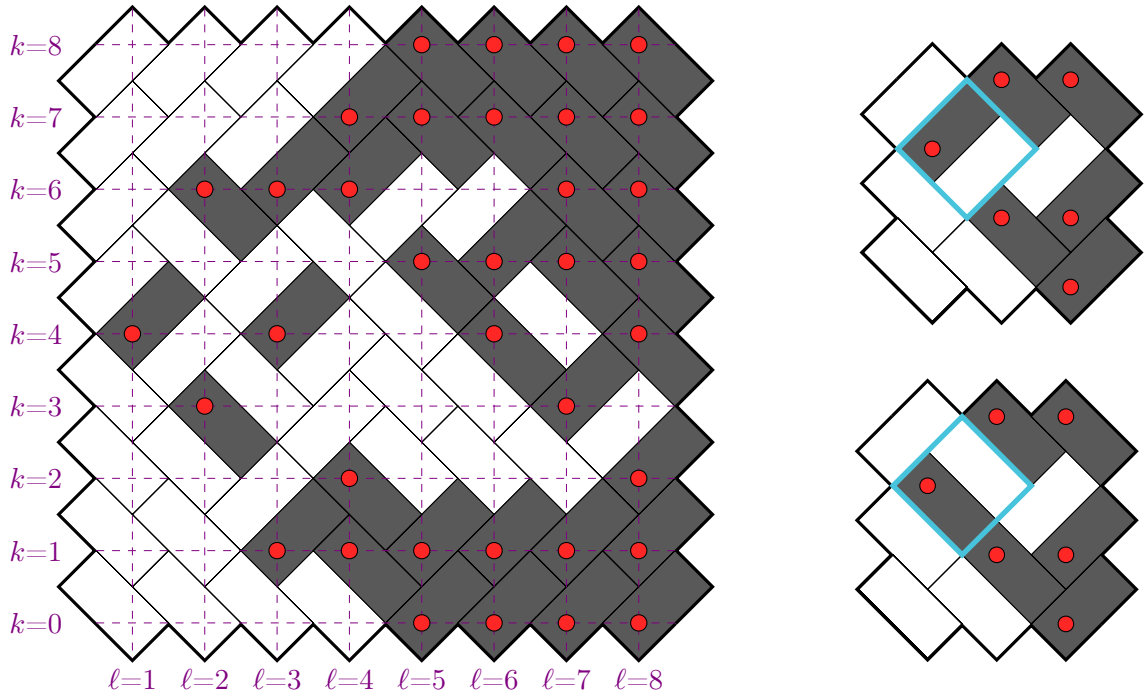
\begin{figure}[!ht]
    \begin{subfigure}{0.64\textwidth}
        \centering
        \begin{tikzpicture}[scale=0.68, rotate=45]
            \draw[line width=2pt] (0,8) -- ++(1,0) -- ++(0,-1) -- ++(1,0) -- ++(0,-1) -- ++(1,0) -- ++(0,-1) -- ++(1,0) -- ++(0,-1) -- ++(1,0) -- ++(0,-1) -- ++(1,0) -- ++(0,-1) -- ++(1,0) -- ++(0,-1) -- ++(1,0) -- ++(0,-2) -- ++(-1,0) -- ++(0,-1) -- ++(-1,0) -- ++(0,-1) -- ++(-1,0) -- ++(0,-1) -- ++(-1,0) -- ++(0,-1) -- ++(-1,0) -- ++(0,-1) -- ++(-1,0) -- ++(0,-1) -- ++(-1,0) -- ++(0,-1) -- ++(-1,0) ;
            \draw[line width=2pt] (0,8) -- ++(-1,0) -- ++(0,-1) -- ++(-1,0) -- ++(0,-1) -- ++(-1,0) -- ++(0,-1) -- ++(-1,0) -- ++(0,-1) -- ++(-1,0) -- ++(0,-1) -- ++(-1,0) -- ++(0,-1) -- ++(-1,0) -- ++(0,-1) -- ++(-1,0) -- ++(0,-2) -- ++(1,0) -- ++(0,-1) -- ++(1,0) -- ++(0,-1) -- ++(1,0) -- ++(0,-1) -- ++(1,0) -- ++(0,-1) -- ++(1,0) -- ++(0,-1) -- ++(1,0) -- ++(0,-1) -- ++(1,0) -- ++(0,-1) -- ++(1,0) ;
            \begin{scope}[black!64,line width=2pt]
            \end{scope}
            
            \filldraw[fill=white, draw=black]
            (-1,8) rectangle +(2,-1)
            (-2,7) rectangle +(2,-1)
            (0,7) rectangle +(2,-1)
            (-3,6) rectangle +(2,-1)
            (-1,6) rectangle +(2,-1)
            (1,6) rectangle +(2,-1)
            (-4,5) rectangle +(2,-1)
            (0,5) rectangle +(2,-1)
            (2,5) rectangle +(2,-1)
            (-4,3) rectangle +(2,-1)
            (-2,3) rectangle +(2,-1)
            (1,2) rectangle +(2,-1)
            (-2,1) rectangle +(2,-1)
            (2,1) rectangle +(2,-1)
            (-5,0) rectangle +(2,-1)
            (-3,0) rectangle +(2,-1)
            (1,-4) rectangle +(2,-1)
            
            (-2,5) rectangle +(1,-2)
            (-5,4) rectangle +(1,-2)
            (-6,3) rectangle +(1,-2)
            (0,3) rectangle +(1,-2)
            (-7,2) rectangle +(1,-2)
            (-5,2) rectangle +(1,-2)
            (-3,2) rectangle +(1,-2)
            (-8,1) rectangle +(1,-2)
            (-6,1) rectangle +(1,-2)
            (0,1) rectangle +(1,-2)
            (-7,0) rectangle +(1,-2)
            (-1,0) rectangle +(1,-2)
            (-6,-1) rectangle +(1,-2)
            (-2,-1) rectangle +(1,-2)
            (0,-1) rectangle +(1,-2)
            (2,-1) rectangle +(1,-2)
            (-5,-2) rectangle +(1,-2)
            (-1,-2) rectangle +(1,-2)
            (0,-3) rectangle +(1,-2);
            
            \filldraw[fill=black!65!white, draw=black]
            (-4,4) rectangle +(2,-1)
            (0,4) rectangle +(2,-1)
            (2,4) rectangle +(2,-1)
            (1,3) rectangle +(2,-1)
            (-2,2) rectangle +(2,-1)
            (2,0) rectangle +(2,-1)
            (-5,-1) rectangle +(2,-1)
            (-3,-3) rectangle +(2,-1)
            (1,-3) rectangle +(2,-1)
            (-4,-4) rectangle +(2,-1)
            (-2,-4) rectangle +(2,-1)
            (-3,-5) rectangle +(2,-1)
            (-1,-5) rectangle +(2,-1)
            (1,-5) rectangle +(2,-1)
            (-2,-6) rectangle +(2,-1)
            (0,-6) rectangle +(2,-1)
            (-1,-7) rectangle +(2,-1)
            
            (-1,5) rectangle +(1,-2)
            (4,4) rectangle +(1,-2)
            (3,3) rectangle +(1,-2)
            (5,3) rectangle +(1,-2)
            (-4,2) rectangle +(1,-2)
            (4,2) rectangle +(1,-2)
            (6,2) rectangle +(1,-2)
            (1,1) rectangle +(1,-2)
            (5,1) rectangle +(1,-2)
            (7,1) rectangle +(1,-2)
            (4,0) rectangle +(1,-2)
            (6,0) rectangle +(1,-2)
            (-3,-1) rectangle +(1,-2)
            (1,-1) rectangle +(1,-2)
            (3,-1) rectangle +(1,-2)
            (5,-1) rectangle +(1,-2)
            (-4,-2) rectangle +(1,-2)
            (4,-2) rectangle +(1,-2)
            (3,-3) rectangle +(1,-2);
            
            \begin{scope}[violet,dashed,line width=0.3pt]
                \foreach \x in {1,2,3,4,5,6,7,8}
                \draw (\x,9-\x) -- +(-9,-9) node[anchor=north] {$\ell$=\x};
            \end{scope}
            
            \begin{scope}[violet,dashed,line width=0.3pt]
                \foreach \x in {0,1,2,3,4,5,6,7,8}
                \draw (-8+\x,\x) node[anchor=east]{$k$=\x \;\;\;} -- +(8,-8)  ;
            \end{scope}
            
            \filldraw[fill=red!85!white, draw=black!100!white]
            (-4,4) +(0.5,-0.5) circle (0.16)
            (0,4) +(0.5,-0.5) circle (0.16)
            (2,4) +(0.5,-0.5) circle (0.16)
            (1,3) +(0.5,-0.5) circle (0.16)
            (-2,2) +(0.5,-0.5) circle (0.16)
            (2,0) +(0.5,-0.5) circle (0.16)
            (-5,-1) +(0.5,-0.5) circle (0.16)
            (-3,-3) +(0.5,-0.5) circle (0.16)
            (1,-3) +(0.5,-0.5) circle (0.16)
            (-4,-4) +(0.5,-0.5) circle (0.16)
            (-2,-4) +(0.5,-0.5) circle (0.16)
            (-3,-5) +(0.5,-0.5) circle (0.16)
            (-1,-5) +(0.5,-0.5) circle (0.16)
            (1,-5) +(0.5,-0.5) circle (0.16)
            (-2,-6) +(0.5,-0.5) circle (0.16)
            (0,-6) +(0.5,-0.5) circle (0.16)
            (-1,-7) +(0.5,-0.5) circle (0.16)
            
            (-1,5) +(0.5,-0.5) circle (0.16)
            (4,4) +(0.5,-0.5) circle (0.16)
            (3,3) +(0.5,-0.5) circle (0.16)
            (5,3) +(0.5,-0.5) circle (0.16)
            (-4,2) +(0.5,-0.5) circle (0.16)
            (4,2) +(0.5,-0.5) circle (0.16)
            (6,2) +(0.5,-0.5) circle (0.16)
            (1,1) +(0.5,-0.5) circle (0.16)
            (5,1) +(0.5,-0.5) circle (0.16)
            (7,1) +(0.5,-0.5) circle (0.16)
            (4,0) +(0.5,-0.5) circle (0.16)
            (6,0) +(0.5,-0.5) circle (0.16)
            (-3,-1) +(0.5,-0.5) circle (0.16)
            (1,-1) +(0.5,-0.5) circle (0.16)
            (3,-1) +(0.5,-0.5) circle (0.16)
            (5,-1) +(0.5,-0.5) circle (0.16)
            (-4,-2) +(0.5,-0.5) circle (0.16)
            (4,-2) +(0.5,-0.5) circle (0.16)
            (3,-3) +(0.5,-0.5) circle (0.16);
            
        \end{tikzpicture}
    \end{subfigure}
    \begin{subfigure}{0.32\textwidth}
        \centering
        \begin{tikzpicture}[scale=0.65, rotate=45]
            \draw[line width=2pt] (0,3) -- ++(1,0) -- ++(0,-1) -- ++(1,0) -- ++(0,-1) -- ++(1,0) -- ++(0,-2) -- ++(-1,0) -- ++(0,-1) -- ++(-1,0) -- ++(0,-1) -- ++(-1,0) ;
            \draw[line width=2pt] (0,3) -- ++(-1,0) -- ++(0,-1) -- ++(-1,0) -- ++(0,-1) -- ++(-1,0) -- ++(0,-2) -- ++(1,0) -- ++(0,-1) -- ++(1,0) -- ++(0,-1) -- ++(1,0) ;
            \filldraw[fill=white, draw=black]
            (-1,3) rectangle +(2,-1)
            (-1,1) rectangle +(2,-1)
            (0,0) rectangle +(2,-1)
            
            (-3,1) rectangle +(1,-2)
            (-2,2) rectangle +(1,-2)
            (-2,0) rectangle +(1,-2)
            (-3,1) rectangle +(1,-2)
            (-3,1) rectangle +(1,-2);
            
            \filldraw[fill=black!65!white, draw=black]
            (-1,2) rectangle +(2,-1) 
            (-1,-2) rectangle +(2,-1)
            (0,-1) rectangle +(2,-1)
            
            (1,2) rectangle +(1,-2)
            (2,1) rectangle +(1,-2)
            (-1,0) rectangle +(1,-2);
            
            \filldraw[fill=red!85!white, draw=black!100!white]
            (-1,2)+(0.5,-0.5) circle (0.16)
            (-1,-2)+(0.5,-0.5) circle (0.16)
            (0,-1)+(0.5,-0.5) circle (0.16)
            
            (1,2)+(0.5,-0.5) circle (0.16)
            (2,1)+(0.5,-0.5) circle (0.16)
            (-1,0)+(0.5,-0.5) circle (0.16);

            \draw[SkyBlue, line width=2pt] (-1,2) rectangle +(2,-2);
            
        \end{tikzpicture}
        
        \vspace{0.7cm}
        
        \begin{tikzpicture}[scale=0.65, rotate=45]
            \draw[line width=2pt] (0,3) -- ++(1,0) -- ++(0,-1) -- ++(1,0) -- ++(0,-1) -- ++(1,0) -- ++(0,-2) -- ++(-1,0) -- ++(0,-1) -- ++(-1,0) -- ++(0,-1) -- ++(-1,0) ;
            \draw[line width=2pt] (0,3) -- ++(-1,0) -- ++(0,-1) -- ++(-1,0) -- ++(0,-1) -- ++(-1,0) -- ++(0,-2) -- ++(1,0) -- ++(0,-1) -- ++(1,0) -- ++(0,-1) -- ++(1,0) ;
            \filldraw[fill=white, draw=black]
            (-1,3) rectangle +(2,-1)
            (0,0) rectangle +(2,-1)
            
            (0,2) rectangle +(1,-2)
            (-3,1) rectangle +(1,-2)
            (-2,2) rectangle +(1,-2)
            (-2,0) rectangle +(1,-2)
            (-3,1) rectangle +(1,-2)
            (-3,1) rectangle +(1,-2);
            
            \filldraw[fill=black!65!white, draw=black]
            (-1,-2) rectangle +(2,-1)
            (0,-1) rectangle +(2,-1)
            
            (-1,2) rectangle +(1,-2)
            (1,2) rectangle +(1,-2)
            (2,1) rectangle +(1,-2)
            (-1,0) rectangle +(1,-2);
            
            \filldraw[fill=red!85!white, draw=black!100!white]
            (-1,2)+(0.5,-0.5) circle (0.16)
            (-1,-2)+(0.5,-0.5) circle (0.16)
            (0,-1)+(0.5,-0.5) circle (0.16)
            
            (1,2)+(0.5,-0.5) circle (0.16)
            (2,1)+(0.5,-0.5) circle (0.16)
            (-1,0)+(0.5,-0.5) circle (0.16);

            \draw[SkyBlue, line width=2pt] (-1,2) rectangle +(2,-2);
            
        \end{tikzpicture}
        \vspace{0.5cm}
    \end{subfigure}
    \caption{\textit{Left}: interlaced particle description of an Aztec diamond of order 8. \textit{Right}: example of two different AD configurations which lead to the same particle system. The local pattern in blue appears for each non-adjacent particle.}
    \label{fig:IPS}
\end{figure}

If all dominos have the same weight, all configurations are equiprobable. Then, the correspondence between the probability measure on the Aztec diamond and the interlaced particle system can be made solely by knowing how many AD configurations correspond to the same interlaced particle configuration. To answer this question, we use the notion of \textit{adjacency} \cite{Forrester}. A particle $x_i^{(\ell)}$ is called \textit{adjacent} if $x_i^{(\ell)}=x_j^{(\ell+1)}$ for $j=i$ or $i+1$, namely the particle $x_i^{(\ell)}$ has another particle directly to its right. 
We introduce the characteristic function
\begin{equation} 
    \alpha(x_i^{(\ell)})
    = 
    \delta_{x_i^{(\ell)},x_i^{(\ell+1)}} + \delta_{x_i^{(\ell)},x_{i+1}^{(\ell+1)}},
\end{equation}
which equals $1$ if the particle at $x_i^{(\ell)}$ is adjacent and $0$ otherwise. It allows us to write the number of adjacent particles lying from level $\ell_1$ to $\ell_2$ as
\begin{equation} 
    \alpha(x^{(\ell_1)},\dots,x^{(\ell_2)})
    = \sum_{\ell=\ell_1}^{\ell_2} \alpha(x^{(\ell)})
    = \sum_{\ell=\ell_1}^{\ell_2} \sum_{i=1}^\ell \alpha(x_i^{(\ell)}).
    \label{eq:adjacency def}
\end{equation}
Since $n(n+1)/2$ is the total number of particles from level $1$ to $n$, the number of non-adjacent particles on the entire AD is given by
\begin{equation}
    \frac{n(n+1)}{2} - \alpha(x^{(1)},\dots,x^{(n)}).
\end{equation}
Because of the remark made below about non-adjacent particles, we fix for consistency that all particles lying on the $n$-th level are adjacent, $\alpha(x^{(n)})=n$, as if there was an imaginary $(n+1)$-th level completely fulfilled.

For further purposes, we split our definition of $\alpha$ into two distinct subclasses
\begin{equation}
    \alpha_{\text{inf}}(x_i^{(\ell)})=     \delta_{x_i^{(\ell)},x_{i}^{(\ell+1)}},
    \qquad
    \alpha_{\text{sup}}(x_i^{(\ell)}) = \delta_{x_i^{(\ell)},x_{i+1}^{(\ell+1)}}.
\end{equation}
These functions encode the information of whether an adjacent particle on the level $\ell$ lies next to its superior or inferior neighbouring particle on the level $\ell+1$. When $\alpha$ functions are written with several arguments, it means that we take the sum over contributions of each individual particle, similarly to \eqref{eq:adjacency def}. We note that, for convenience, only $x_i^{(\ell)}$ appears in the arguments of the adjacency functions, although these functions formally depend on the particle on the next level also.

As one can see in Figure~\ref{fig:IPS}, whenever a particle is non-adjacent, its domino is contained in a plaquette made of two parallel dominos (framed in blue on the picture) which can be flipped without modifying the particle positions. As a consequence, we have two AD configurations for the same particle configuration. This observation is crucial to make the link between the probability distribution of the Aztec diamond and the density of the particle system. It is also reminiscent of the relation between Aztec diamond tilings and ASMs or, equivalently, the six-vertex model. In fact, it can be shown that the freedom in those plaquette orientations is exactly the same one that appears when we relate AD configurations to ASMs, such that our particle system is in one-to-one correspondence with alternating sign matrices.

\subsection{Probability density function}
\label{sec:PDF}

The probability distribution of the particle system is induced by the measure on AD configurations through the general relation
\begin{equation}
    \rho(x^{(1)},\dots,x^{(n+1)}) = \left(\sum_{c\in\mathcal{C}} \mathbb{P}(c)\right)  \chit(x^{(1)}\prec\dots\prec x^{(n+1)}),
    \label{eq:pdf from AD to particles}
\end{equation}
where $\mathcal{C}=\mathcal{C}(x^{(1)},\dots,x^{(n+1)})$ is the subset of all AD configurations which produces the same interlaced $x$-particle system. The notation $x^{(\ell)}\prec x^{(\ell+1)}$ means that the particles on levels $\ell$ and $\ell+1$ are interlaced, namely, we have for all $i$,
\begin{equation}\begin{aligned}
        & x_i^{(\ell)} \in \{0,1,\dots,n\},  
        \\[0.05cm]
        & x_i^{(\ell)} < x_{i+1}^{(\ell)}, 
        \\[0.05cm]
        & x_i^{(\ell+1)} \le x_i^{(\ell)} \le x_{i+1}^{(\ell+1)},
        \label{eq:particle conditions}
\end{aligned}\end{equation}
whereas $\chit (x^{(\ell)}\prec x^{(\ell+1)})=1$ if $x^{(\ell)}\prec x^{(\ell+1)}$ and $0$ otherwise.

Based on the remark made in the previous section, it is not difficult to see that, in the uniform case \cite{Forrester},
\begin{equation}
    \sum_{c\in\mathcal{C}} \mathbb{P}(c)
    =
    \frac{2^{n(n+1)/2 - \alpha(x^{(1)},\dots,x^{(n)})}}{Z_n}.
\end{equation}
Here, $Z_n=2^{n(n+1)/2}$ is the partition function in the uniform case (i.e.\ the number of domino tilings of an Aztec diamond of order $n$). 

\begin{figure}[!ht]
    \centering
    \begin{tikzpicture}[scale=0.65, rotate=45]
        \draw[line width=2pt] (0,8) -- ++(1,0) -- ++(0,-1) -- ++(1,0) -- ++(0,-1) -- ++(1,0) -- ++(0,-1) -- ++(1,0) -- ++(0,-1) -- ++(1,0) -- ++(0,-1) -- ++(1,0) -- ++(0,-1) -- ++(1,0) -- ++(0,-1) -- ++(1,0) -- ++(0,-2) -- ++(-1,0) -- ++(0,-1) -- ++(-1,0) -- ++(0,-1) -- ++(-1,0) -- ++(0,-1) -- ++(-1,0) -- ++(0,-1) -- ++(-1,0) -- ++(0,-1) -- ++(-1,0) -- ++(0,-1) -- ++(-1,0) -- ++(0,-1) -- ++(-1,0) ;
        \draw[line width=2pt] (0,8) -- ++(-1,0) -- ++(0,-1) -- ++(-1,0) -- ++(0,-1) -- ++(-1,0) -- ++(0,-1) -- ++(-1,0) -- ++(0,-1) -- ++(-1,0) -- ++(0,-1) -- ++(-1,0) -- ++(0,-1) -- ++(-1,0) -- ++(0,-1) -- ++(-1,0) -- ++(0,-2) -- ++(1,0) -- ++(0,-1) -- ++(1,0) -- ++(0,-1) -- ++(1,0) -- ++(0,-1) -- ++(1,0) -- ++(0,-1) -- ++(1,0) -- ++(0,-1) -- ++(1,0) -- ++(0,-1) -- ++(1,0) -- ++(0,-1) -- ++(1,0) ;
        \begin{scope}[black!64,line width=2pt]
        \end{scope}
        
        \filldraw[fill=yellow!25!white, draw=black]
        (-1,8) rectangle +(2,-1)
        (-2,7) rectangle +(2,-1)
        (0,7) rectangle +(2,-1)
        (-3,6) rectangle +(2,-1)
        (-1,6) rectangle +(2,-1)
        (1,6) rectangle +(2,-1)
        (-4,5) rectangle +(2,-1)
        (0,5) rectangle +(2,-1)
        (2,5) rectangle +(2,-1)
        (-4,3) rectangle +(2,-1)
        (-2,3) rectangle +(2,-1)
        (1,2) rectangle +(2,-1)
        (-2,1) rectangle +(2,-1)
        (2,1) rectangle +(2,-1)
        (-5,0) rectangle +(2,-1)
        (-3,0) rectangle +(2,-1)
        (1,-4) rectangle +(2,-1);
        
        \filldraw[fill=green!25!white, draw=black]
        (-2,5) rectangle +(1,-2)
        (-5,4) rectangle +(1,-2)
        (-6,3) rectangle +(1,-2)
        (0,3) rectangle +(1,-2)
        (-7,2) rectangle +(1,-2)
        (-5,2) rectangle +(1,-2)
        (-3,2) rectangle +(1,-2)
        (-8,1) rectangle +(1,-2)
        (-6,1) rectangle +(1,-2)
        (0,1) rectangle +(1,-2)
        (-7,0) rectangle +(1,-2)
        (-1,0) rectangle +(1,-2)
        (-6,-1) rectangle +(1,-2)
        (-2,-1) rectangle +(1,-2)
        (0,-1) rectangle +(1,-2)
        (2,-1) rectangle +(1,-2)
        (-5,-2) rectangle +(1,-2)
        (-1,-2) rectangle +(1,-2)
        (0,-3) rectangle +(1,-2);
        
        \filldraw[fill=yellow!70!black!80!white, draw=black]
        (-4,4) rectangle +(2,-1)
        (0,4) rectangle +(2,-1)
        (2,4) rectangle +(2,-1)
        (1,3) rectangle +(2,-1)
        (-2,2) rectangle +(2,-1)
        (2,0) rectangle +(2,-1)
        (-5,-1) rectangle +(2,-1)
        (-3,-3) rectangle +(2,-1)
        (1,-3) rectangle +(2,-1)
        (-4,-4) rectangle +(2,-1)
        (-2,-4) rectangle +(2,-1)
        (-3,-5) rectangle +(2,-1)
        (-1,-5) rectangle +(2,-1)
        (1,-5) rectangle +(2,-1)
        (-2,-6) rectangle +(2,-1)
        (0,-6) rectangle +(2,-1)
        (-1,-7) rectangle +(2,-1);
        
        \filldraw[fill=green!50!black!80!white, draw=black]
        (-1,5) rectangle +(1,-2)
        (4,4) rectangle +(1,-2)
        (3,3) rectangle +(1,-2)
        (5,3) rectangle +(1,-2)
        (-4,2) rectangle +(1,-2)
        (4,2) rectangle +(1,-2)
        (6,2) rectangle +(1,-2)
        (1,1) rectangle +(1,-2)
        (5,1) rectangle +(1,-2)
        (7,1) rectangle +(1,-2)
        (4,0) rectangle +(1,-2)
        (6,0) rectangle +(1,-2)
        (-3,-1) rectangle +(1,-2)
        (1,-1) rectangle +(1,-2)
        (3,-1) rectangle +(1,-2)
        (5,-1) rectangle +(1,-2)
        (-4,-2) rectangle +(1,-2)
        (4,-2) rectangle +(1,-2)
        (3,-3) rectangle +(1,-2);
        
        \begin{scope}[violet,dashed,line width=0.3pt]
            \foreach \x in {1,2,3,4,5,6,7,8,9}
            \draw (\x,9-\x) -- +(-9,-9) node[anchor=north] {$\ell$=\x};
        \end{scope}
        
        \begin{scope}[violet,dashed,line width=0.3pt]
            \foreach \x in {0,1,2,3,4,5,6,7,8}
            \draw (-8+\x,\x) node[anchor=east]{$k$=\x \;\;\;} -- +(8.5,-8.5)  ;
        \end{scope}

        \foreach \x in {0,1,2,3,4,5,6,7,8}
        \foreach \y in {0,1,2,3,4,5,6,7}
        \filldraw[fill=white, draw=black]
        (-8+\x+\y,\x-\y) +(0.5,-0.5) circle (0.16);
        
        \foreach \x in {0,1,2,3,4,5,6,7,8}
        \foreach \y in {8}
        \filldraw[fill=red!85!white, draw=black]
        (-8+\x+\y,\x-\y) +(0.5,-0.5) circle (0.16);
        
        \filldraw[fill=red!85!white, draw=black]
        (-4,4) +(0.5,-0.5) circle (0.16)
        (0,4) +(0.5,-0.5) circle (0.16)
        (2,4) +(0.5,-0.5) circle (0.16)
        (1,3) +(0.5,-0.5) circle (0.16)
        (-2,2) +(0.5,-0.5) circle (0.16)
        (2,0) +(0.5,-0.5) circle (0.16)
        (-5,-1) +(0.5,-0.5) circle (0.16)
        (-3,-3) +(0.5,-0.5) circle (0.16)
        (1,-3) +(0.5,-0.5) circle (0.16)
        (-4,-4) +(0.5,-0.5) circle (0.16)
        (-2,-4) +(0.5,-0.5) circle (0.16)
        (-3,-5) +(0.5,-0.5) circle (0.16)
        (-1,-5) +(0.5,-0.5) circle (0.16)
        (1,-5) +(0.5,-0.5) circle (0.16)
        (-2,-6) +(0.5,-0.5) circle (0.16)
        (0,-6) +(0.5,-0.5) circle (0.16)
        (-1,-7) +(0.5,-0.5) circle (0.16)
        
        (-1,5) +(0.5,-0.5) circle (0.16)
        (4,4) +(0.5,-0.5) circle (0.16)
        (3,3) +(0.5,-0.5) circle (0.16)
        (5,3) +(0.5,-0.5) circle (0.16)
        (-4,2) +(0.5,-0.5) circle (0.16)
        (4,2) +(0.5,-0.5) circle (0.16)
        (6,2) +(0.5,-0.5) circle (0.16)
        (1,1) +(0.5,-0.5) circle (0.16)
        (5,1) +(0.5,-0.5) circle (0.16)
        (7,1) +(0.5,-0.5) circle (0.16)
        (4,0) +(0.5,-0.5) circle (0.16)
        (6,0) +(0.5,-0.5) circle (0.16)
        (-3,-1) +(0.5,-0.5) circle (0.16)
        (1,-1) +(0.5,-0.5) circle (0.16)
        (3,-1) +(0.5,-0.5) circle (0.16)
        (5,-1) +(0.5,-0.5) circle (0.16)
        (-4,-2) +(0.5,-0.5) circle (0.16)
        (4,-2) +(0.5,-0.5) circle (0.16)
        (3,-3) +(0.5,-0.5) circle (0.16);
        
        \draw[SkyBlue, line width=3.5pt] 
        (-4,4) rectangle +(2,-2) 
        (-4,2) rectangle +(2,-2) 
        (-2,2) rectangle +(2,-2) 
        (1,3) rectangle +(2,-2) 
        (1,-1) rectangle +(2,-2) 
        (1,-3) rectangle +(2,-2) 
        (-3,-1) rectangle +(2,-2);
        
    \end{tikzpicture}
    \caption{Weighted Aztec diamond of order $n=8$ described in terms of interlaced particles. Vertical dominos are drawn in green and horizontal ones in yellow. We kept dark shades for previously black dominos, and light ones for dominos previously in white. The blue squares around non-adjacent particles indicate that we can flip a pair of yellow dominos into a pair of green ones (and conversely) without modifying the particle system. All the other dominos are entirely fixed by the particle positions. As well as the particles in red, their complement, the holes, are drawn in white. The additional level $\ell=n+1$ is also represented.}
    \label{fig:IPS biased}
\end{figure}
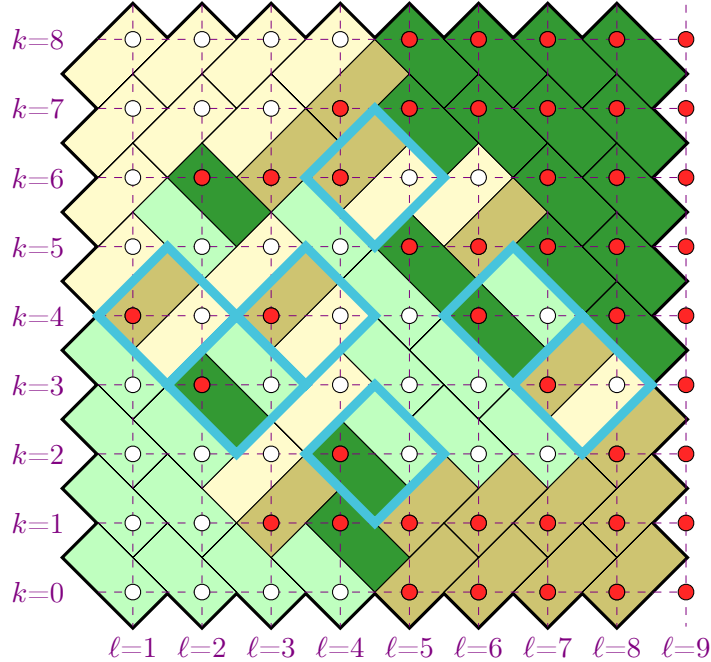

In the biased case, one needs to distinguish the contribution from vertical and horizontal dominos in terms of the position of each particle. In Figure~\ref{fig:IPS biased}, we coloured the vertical and horizontal dominos in green and yellow respectively, and in two different intensities, the darker ones are those containing a particle. A striking remark is that the orientation of a domino lying on an adjacent particle is fully determined by the neighbouring particle to its right\footnote{We give a detailed explanation for this in Section~\ref{sec:PDF 2P} in the two-periodic case.}. If it is adjacent to its superior neighbour (i.e.\ $x_i^{(\ell)}=x_{i+1}^{(\ell+1)}$) then the domino has to be vertical, if it is adjacent to its inferior neighbour (i.e.\ $x_i^{(\ell)}=x_{i}^{(\ell+1)}$) then the domino is horizontal. In addition, we know that the number of horizontal/vertical light shaded dominos is the same as their dark shaded version\footnote{It is a direct consequence of the fact that any configuration can be obtained from another by a finite sequence of flips of two parallel dominos.\label{page:argument of flip}}. The number of vertical (resp.\ horizontal) dominos completely fixed by the particle configuration is then given by $2\alpha_{\text{sup}}(x^{(1)},\dots,x^{(n)})$ (resp.\ $2\alpha_{\text{inf}}(x^{(1)},\dots,x^{(n)})$). The remaining dominos are the ones in the blue plaquettes. Each plaquette contributes a factor $1$ if the dominos are both horizontal, or $\lambda$ if they are both vertical; their sum is raised to a power equal to the number of non-adjacent particles. From this observation, we finally obtain the total probability density
\begin{equation}
    \rho(x^{(1)},\dots,x^{(n+1)}) = \frac{(1 + \lambda)^{\frac{n(n+1)}{2}}}{Z_n} \cdot \Omega(x^{(1)},\dots,x^{(n)}) \cdot \chit(x^{(1)}\prec\dots\prec x^{(n+1)}),
    \label{eq:tot pdf}
\end{equation}
where
\begin{equation}
    \Omega(x^{(1)},\dots,x^{(n)})
    =
    \frac{\lambda^{\alpha_{\text{sup}}(x^{(1)},\dots,x^{(n)})}} {(1 + \lambda)^{\alpha(x^{(1)},\dots,x^{(n)})}}.
\end{equation}
In the distribution, we chose to include the $(n+1)$-th level, completely filled with particles. The variables $x^{(n+1)}$ are deterministic and have no effect on the distribution, but they turn out to be useful for computations. By summing over the positions of the particles in the $(m-1)$ first levels, we obtain the following result.

\begin{proposition}
    The joint probability density function (JPDF) over the subset of particles on levels $m$ to $n+1$ is given by
    \begin{equation}
        \rho(x^{(m)},\dots,x^{(n+1)}) = \frac{(1+ \lambda)^{\frac{n(n+1)}{2}}}{Z_n} \cdot \frac{\Delta(x^{(m)})}{\prod_{i=1}^{m-1} i!}\cdot \Omega(x^{(m)},\dots,x^{(n)}) \cdot \chit(x^{(m)}\prec\dots\prec x^{(n+1)}) ,
        \label{eq:partially int pdf}
    \end{equation}
    where
    \begin{equation}
        \Delta(x^{(m)}) =\det_{1\le i,j \le m}\left[(x_{i}^{(m)})^{j-1} \right]= \prod_{1\le i<j \le m} (x_j^{(m)} - x_i^{(m)})
        \label{eq:Vandermonde}
    \end{equation}
    is a Vandermonde determinant.
    \label{prop:JPDF m to n}
\end{proposition}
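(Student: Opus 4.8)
The plan is to sum out the particles on levels $1$ to $m-1$ from the full density \eqref{eq:tot pdf} one level at a time, by induction on $m$. The first observation is that everything factorizes cleanly across levels: since $\alpha$ and $\alpha_{\text{sup}}$ are sums over levels by \eqref{eq:adjacency def}, we have $\Omega(x^{(1)},\dots,x^{(n)})=\Omega(x^{(1)},\dots,x^{(m-1)})\,\Omega(x^{(m)},\dots,x^{(n)})$, and the chain constraint splits as $\chit(x^{(1)}\prec\dots\prec x^{(n+1)})=\chit(x^{(1)}\prec\dots\prec x^{(m)})\,\chit(x^{(m)}\prec\dots\prec x^{(n+1)})$ because $\prec$ only couples consecutive levels. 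Hence the prefactor, the factor $\Omega(x^{(m)},\dots,x^{(n)})$ and the constraint $\chit(x^{(m)}\prec\dots\prec x^{(n+1)})$ pull out of the summation over the lower levels, and the whole computation reduces to the identity
\[
\sum_{x^{(1)},\dots,x^{(m-1)}}\Omega(x^{(1)},\dots,x^{(m-1)})\,\chit(x^{(1)}\prec\dots\prec x^{(m)})=\frac{\Delta(x^{(m)})}{\prod_{i=1}^{m-1}i!},
\]
in which $x^{(m)}$ is a fixed parameter.

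I would prove this identity by induction on $m$, peeling off the top summed level. For the inductive step I apply the hypothesis to the inner sum over levels $1$ to $m-2$ (with $x^{(m-1)}$ now playing the role of the fixed boundary), replacing it by $\Delta(x^{(m-1)})/\prod_{i=1}^{m-2}i!$. This leaves the single-level sum
\[
\sum_{x^{(m-1)}\,:\,x^{(m-1)}\prec x^{(m)}}\Omega(x^{(m-1)})\,\Delta(x^{(m-1)})=\frac{\Delta(x^{(m)})}{(m-1)!},
\]
which is the real content of the proposition. To evaluate it I expand $\Delta(x^{(m-1)})=\det_{1\le i,j\le m-1}[(x_i^{(m-1)})^{j-1}]$ and use multilinearity in the rows: each particle $x_i^{(m-1)}$ lives in a single row and is constrained to the interval $[x_i^{(m)},x_{i+1}^{(m)}]$, and since the Vandermonde vanishes on coincidences I may drop the strict-ordering constraint and treat the variables as independent. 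The sum then becomes $\det_{i,j}[g_j(x_i^{(m)},x_{i+1}^{(m)})]$ with $g_j(a,b)=\sum_{y=a}^{b}\omega(y)\,y^{j-1}$, where the adjacency factor $\omega$ equals $\tfrac{1}{1+\lambda}$ at the lower endpoint $a$, $\tfrac{\lambda}{1+\lambda}$ at the upper endpoint $b$, and $1$ in between.

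The crucial point, and the step I expect to be the main obstacle, is that these particular endpoint weights are exactly the ones that turn the weighted sum into a pure difference. If $P_j$ denotes a discrete antiderivative of $y^{j-1}$ (so that $P_j(y+1)-P_j(y)=y^{j-1}$), then $g_j(a,b)=q_j(b)-q_j(a)$ with $q_j(y)=P_j(y)+\tfrac{\lambda}{1+\lambda}\,y^{j-1}$; verifying this telescoping, and thereby seeing how the bias $\lambda$ conspires to leave a clean difference, is where the argument really lives. The endpoint corrections are precisely the discrepancy between $g_j$ and the naive antidifference, which is why no residual $\lambda$-dependent factor survives.

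Once $g_j$ is a difference, I finish with a standard determinant manipulation. I augment the matrix with an extra constant column $q_0\equiv 1$ to form the $m\times m$ matrix $[q_j(x_i^{(m)})]_{1\le i\le m,\,0\le j\le m-1}$; subtracting consecutive rows and expanding along the constant column recovers exactly $\det_{i,j}[g_j(x_i^{(m)},x_{i+1}^{(m)})]$. On the other hand, the functions $q_0,q_1,\dots,q_{m-1}$ form a graded family with $\deg q_j=j$ and leading coefficient $1/j$ for $j\ge 1$ (and $1$ for $j=0$), so column reduction identifies the augmented determinant with $\big(\prod_{j=1}^{m-1}\tfrac1j\big)\Delta(x^{(m)})=\Delta(x^{(m)})/(m-1)!$. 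Combining this with the inductive factor $\prod_{i=1}^{m-2}i!$ produces $\prod_{i=1}^{m-1}i!$ in the denominator and closes the induction; the base case $m=1$ is immediate, as $\Delta$ and the empty product are both $1$.
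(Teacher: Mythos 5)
Your proposal is correct and follows essentially the same route as the paper: both reduce the statement to the single-level identity $\sum_{x^{(m-1)}\prec x^{(m)}}\Omega(x^{(m-1)})\,\Delta(x^{(m-1)})=\Delta(x^{(m)})/(m-1)!$, evaluate the weighted power sums as differences $q_j(b)-q_j(a)$ of degree-graded functions (your abstract discrete antiderivative $P_j$ plus the $\tfrac{\lambda}{1+\lambda}y^{j-1}$ correction is exactly the paper's Faulhaber polynomial $p_j$ with its $\tfrac{\lambda-1}{\lambda+1}$ term), and then apply the same antisymmetric determinant-extension and column-reduction argument to produce $\Delta(x^{(m)})/(m-1)!$. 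The only differences are cosmetic: you peel summed levels from the top of the block and keep the antiderivative implicit, while the paper runs the induction on the partially integrated density directly and writes out the Bernoulli-number form of $p_j$ explicitly.
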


In particular, we can recover the partition function from this formula by integrating over all particle positions (i.e.\ considering the above formula for $m=n+1$). Because $x_i^{(n+1)}=i-1$, we have $\Delta(x_i^{(n+1)})= \prod_{i=1}^{n}i! $ and since the probability density is normalized, we find 
\begin{equation}
    Z_n = (1 + \lambda)^{\frac{n(n+1)}{2}},
\end{equation}
as expected \cite{elkies1}. We now prove Proposition~\ref{prop:JPDF m to n}.

\begin{proof}
    From \eqref{eq:tot pdf}, the formula \eqref{eq:partially int pdf} is correct for $m=1$. Assuming it is also the case for $m$, we show that it holds for $m+1$ by verifying the following identity,
    \begin{equation}
        \sum_{x^{(m)}}
        \Delta(x^{(m)})
        \Omega(x^{(m)})
        \chit(x^{(m)}\prec x^{(m+1)})
        =
        \frac{1}{m!}\Delta(x^{(m+1)}).
        \label{eq:rec int}
    \end{equation}
    To lighten the notations, let us set $x_i\equiv x_i^{(m)}\quad (1\le i\le m)$ and $y_i\equiv x_i^{(m+1)}\quad (1\le i\le m+1)$.
    Since the Vandermonde determinant in the left-hand side (lhs) depends on each variable separately row by row, we can factorize $\Omega(x^{(m)})=\prod_i \Omega(x_i^{(m)})$ and insert it inside the determinant as well as the summations,
    \begin{equation}
        \sum_{x_1=y_1}^{y_2} \cdots\sum_{x_m=y_{m}}^{y_{m+1}}
        \det_{1\le i,j \le m}\left[
        x_i^{j-1} \Omega(x_i)\right]
        =
        \det_{1\le i,j \le m}\left[\sum_{x_i=y_i}^{y_{i+1}}
        x_i^{j-1} \Omega(x_i)\right].
    \end{equation}
    The factor $\Omega(x_i)$ only affect the boundary terms of the sum and can be written as 
    \begin{equation}
        \Omega(x_i) 
        =
        \frac{\lambda^{\alpha_{\text{sup}}(x_i)}} {(1 +\lambda)^{\alpha(x_i)}}
        =
        \left\{\begin{array}{ll}
            \Omega_{\text{sup}}\equiv\frac{\lambda}{(1 +\lambda)} & \text{if } x_i=y_{i+1} ,
            \\[0.2cm]
            \Omega_{\text{inf}}\equiv\frac{1}{(1 +\lambda)} & \text{if } x_i=y_{i} ,
            \\[0.2cm]
            1 & \text{otherwise. }
        \end{array}\right.
    \end{equation}
    Such sums of powers can easily be computed using Euler-Maclaurin or Faulhaber's formula, for $j\ge 1$,
        \begin{equation}\begin{aligned}
                \sum_{x_i=y_i}^{y_{i+1}} x_i^{j-1}\Omega(x_i) 
                &= 
                \sum_{x_i=y_i+1}^{y_{i+1}} x_i^{j-1} + y_{i+1}^{j-1} (\Omega_{\text{sup}}-1) + y_i^{j-1} \,\Omega_{\text{inf}}
                \\ &=
                \sum_{x_i=0}^{y_{i+1}} x_i^{j-1} - \sum_{x_i=0}^{y_{i}} x_i^{j-1} - \frac{y_{i+1}^{j-1}}{(1 +\lambda)}  + \frac{ y_i^{j-1} }{(1 +\lambda)} 
                = p_j(y_{i+1}) - p_j(y_i),
        \end{aligned}\end{equation}
        where
        \begin{equation}
            p_j(y) = \frac{y^j}{j} + \frac{\lambda-1}{\lambda+1}\frac{y^{j-1}}{2}+ \sum_{k=2}^{j-1} \frac{B_k}{j}\binom{j}{k} y^{j-k},
        \end{equation}
        with $B_k$ the Bernoulli numbers. As all entries of the determinant are proportional to the difference of polynomials in each variable, we can extend this determinant of size $m$ into a determinant of size $(m+1)$ using the relation
        \begin{equation}
            \det_{1\le i,j \le m}\big[
            p_j(y_{i+1}) - p_j(y_i)\big] 
            = 
            \det_{1\le i,j \le m+1}\big[p_{j-1}(y_i)\big]
            =
            \det_{1\le i,j \le m+1}\left[\frac{y_i^{j-1}}{j-1}
            \right]
            =
            \frac{1}{m!}\Delta(x^{(m+1)}),
            \label{eq:extension of antisym matrix}
        \end{equation}
        where we set $p_0(y)=1$. The first equality is obtained by elementary row operations and actually holds for more general functions than polynomials. The second is obtained by column operations and is a standard property of Vandermonde determinants. The last relation finally yields the desired result and concludes the proof.
\end{proof}

\subsection{Distribution on first levels}
\label{sec:Integration from the right}

In this section, we obtain the distribution $\rho(x^{(1)},\dots,x^{(m)})$ by integrating from the right side of the particle system. 
Another possible approach to obtain this result, similar to \cite{Forrester}, consist in the introduction of the holes (the complement of the particle set, see Figure~\ref{fig:IPS biased}) and the use of the symmetries of the Aztec diamond. It is how we initially obtained the distribution, however, the method described here after is slightly more direct.

\begin{proposition}
    The probability density of the particle process on the first $m$ levels is given by
    \begin{equation}
        \rho(x^{(1)},\dots,x^{(m)}) = A_{n,m} \det_{1\le i,j\le m} \left[L_{n-m+i,x_j^{(m)}} \right]
        \cdot
        \Omega(x^{(1)},\dots,x^{(m-1)})
        \cdot
        \chit(x^{(1)}\prec\dots\prec x^{(m)}),
        \label{eq:pdf 1 to m, x with determinant}
    \end{equation}
    where the $L_{i,j}= \lambda^j \binom{i}{j}$ can be defined by the initial condition $L_{0,j}=\delta_{0,j}$, and the recurrence, for $i\ge0$,
    \begin{equation}
        L_{i+1,j} = L_{i,j} + \lambda\, L_{i,j-1} , 
        \qquad\textit{while}\qquad 
        A_{n,m} = \frac{\lambda^{-m(m-1)/2}}{(1+\lambda)^{m(n+1-m)}}.
    \end{equation}
    \label{prop:int from the right}
\end{proposition}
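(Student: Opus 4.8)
The plan is to prove \eqref{eq:pdf 1 to m, x with determinant} by \emph{downward} induction on $m$, integrating out the levels one at a time from the top. For the base case $m=n+1$, equation \eqref{eq:tot pdf} together with $Z_n=(1+\lambda)^{n(n+1)/2}$ gives $\rho(x^{(1)},\dots,x^{(n+1)})=\Omega(x^{(1)},\dots,x^{(n)})\,\chit(x^{(1)}\prec\dots\prec x^{(n+1)})$. Since $x_j^{(n+1)}=j-1$, the claimed matrix $[L_{i-1,\,x_j^{(n+1)}}]=[\lambda^{j-1}\binom{i-1}{j-1}]$ is lower triangular with unit binomial diagonal, so its determinant equals $\lambda^{0+1+\dots+n}=\lambda^{n(n+1)/2}=A_{n,n+1}^{-1}$ and the formula reproduces the full density.

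For the inductive step I assume the statement at level $m+1$ and sum over $x^{(m+1)}$. Writing $y:=x^{(m)}$, $z:=x^{(m+1)}$ and $N_i:=n-m-1+i$ (so that $N_i+1=n-m+i$), I factor $\Omega(x^{(1)},\dots,x^{(m)})=\Omega(x^{(1)},\dots,x^{(m-1)})\,\Omega(x^{(m)})$ and $\chit(x^{(1)}\prec\dots\prec x^{(m+1)})=\chit(x^{(1)}\prec\dots\prec x^{(m)})\,\chit(y\prec z)$; everything independent of $z$ leaves the sum, and the step reduces to the single-level identity
\begin{equation}
A_{n,m+1}\sum_{z:\,y\prec z}\det_{1\le i,j\le m+1}\big[L_{N_i,\,z_j}\big]\,\Omega(x^{(m)})=A_{n,m}\det_{1\le i,j\le m}\big[L_{N_i+1,\,y_j}\big].
\end{equation}
Here the interlacing forces $z_j\in[y_{j-1},y_j]$ (with the conventions $y_0=-1$, $y_{m+1}=n$), so each $z_j$ sits in column $j$ alone and the ranges may be taken independently, coincidences $z_j=z_{j+1}$ being annihilated by antisymmetry of the determinant.

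Next I repackage the weight. Since $\Omega(x_i^{(m)})$ equals $\tfrac1{1+\lambda}$ when $y_i=z_i$ and $\tfrac{\lambda}{1+\lambda}$ when $y_i=z_{i+1}$, each such factor can be reassigned to a boundary value of a single column, giving a per-column weight $\omega_j(z_j)$ equal to $\tfrac1{1+\lambda}$ at $z_j=y_j$, to $\tfrac{\lambda}{1+\lambda}$ at $z_j=y_{j-1}$, and to $1$ in between, with $\prod_j\omega_j(z_j)=\prod_i\Omega(x_i^{(m)})$ on admissible configurations. The summation then enters the determinant, yielding $\det_{1\le i,j\le m+1}[T_{i,j}]$ with $T_{i,j}=\sum_{z=y_{j-1}}^{y_j}L_{N_i,z}\,\omega_j(z)$. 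Setting $\Phi_N(y):=\tfrac1{1+\lambda}\sum_{w=0}^{y}L_{N+1,w}$, Pascal's rule gives $T_{i,j}=\Phi_{N_i}(y_j)-\Phi_{N_i}(y_{j-1})$ for $1\le j\le m$ and $T_{i,m+1}=(1+\lambda)^{N_i}-\Phi_{N_i}(y_m)$, the power arising from $\sum_{w=0}^{n}L_{N_i,w}=(1+\lambda)^{N_i}$.

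It then remains a determinant reduction. Telescoping column operations $C_j\to C_1+\dots+C_j$ replace $[T_{i,j}]$ by the matrix with entries $\Phi_{N_i}(y_j)$ for $j\le m$ and $(1+\lambda)^{N_i}$ in the last column. Using the identities $\Phi_N(y)-\Phi_N(y-1)=\tfrac1{1+\lambda}L_{N+1,y}$ and $\Phi_{N+1}(y)-\Phi_N(y)=\lambda\,\Phi_N(y-1)$ (both immediate from the definition of $\Phi_N$ and the recurrence for $L$), the row operations $R_i\to R_i-(1+\lambda)R_{i-1}$ annihilate the last column except for its top entry $(1+\lambda)^{N_1}=(1+\lambda)^{n-m}$ and turn every other entry into $-\tfrac{\lambda}{1+\lambda}L_{N_i,y_j}$. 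Laplace expansion along that column leaves an $m\times m$ determinant that, after reindexing the rows, is exactly $\det_{1\le i,j\le m}[L_{N_i+1,y_j}]$, and the scalars combine to $(1+\lambda)^{n-m}\big(\tfrac{\lambda}{1+\lambda}\big)^m=\lambda^m(1+\lambda)^{n-2m}=A_{n,m}/A_{n,m+1}$, which closes the induction. I expect the main obstacle to be the weight repackaging of the third paragraph: $\Omega(x^{(m)})$ is naturally indexed by the level-$m$ particles and couples two neighbouring columns of the level-$(m+1)$ determinant, so rewriting it as independent per-column boundary weights $\omega_j$ — and correctly treating the two extreme columns, where one adjacency term is absent and the binomial theorem supplies the $(1+\lambda)^{N_i}$ entry — is the step demanding the most care; once it is in place, the telescoping and the row reduction are routine.
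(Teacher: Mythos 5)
Your proof is correct and follows essentially the same route as the paper's: downward induction from $m=n+1$, repackaging the adjacency weight $\Omega(x^{(m)})$ as per-column boundary weights (the paper's $\leftArr\Omega$), telescoping column operations $C_j\mapsto\sum_{k\le j}C_k$, row operations $R_i\mapsto R_i-(1+\lambda)R_{i-1}$, and the same final prefactor identification $\lambda^m(1+\lambda)^{n-2m}=A_{n,m}/A_{n,m+1}$. Your only departures are presentational — introducing $\Phi_N$ explicitly, the convention $y_0=-1$ instead of the paper's $x_0=0$ with unit boundary weight, and spelling out the multilinearity/antisymmetry justification that the paper leaves implicit.
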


\begin{proof}
    We proceed by induction starting from $m=n+1$ (or indeed $m=n$ as both values yields the same distribution as it should do). In this case, we recover directly \eqref{eq:tot pdf} due to $\det_{1\le i,j\le n+1} L_{i-1,j-1} = \lambda^{n(n+1)/2}$. For the induction step, we need to prove that the distribution integrated over its $(m+1)$-th level yields the correct density for the $m$ first levels. This is equivalent to show that
    \begin{equation}
        \sum_{x^{(m+1)}} \det_{1\le i,j\le m+1} \left[ L_{n-m+i-1,x_j^{(m+1)}} \right]
        \cdot
        \Omega(x^{(m)})
        \cdot
        \chit(x^{(m)}\prec x^{(m+1)})
        =
        \frac{A_{n,m}}{A_{n,m+1}}\det_{1\le i,j\le m} \left[ L_{n-m+i,x_j^{(m)}}\right].
        \label{eq:induction step, biased}
    \end{equation}
    As before, let us set $x_j\equiv x_j^{(m)}\quad (1\le j\le m)$ and $y_j\equiv x_j^{(m+1)}\quad (1\le j\le m+1)$.
    The adjacency factor $\Omega(x^{(m)})$ weighs the distribution when particles of levels $m$ and $m+1$ have the same position. Until now, we wrote it as a function of the level $m$, but here, it is convenient to interpret this factor as a function $\leftArr\Omega(x^{(m+1)})=\Omega(x^{(m)})$ of the level $m+1$ instead, such that 
    \begin{equation}
        \leftArr\Omega(y_j) 
        =
        \left\{\begin{array}{cl}
            \frac{1}{(1 +\lambda)} & \text{if } y_j=x_j ,
            \\[0.2cm]
            \frac{\lambda}{(1 +\lambda)} & \text{if } y_j=x_{j-1} ,
            \\[0.2cm]
            1 & \text{otherwise. }
        \end{array}\right.
    \end{equation}
    We rewrite the lhs of \eqref{eq:induction step, biased} with the sums in the determinant,
    \begin{equation}
        \det_{1\le i,j\le m+1}  \left[\sum_{y_j=x_{j-1}}^{x_j}  L_{n-m+i-1,y_j} \;
        \leftArr\Omega(y_j)\right],
    \end{equation}
    where we set $x_0=0$ and $x_{m+1}=n$. (Note that $\leftArr\Omega$ takes the value $1$ in these special cases since it corresponds to the boundary of the AD and not to the presence of a particle.) We simplify this determinant with the column operations: $C_j\mapsto \sum_{k=1}^j C_k$. Since we have $\leftArr\Omega(y_i=x_i)+\leftArr\Omega(y_{i+1}=x_i)=1$, the entries become
    \begin{equation}
        \sum_{y_j=0}^{x_j}  L_{n-m+i-1,y_j} \;\leftArr\Omega(y_j),
    \end{equation}
    with $\leftArr\Omega$ affecting only the upper bound of the summation. In the last column, the expression simplifies further and we find $\sum_{y=0}^{n}  L_{n-m+i-1,y} = (1+\lambda)^{n-m+i-1}$. Since we want to reduce the size of the determinant, we apply the row operations $R_i\mapsto R_i - (1+\lambda)R_{i-1}$ in order to obtain $0$'s in the last column. The rest of the entries, for $i=2,\dots,m+1$ and $j=1,\dots,m$, are given by
    \begin{equation}
        \sum_{y_j=0}^{x_j} \big( L_{n-m+i-1,y_j} -(1+\lambda)L_{n-m+i-2,y_j} \big)\leftArr\Omega(y_j)
        =
        \sum_{y_j=0}^{x_j} \lambda \big( L_{n-m+i-2,y_j-1} - L_{n-m+i-2,y_j} \big)\leftArr\Omega(y_j).
    \end{equation}
    The last summation, being telescopic, can be carried explicitly and, after some simplifications, using the recurrence relation of the $L_{i,j}$'s, one finds that these entries are equal to $\frac{-\lambda}{1+\lambda} L_{n-m+i-1,x_j}$. In summary, the determinant is now given by
    \begin{equation}
        \begin{vNiceArray}{cw{c}{4cm}c|c}[margin]
            \Block{1-3}{\sum_{x=0}^{x_j}  L_{n-m,x} \;\leftArr\Omega(x)} & & & (1+\lambda)^{n-m} \\[0.2cm]
            \hline
            \Block{3-3}{\frac{-\lambda}{1+\lambda} L_{n-m+i-1,x_j}} & & & 0  \\
            & & & \vdots \\
            & & & 0 \\
        \end{vNiceArray}_{(m+1)\times (m+1)}
        \hspace{0cm}=
        \frac{\lambda^{m}}{(1+\lambda)^{2m-n}} \det_{1\le i,j\le m}  \Big[L_{n-m+i,x_j}\Big],
    \end{equation}
    and the prefactor exactly corresponds to $A_{n,m}/A_{n,m+1}$.
\end{proof}

The determinant in \eqref{eq:pdf 1 to m, x with determinant} can be explicitly computed, providing the following result.

\begin{corollary}
    The probability density of the particle process on the first $m$ levels is given by
    \begin{equation}\begin{aligned}
            \rho(x^{(1)},\dots,x^{(m)}) 
            =
            A_{n,m}
            \Delta(x^{(m)})
            \prod_{i=1}^m\frac{(n+1-i)! \;\lambda^{x_i^{(m)}}}{x^{(m)}_{i}!(n-x^{(m)}_{i})!}
            \cdot
            \Omega(x^{(1)},\dots,x^{(m-1)})
            \cdot
            \chit(x^{(1)}\prec\dots\prec x^{(m)}).
        \end{aligned}
        \label{eq:pdf 1 to m, x}
    \end{equation}
\end{corollary}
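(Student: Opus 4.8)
The statement \eqref{eq:pdf 1 to m, x} differs from \eqref{eq:pdf 1 to m, x with determinant} of Proposition~\ref{prop:int from the right} only in that the determinant $\det_{1\le i,j\le m}[L_{n-m+i,x_j^{(m)}}]$ has been replaced by a closed form; the prefactor $A_{n,m}$, the adjacency weight $\Omega$ and the interlacing indicator are untouched. The entire task is therefore a determinant evaluation. Writing $x_j\equiv x_j^{(m)}$ and recalling $L_{i,j}=\lambda^j\binom{i}{j}$, I would first pull the factor $\lambda^{x_j}$ out of each column $j$, so that it remains to establish the purely binomial identity
\begin{equation}
  \det_{1\le i,j\le m}\left[\binom{n-m+i}{x_j}\right]
  =
  \Delta(x^{(m)})\,\prod_{i=1}^m\frac{(n-m+i)!}{x_i!\,(n-x_i)!},
\end{equation}
after which \eqref{eq:pdf 1 to m, x} follows by reinserting the $\lambda^{x_j}$ and the surviving factors, using $\prod_{i=1}^m(n-m+i)!=\prod_{i=1}^m(n+1-i)!$.

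Next I would strip this determinant down to a Vandermonde by successive multilinear factorisations. Since $\binom{n-m+i}{x_j}=\frac{(n-m+i)!}{x_j!\,(n-m+i-x_j)!}$, I pull $(n-m+i)!$ out of row $i$ and $1/x_j!$ out of column $j$, reducing the problem to $\det_{i,j}\big[1/(n-m+i-x_j)!\big]$ with the convention $1/k!=0$ for $k<0$. The key rewriting is
\begin{equation}
  \frac{1}{(n-m+i-x_j)!}
  =
  \frac{(n-x_j)^{\downarrow}_{\,m-i}}{(n-x_j)!},
  \qquad
  (y)^{\downarrow}_{r}:=y(y-1)\cdots(y-r+1),
\end{equation}
which holds for all admissible $i,j$: when $n-m+i-x_j<0$ the left-hand side vanishes by convention, while the falling factorial on the right then contains the factor $0$, so both sides agree. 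Pulling $1/(n-x_j)!$ out of each column leaves $\det_{i,j}\big[(n-x_j)^{\downarrow}_{\,m-i}\big]$, and collecting all the extracted factors reproduces exactly the right-hand side above, provided this last determinant equals $\Delta(x^{(m)})$.

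Finally I would evaluate $\det_{i,j}\big[(n-x_j)^{\downarrow}_{\,m-i}\big]$. Setting $y_j=n-x_j$, the entry in row $i$ is the falling factorial $(y_j)^{\downarrow}_{m-i}$, a monic polynomial in $y_j$ of degree $m-i$; as $i$ runs over $1,\dots,m$ these degrees run over the \emph{complete} range $\{0,1,\dots,m-1\}$. Hence the unitriangular (Stirling) change of basis from falling factorials to monomials acts by elementary row operations only, leaving the determinant unchanged, so that $\det_{i,j}\big[(y_j)^{\downarrow}_{m-i}\big]=\det_{i,j}\big[y_j^{\,m-i}\big]$. The latter is a Vandermonde determinant with the row exponents in reverse order: reversing the rows contributes $(-1)^{m(m-1)/2}$, while the substitution $y_j-y_i=-(x_j-x_i)$ contributes a second $(-1)^{m(m-1)/2}$; the two signs cancel and the result is $+\Delta(x^{(m)})$, as required.

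The argument is essentially routine multilinear algebra, so I expect no genuine obstacle. The only two points demanding care are the validity of the falling-factorial rewriting on the entries with $x_j>n-m+i$ (handled by the vanishing-factor remark) and the bookkeeping of the two sign factors, which fortuitously cancel to leave the positive Vandermonde $\Delta(x^{(m)})$ expected for a strictly increasing configuration.
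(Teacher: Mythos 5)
Your proof is correct, and its skeleton is the same as the paper's: both reduce the corollary to Proposition~\ref{prop:int from the right} plus the single determinant identity $\det_{1\le i,j\le m}\bigl[\binom{n-m+i}{x_j}\bigr]=\Delta(x^{(m)})\prod_{i=1}^m\frac{(n+1-i)!}{x_i!\,(n-x_i)!}$, after extracting the factor $\lambda^{\sum_j x_j}$ column by column. The difference is what happens to that identity: the paper does not prove it at all, merely pointing to techniques in Krattenthaler's determinant calculus and stating that the computation ``is not very instructive and will not be detailed here'', whereas you carry the evaluation out in full. Your chain of reductions --- pulling $(n-m+i)!$ out of row $i$ and $1/x_j!$, then $1/(n-x_j)!$, out of column $j$; rewriting the surviving entries as falling factorials $(n-x_j)(n-x_j-1)\cdots(n-x_j-m+i+1)$; converting these to the monomials $(n-x_j)^{m-i}$ by the unitriangular Stirling change of basis, which is a sequence of elementary row operations; and recognizing a row-reversed Vandermonde --- is valid at every step, and the two delicate points you flag are indeed the only ones: the falling-factorial rewriting survives the entries with $x_j>n-m+i$ because both sides then vanish (the consecutive integer factors sweep through $0$), and the sign $(-1)^{m(m-1)/2}$ from reversing the rows cancels against the sign from $(n-x_j)-(n-x_i)=-(x_j-x_i)$, leaving $+\Delta(x^{(m)})$. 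The trade-off is clear: the paper's treatment is shorter and leans on a standard reference, while yours makes the corollary self-contained at the cost of a page of routine, but entirely correct, multilinear algebra.
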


To see that this distribution is indeed equal to the one in Proposition~\ref{prop:int from the right}, we only have to show that
\begin{equation}
    \begin{vmatrix}
        \binom{n+1-m}{x^{(m)}_1}  & \cdots & \binom{n+1-m}{x^{(m)}_m}
        \\[0.1cm]
        \vdots  & \ddots & \vdots
        \\[0.1cm]
        \binom{n}{x^{(m)}_1}  & \cdots & \binom{n}{x^{(m)}_m}
    \end{vmatrix}
    =
    \prod_{i=1}^m\frac{(n+1-i)!}{x^{(m)}_{i}!(n-x^{(m)}_{i})!} \prod_{1\le i<j\le m} (x^{(m)}_{j}-x^{(m)}_{i}),
\end{equation}
since the factor $\lambda^{\sum_i x_i^{(m)}}$ accounts for the dependence on $\lambda$ of the $L_{i,j}$'s. To compute this determinant, one can use techniques from \cite{krattenthaler2001advanced}, for instance. However, this computation is not very instructive and will not be detailed here. This expression also makes the link with the result in the uniform case as obtained in \cite{Forrester}.

From this distribution, we can also obtain the JPDF of the particle positions on a single level by integrating over the first $m-1$ levels. Since all the dependence on these levels is contained solely inside the characteristic function $\chit$ and the adjacency functions $\Omega$, the computation is identical to the proof of Proposition~\ref{prop:JPDF m to n}. Hence, we obtain\footnote{Similarly, it is not difficult to obtain the distribution $\rho(x^{(m)},\dots,x^{(m+k)})$ for $k+1$ consecutive levels.}
\begin{equation}
    \rho(x^{(m)}) =
    \frac{ \lambda^{\sum_i x_i^{(m)} - \frac{m(m-1)}{2}}} {(1+\lambda)^{m(n+1-m)}}
    \frac{\left[\Delta(x^{(m)})\right]^2}
    {\prod_{i=1}^{m} x_i^{(m)}!(n-x_i^{(m)})!} 
    \;
    \prod_{i=0}^{m-1} \frac{(n-i)!}{i!}.
    \label{eq:one level pdf}
\end{equation}
This last distribution is particularly interesting since it describes how the positions of particles evolve when we look deeper inside the Aztec diamond. As no approximation has been made so far, it contains all the contributions coming from the edges of the domain and the finite size effects. For instance, it was shown in \cite{Forrester} that for $\lambda=1$, it is possible to recover the shape of the arctic curve from it. In this case, it is well known that this curve is a circle \cite{arctic_circle_theorem}. For $\lambda\neq 1$, the curve is an ellipse such that horizontal or vertical dominos are favoured according to $\lambda<1$ or $\lambda>1$ respectively. We could probably recover the elliptic arctic curve using the expression above, however, it is an already well-documented subject, and other techniques such as the tangent method \cite{colomo2016arctic} are more appropriate for that.

The distribution \eqref{eq:one level pdf} is also known to correspond to the probability measure of the Krawtchouk ensemble \cite{Cohn_1996,johansson2002non},
\begin{equation}
    m! \; \mathbb{P}^{Kr}_{m,n,p}(x^{(m)}) = 
    \frac{[\Delta(x^{(m)})]^2}{(n!)^m (pq)^{\frac{m(m-1)}{2}}} 
    \prod_{i=1}^m \binom{n}{x_i^{(m)}} p^{x_i^{(m)}} q^{n-x_i^{(m)}}
    \prod_{i=0}^{m-1} \frac{(n-i)!}{i!},
\end{equation}
for $p=\frac{\lambda}{1+\lambda}$ and $q=1-p$, and has been studied in detail by Johansson \cite{johansson2002non,johansson2001discrete,johansson2005arctic,johansson2006random}. However, his formalism and approach to the problem differ from ours. More precisely, he studied the statistics of a certain class of "zig-zag paths", which can be shown to correspond exactly to the particle system we are considering, the difference is that he does not compute the distribution $\rho(x^{(1)},\dots,x^{(m)})$ (which is our main interest) but directly $\rho(x^{(m)})$ via the Lindström-Gessel-Viennot method.

\subsection{Convergence to the GUE-corners process}
\label{sec:conv to GUE biased}

In this section, our main purpose is to compute the scaling limit of the distribution for the $x_i^{(\ell)}$'s, and to show that they behave like the eigenvalues of GUE matrices. This correspondence is an impressive result in the study of Aztec diamonds and other tiling models \cite{adler2015tacnode,aggarwal2022gaussian}, obtained first at the level of correlation functions, by Johansson and Nordenstam \cite{johansson2006eigenvalues}, and later at the level of the probability distribution itself by Fleming and Forrester \cite{Forrester}.

The GUE-corners process (sometimes also called under its original name: GUE minor) is a well-known \textit{determinantal point process} in random matrix theory \cite{johansson2006eigenvalues,gorin2013corners}. GUE refers to the Gaussian Unitary Ensemble which is constituted by random complex $m\times m$ hermitian matrices with entries following Gaussian laws in both their real and imaginary part, while the GUE-corners process more precisely describes the eigenvalues of matrices from GUE as well as all the eigenvalues $z_i^{(\ell)}$ of the size $\ell$ principal submatrix located in the upper-left corner ($1 \le i \le \ell$ and $1 \le \ell \le m$). As a property of hermitian matrices, these eigenvalues between two consecutive submatrices interlace themselves. In the present work, we are mainly interested in probability distribution of this process, 
\begin{equation}
    \rho_{\text{GUE-corners}}(z^{(1)},\dots,z^{(m)})
    =
    \frac{1}{(2\pi)^{m/2}} \Delta(z^{(m)}) \prod_{i=1}^{m} {\rm e}^{-\frac{1}{2}(z_i^{(m)})^2}
    \; \chit(z^{(1)}\prec\dots\prec z^{(m)}).
\end{equation}

Over the last 20 years, the GUE-corners process has exhibited many universal features and has been observed to be the correct limiting process in many models showing an interlaced particle system. In some cases, the initial (discrete) particle process can be shown to be already \textit{determinantal} (as in the present case), which is very convenient to perform explicit calculations at finite size. In some others cases however (as for the \textit{Alternating Sign Matrices} uniformly weighted), the initial process is not determinantal, although it still converges to GUE-corners in a suitable limit \cite{gorin2014alternating}.

The main goal of this work is to emphasize this universality aspect by showing that this limiting process does not depend directly on the probability measure that we consider, but only the rescaling on the particle system does. Meaning that for some specific classes of probability measures on Aztec diamonds, in the scaling limit, GUE-corners always gives the correct description in a certain neighbourhood of the contact point between the edge and the arctic curves.

\begin{proposition}
    If we define $z_i^{(\ell)} = \frac{x_i^{(\ell)} - \mu}{\sigma}$ to be the rescaling of the $x$-particle system, with $\mu=\frac{n\lambda}{1+\lambda}$ and $\sigma^2=\frac{n\lambda}{(1+\lambda)^2}$, then the JPDF $\rho(z^{(1)},\dots,z^{(m)})$ converges in distribution to $\rho_{\rm{GUE-corners}}(z^{(1)},\dots,z^{(m)})$ in the limit where $n\to\infty$ and $m$ remains finite.
\end{proposition}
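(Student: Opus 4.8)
The plan is to start from the exact finite-$n$ density \eqref{eq:pdf 1 to m, x} and pass to the rescaled variables $x_i^{(\ell)}=\mu+\sigma z_i^{(\ell)}$. Since the $x$-process lives on a lattice of spacing $1$, each of the $m(m+1)/2$ particles on levels $1$ through $m$ contributes a factor $\sigma$ when converting the probability mass function into a density for the $z$-variables; I would therefore study $\sigma^{m(m+1)/2}\,\rho(x^{(1)},\dots,x^{(m)})$, show it converges pointwise to $\rho_{\rm GUE\text{-}corners}$, and only then upgrade to weak convergence.

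The main analytic input is the de Moivre--Laplace (local central limit) theorem. First I would recognize that
\begin{equation}
\frac{n!\,\lambda^{x_i^{(m)}}}{x_i^{(m)}!\,(n-x_i^{(m)})!\,(1+\lambda)^{n}}
=\binom{n}{x_i^{(m)}}p^{x_i^{(m)}}q^{\,n-x_i^{(m)}},
\qquad p=\frac{\lambda}{1+\lambda},\ q=\frac{1}{1+\lambda},
\end{equation}
is the binomial mass function with mean $np=\mu$ and variance $npq=\sigma^2$, exactly the chosen centering and scaling. By de Moivre--Laplace, uniformly for $z_i^{(m)}$ in compact sets, it equals $\tfrac{1}{\sqrt{2\pi}\,\sigma}\,\e^{-(z_i^{(m)})^2/2}(1+o(1))$, producing the $m$ Gaussian factors and the $(2\pi)^{-m/2}$ normalization. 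The Vandermonde scales as $\Delta(x^{(m)})=\sigma^{m(m-1)/2}\Delta(z^{(m)})$, the factorial ratio behaves as $\prod_{i=1}^{m}\tfrac{(n+1-i)!}{n!}\sim n^{-m(m-1)/2}$, and each level-$m$ particle carries an extra factor $(1+\lambda)^n$. Collecting these with $A_{n,m}$ and the change-of-variables factor $\sigma^{m(m+1)/2}$, and using $n=\sigma^2(1+\lambda)^2/\lambda$ to trade powers of $n$ for powers of $\sigma$, I expect every power of $\sigma$, of $\lambda$, and of $(1+\lambda)$ to cancel, leaving precisely $\tfrac{1}{(2\pi)^{m/2}}\Delta(z^{(m)})\prod_{i=1}^{m}\e^{-(z_i^{(m)})^2/2}$. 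This bookkeeping is routine but must be carried out attentively.

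It remains to treat the lower levels. The target density has no Gaussian or Vandermonde factor on levels $1,\dots,m-1$: all of its dependence there is carried by the interlacing indicator $\chit(z^{(1)}\prec\dots\prec z^{(m)})$. In the discrete density this dependence sits in $\chit(x^{(1)}\prec\dots\prec x^{(m)})$ and in the adjacency weight $\Omega(x^{(1)},\dots,x^{(m-1)})$. The indicator converges to its continuum counterpart, so the crux is to show $\Omega\to1$. Since $\alpha_{\text{sup}}\le\alpha$, one has $\Omega=1$ at any configuration with no adjacency, while an adjacency forces an exact coincidence $x_i^{(\ell)}=x_i^{(\ell+1)}$ or $x_i^{(\ell)}=x_{i+1}^{(\ell+1)}$, i.e.\ it lives on the boundary of the interlacing region. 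As the lattice spacing $1/\sigma\to0$, this boundary becomes Lebesgue-null, so $\Omega\to1$ at every continuity point of the limit and the coincidence configurations carry vanishing weight.

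The delicate point, and the one I would treat most carefully, is precisely this adjacency factor: $\Omega$ is not uniformly close to $1$, since it takes the $O(1)$ values $\tfrac{\lambda}{1+\lambda}$ or $\tfrac{1}{1+\lambda}$ on coincidence sets, so the argument must genuinely rely on those sets becoming negligible in the scaling limit rather than on any pointwise smallness. Together with the uniformity required in de Moivre--Laplace over the bulk region $x=\mu+O(\sigma)$, this is the main obstacle. Finally, to turn pointwise convergence of the rescaled densities into the asserted convergence in distribution, I would invoke Scheff\'e's lemma: both $\sigma^{m(m+1)/2}\rho$ and $\rho_{\rm GUE\text{-}corners}$ are probability densities, so pointwise convergence forces $L^1$ convergence and hence weak convergence.
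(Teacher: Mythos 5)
Your proposal is correct and follows essentially the same route as the paper's proof: rescale the exact finite-$n$ density, discard the adjacency factor $\Omega$ because coincidence configurations become negligible in the scaling limit, and extract the Gaussian factors from the binomial structure (your de Moivre--Laplace step is the paper's Stirling computation in packaged form), with the same cancellation of all powers of $n$, $\lambda$ and $(1+\lambda)$. The only addition is your explicit Scheff\'e-lemma step to upgrade local convergence of the (step-function) densities to convergence in distribution, which the paper leaves implicit.
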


What may seem surprising is the simplicity of the rescaling and the fact there is no dependence on the level of the particles. In Section~\ref{sec:conv to GUE 2P}, we give more details concerning this rescaling and get a better understanding on this apparent simplicity. 

A remark about the limit that we consider: keeping $m$ finite while $n$ is growing is equivalent to zoom in a microscopic region around the contact point between the arctic curve and the edge of the domain. We emphasize the fact that this limit disregards the global structure of Aztec diamonds and tells nothing about the shape of the arctic curve for instance. Let us now present the proof of the result.

\begin{proof}
    The Jacobian determinant of the transformation $x_i^{(\ell)}\mapsto z_i^{(\ell)}= \frac{x_i^{(\ell)}-\mu}{\sigma}$ is equal to $\sigma^{m(m+1)/2}$ and the Vandermonde determinant transforms like $\Delta(x^{(m)})=\sigma^{m(m-1)/2}\Delta(z^{(m)})$. Thus, we find the probability distribution of the $z$-particles to be
    \begin{equation}
        \rho(z^{(1)},\dots,z^{(m)}) 
        =
        \sigma^{m^2}
        \cdot
        \frac{\lambda^{\sigma\sum_i z_i^{(m)} + m \mu - \frac{m(m-1)}{2}} } {(1+\lambda)^{m(n-m+1)}} \Delta(z^{(m)})
        \prod_{i=1}^{m}\frac{ (n+1-i)!}
        {(z_i^{(m)}\sigma+\mu)!(n-z_i^{(m)}\sigma-\mu)!} \Omega \;\chit.
    \end{equation}
    Since we keep the number of particles finite in this approximation while the number of possible positions grows with $n$, it is expected that the probability to find adjacent particles goes to zero as $n$ increases. It implies that we can neglect the exponent $\alpha$ (and thus the effect of $\Omega)$ at the leading order in $n$. Finally, we use the Stirling approximation,
    \begin{equation}
        (an+b\sqrt{n}+c)! = \sqrt{2\pi}
        (an)^{an+b\sqrt{n}+c+\frac{1}{2}}
        \; e^{\frac{b^2}{2a}-an}
        \big(1+\mathcal{O}(n^{-1/2})\big),
    \end{equation}
    to extract the dominant order of the distribution, and after some computations, we find
    \begin{equation}
        \rho(z^{(1)},\dots,z^{(m)}) =
        \frac{1}{(2\pi)^{m/2}} \Delta(z^{(m)}) \prod_{i=1}^{m} {\rm e}^{-\frac{1}{2}(z_i^{(m)})^2}
        \; \chit(z^{(1)}\prec\dots\prec z^{(m)}) 
        \; \big(1+\mathcal{O}(n^{-1/2})\big),
    \end{equation}
    which ensures the convergence in distribution to the GUE-corners process in the limit $n\to\infty$.
\end{proof}

\section{Two-periodic Aztec diamonds}
\label{sec:Two-periodic Aztec diamonds}

In this section, we repeat the same procedure to obtain the distribution of particles for the two-periodic measure on Aztec diamonds. This measure can be defined on the dual graph as follows: we attribute weights $a$ and $b$ on the faces of the graph as represented in Figure~\ref{fig:two-periodic config}. The weight is then assigned to dimers according to whether they are adjacent to a face $a$ or $b$. Following this rule, the weight of a configuration $c$ is given by $w(c)=\prod_{{\rm faces}\; x_{i,j}} x_{i,j}^{N_{i,j}}$, where $N_{i,j}=0,1,2$ is the number of dimers adjacent to the face $x_{i,j}$ in the configuration. We take the convention that the face in the bottom-left corner of the graph is always $a$.

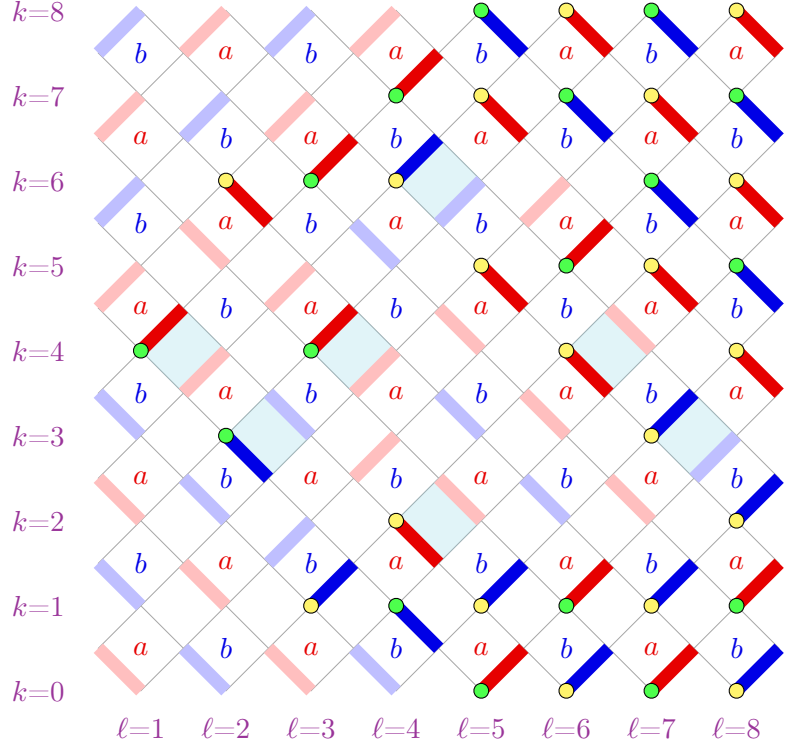
\begin{figure}[ht!]
    \centering
    \begin{tikzpicture}[scale=0.8,rotate=45]
        \def\n{8} 
        
        \foreach \x in {1,...,\n}
        \draw[gray!70!white, very thin] (\x-1,\n-\x) rectangle +(2*\n-2*\x+1,2*\x -1)
        ;
        
        \filldraw[SkyBlue!70!white, opacity=0.25]
        (4,11) rectangle +(1,-1)
        (4,9) rectangle +(1,-1)
        (6,9) rectangle +(1,-1)
        (5,6) rectangle +(1,-1)
        (9,10) rectangle +(1,-1)
        (9,6) rectangle +(1,-1)
        (9,4) rectangle +(1,-1);

        \begin{scope}[violet,dashed,line width=0.1pt, opacity=0.8]
            \foreach \x in {1,...,\n}
            \draw (\x-1.2,\n-\x-0.2) node[anchor=north] {$\ell$=\x} 
            ;
        \end{scope}

        \begin{scope}[violet,dashed,line width=0.1pt, opacity=0.8]
            \foreach \x in {0,...,\n}
            \draw (\x-0.5,\n+\x-0.5) node[anchor=east]{$k$=\x \;\;\;} 
            ;
        \end{scope}

        \foreach \x in {2,4,...,\n}{
            \foreach \y in {2,4,...,\n}{
                \filldraw[draw=white,fill=white]
                (-0.5+\n-\x+\y,0.5+\x+\y-4) circle(7pt)
                (-0.5+\n-\x+\y,2.5+\x+\y-4) circle(7pt);
                \draw[draw=white,fill=white]
                (-1.5+\n-\x+\y,1.5+\x+\y-4) circle(7pt)
                ( 0.5+\n-\x+\y,1.5+\x+\y-4) circle(7pt);
        }}
        
        \foreach \x in {2,4,...,\n}{
            \foreach \y in {2,4,...,\n}{
                \draw[blue!90!black]
                (-0.5+\n-\x+\y,0.5+\x+\y-4) node {$b$}
                (-0.5+\n-\x+\y,2.5+\x+\y-4) node {$b$};
                \draw[red!90!black]
                (-1.5+\n-\x+\y,1.5+\x+\y-4) node {$a$}
                ( 0.5+\n-\x+\y,1.5+\x+\y-4) node {$a$};
        }}
        
        \filldraw[red!90!black,line width=5pt]
        (4,11)--+(1,0)
        (7,12)--+(0,-1)
        (6,9)--+(1,0)
        (8,11)--+(1,0)
        (5,6)--+(0,-1)
        (10,11)--+(1,0)
        (4,3)--+(1,0)
        (9,8)--+(0,-1)
        (11,10)--+(0,-1)
        (6,3)--+(1,0)
        (9,6)--+(0,-1)
        (10,7)--+(1,0)
        (13,10)--+(0,-1)
        (6,1)--+(1,0)
        (11,6)--+(0,-1)
        (13,8)--+(0,-1)
        (8,1)--+(1,0)
        (11,4)--+(0,-1)
        (13,6)--+(0,-1)
        (15,8)--+(0,-1)
        ;
        
        \filldraw[blue!90!black,line width=5pt]
        (4,9)--+(0,-1)
        (3,6)--+(1,0)
        (9,10)--+(1,0)
        (4,5)--+(0,-1)
        (5,4)--+(1,0)
        (12,11)--+(0,-1)
        (5,2)--+(1,0)
        (12,9)--+(0,-1)
        (7,2)--+(1,0)
        (9,4)--+(1,0)
        (12,7)--+(0,-1)
        (14,9)--+(0,-1)
        (7,0)--+(1,0)
        (9,2)--+(1,0)
        (12,5)--+(0,-1)
        (14,7)--+(0,-1)
        ;
        
        \filldraw[red!25!white,line width=5pt]
        (0,7)--+(0,1)
        (2,9)--+(0,1)
        (5,12)--+(-1,0)
        (7,14)--+(-1,0)
        (2,7)--+(0,1)
        (6,11)--+(0,1)
        (5,10)--+(-1,0)
        (9,14)--+(-1,0)
        (2,5)--+(0,1)
        (7,10)--+(-1,0)
        (9,12)--+(-1,0)
        (6,7)--+(-1,0)
        (7,8)--+(-1,0)
        (11,12)--+(-1,0)
        (6,5)--+(0,1)
        (8,7)--+(0,1)
        (8,5)--+(0,1)
        (11,8)--+(-1,0)
        (8,3)--+(0,1)
        (10,5)--+(0,1)
        ;
        
        \filldraw[blue!25!white,line width=5pt]
        (1,8)--+(0,1)
        (3,10)--+(0,1)
        (6,13)--+(-1,0)
        (8,15)--+(-1,0)
        (1,6)--+(0,1)
        (3,8)--+(0,1)
        (8,13)--+(-1,0)
        (5,8)--+(0,1)
        (4,7)--+(-1,0)
        (10,13)--+(-1,0)
        (3,4)--+(0,1)
        (8,9)--+(0,1)
        (7,6)--+(0,1)
        (10,9)--+(-1,0)
        (7,4)--+(0,1)
        (10,3)--+(-1,0)
        ;

        \filldraw[fill=yellow!70!white, draw=black]
        (7,12) circle(3.5pt)
        (3,6) circle(3.5pt)
        (9,10) circle(3.5pt)
        (5,6) circle(3.5pt)
        (5,4) circle(3.5pt)
        (9,8) circle(3.5pt)
        (11,10) circle(3.5pt)
        (5,2) circle(3.5pt)
        (9,6) circle(3.5pt)
        (13,10) circle(3.5pt)
        (7,2) circle(3.5pt)
        (9,4) circle(3.5pt)
        (11,6) circle(3.5pt)
        (13,8) circle(3.5pt)
        (7,0) circle(3.5pt)
        (9,2) circle(3.5pt)
        (11,4) circle(3.5pt)
        (13,6) circle(3.5pt)
        (15,8) circle(3.5pt)
        ;
        
        \filldraw[fill=green!70!white, draw=black]
        (4,11) circle(3.5pt)
        (4,9) circle(3.5pt)
        (6,9) circle(3.5pt)
        (8,11) circle(3.5pt)
        (10,11) circle(3.5pt)
        (4,5) circle(3.5pt)
        (4,3) circle(3.5pt)
        (12,11) circle(3.5pt)
        (6,3) circle(3.5pt)
        (10,7) circle(3.5pt)
        (12,9) circle(3.5pt)
        (6,1) circle(3.5pt)
        (12,7) circle(3.5pt)
        (14,9) circle(3.5pt)
        (8,1) circle(3.5pt)
        (12,5) circle(3.5pt)
        (14,7) circle(3.5pt) 
        ;
        
    \end{tikzpicture}
    \caption{Same configuration as displayed in Section~\ref{sec:PDF} shown on the dual lattice with the two-periodic weighting. We kept the light and dark shade with the same convention as previously and dimers are coloured according to the weight, $a$ or $b$, they carry. Sky blue squares stand for the non-adjacent particles and the corresponding pair of dimers that can be flipped without modifying the particle configuration. Even particles are coloured in yellow and odd ones in green.}
    \label{fig:two-periodic config}
\end{figure}

This probability measure on Aztec diamonds has been thoroughly studied in the literature recently \cite{di2014octahedron,chhita2016domino,duits2020two,beffara2022local,ruelle2022double,borodin2023biased}, partly due to the interesting phenomenology it brings to the model. Indeed, the disordered region in this case is separated into two distinct zones with different behaviours. The external one is the usual rough/liquid region presenting the same characteristics as in the uniform case. While, in the very center, the disordered tiles are more likely to group themselves by pairs of parallel dominos around faces $a$ or $b$ depending on the favoured weight, see Figure~\ref{fig:two-periodic ADs and arctic phenomenon}. This region is usually called smooth or gaseous.

\begin{figure}[ht!]
    \centering
    \includegraphics[width=0.3\linewidth]{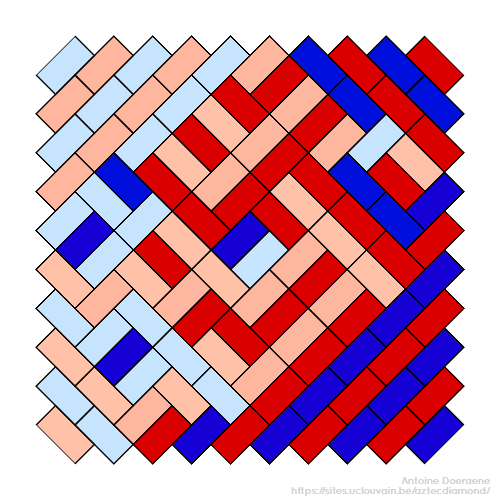}
    \hspace{0.5cm}
    \includegraphics[width=0.3\linewidth]{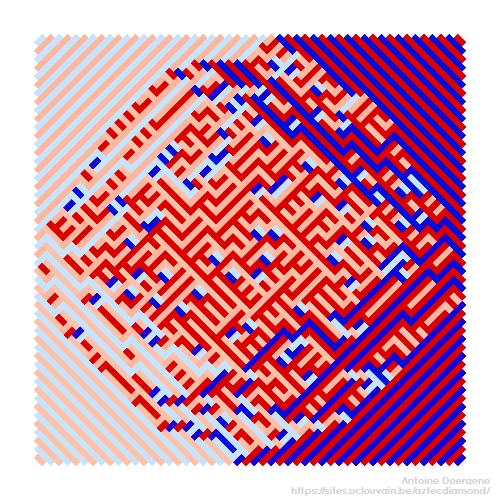}
    \hspace{0.5cm}
    \includegraphics[width=0.3\linewidth]{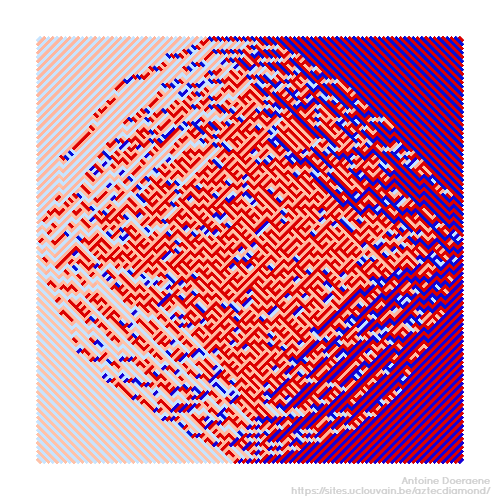}
    
    \includegraphics[width=0.3\linewidth]{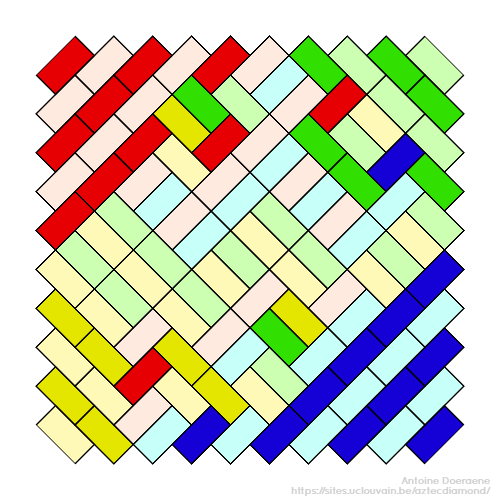}
    \hspace{0.5cm}
    \includegraphics[width=0.3\linewidth]{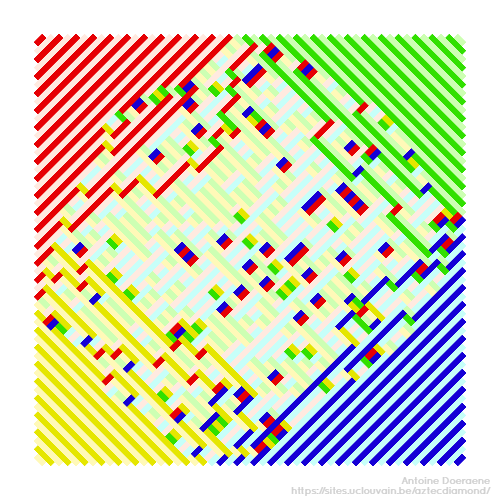}
    \hspace{0.5cm}
    \includegraphics[width=0.3\linewidth]{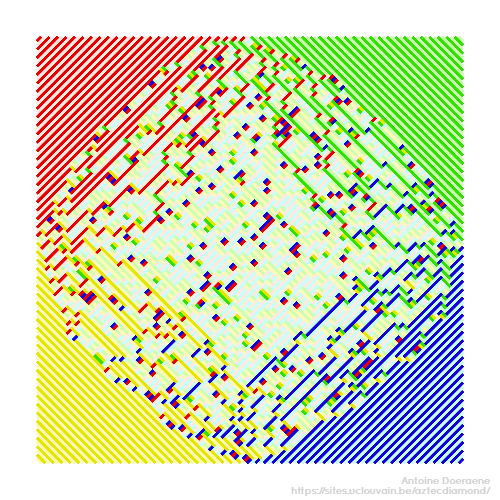}
    \caption{Two-periodic AD configurations with $a/b=2$ and $n=10,\,50,\,100$ from left to right. The first row uses the same colours as the previous figure while in the second eight different colours are used to highlight the gaseous region in the center. These eight colours are chosen as follows: first, we distinguish the four usual colours as in Figure~\ref{fig:AD definition}. After that, we lighten the colours carried by dominos with weight $a$, and darken the ones carried by dominos with weight $b$.}
    \label{fig:two-periodic ADs and arctic phenomenon}
\end{figure}

We define the particle system in the same way as before. The main difference with the previous case is that we have now to account for the parity of the particles. For this purpose, we define the parity of a position $(\ell,x_i^{(\ell)})$ on the grid by 
\begin{equation}
    \varepsilon_\ell(x_i^{(\ell)}) 
    = (-1)^{\ell+x_i^{(\ell)}},
\end{equation}
and assign a particle the parity of its position. It follows that the particles lying at the top of an $a$-face are even and those at the top of a $b$-face are odd.
    
According to this definition, we further split the adjacency functions into\footnote{In practice, these adjacency functions need both particle coordinates $(\ell,x_i^{(\ell)})$ as arguments to be well-defined, but we decided to omit the level $\ell$ since it will be specified as upperscript of $x_i^{(\ell)}$ most of the time.}:
\begin{align}
    \alpha_{\text{inf}}^\text{e}(x_i^{(\ell)}) &= 
    \delta_{x_i^{(\ell)},x_i^{(\ell+1)}} \delta_{\ell+x_i^{(\ell)},\,\text{even}} ,
    &
    \alpha_{\text{inf}}^\text{o}(x_i^{(\ell)}) &= 
    \delta_{x_i^{(\ell)},x_i^{(\ell+1)}} \delta_{\ell+x_i^{(\ell)},\,\text{odd}} ,
    \nonumber\\[0.1cm]
    \alpha_{\text{sup}}^\text{e}(x_i^{(\ell)}) &= 
    \delta_{x_i^{(\ell)},x_{i+1}^{(\ell+1)}} \delta_{\ell+x_i^{(\ell)},\,\text{even}} ,
    &
    \alpha_{\text{sup}}^\text{o}(x_i^{(\ell)}) &= 
    \delta_{x_i^{(\ell)},x_{i+1}^{(\ell+1)}} \delta_{\ell+x_i^{(\ell)},\,\text{odd}}.
\label{eq:def of adjacency 2P}
\end{align}
We also keep the convention that for $\alpha$ functions of several arguments, we take the sum over the contributions of each particle similarly to \eqref{eq:adjacency def}.

\subsection{Probability density function}
\label{sec:PDF 2P}

Referring to the equation \eqref{eq:pdf from AD to particles}, we need to determine the number of $a$- and $b$-dimers in terms of the particle positions. As claimed in Section~\ref{sec:PDF}, we provide an explanation for the interlacement as well as the equivalence in the number of dark shaded dimers of each weight, and the adjacency classes of their corresponding particles. In Figure~\ref{fig:Construction by level}, we construct iteratively a configuration of order $n=4$ from level $1$ to $n$. In the first of these pictures, we see there are $n$ vertices to cover on the left edge of the graph, and $n+1$ in the column right after ($\ell=1$). Hence, one vertex of this column will not be covered by a dimer touching the left edge, and this is where the first particle will be located. For the three as yet uncovered vertices lying in between the levels $\ell=1$ and $\ell=2$, we see two groups of vertices separated by the red dimer: a group of one vertex above the red dimer, a group of two vertices below the red dimer. Therefore, the dimers covering these three vertices will leave two uncovered vertices, where the two particles at $\ell=2$ will be located. Continuing this way explains the interlacing property. Moreover, by looking at the first two pictures of Figure~\ref{fig:Construction by level}, one sees that if two particles are sitting next to each other in two neighbouring levels, then we have $x_i^{(\ell)}=x_{i}^{(\ell+1)}$ if the dimer covering the vertex at $x_i^{(\ell)}$ is horizontal, and otherwise $x_i^{(\ell)}=x_{i+1}^{(\ell+1)}$ if that dimer is vertical. These observations repeat in the other pictures. The conclusion is that the weight of a dimer covering an adjacent particle located at $x_i^{(\ell)}$ gets a weight $a$ if $\alpha_{\text{inf}}^\text{o}(x_i^{(\ell)})=1$ or $\alpha_{\text{sup}}^\text{e}(x_i^{(\ell)})=1$, and a weight $b$ if $\alpha_{\text{inf}}^\text{e}(x_i^{(\ell)})=1$ or $\alpha_{\text{sup}}^\text{o}(x_i^{(\ell)})=1$.

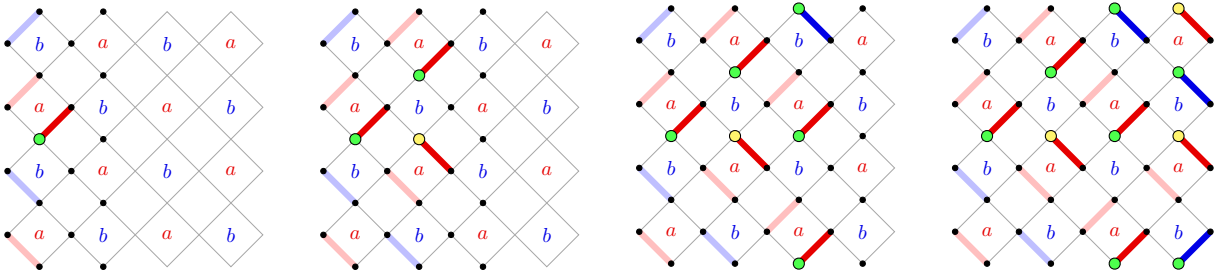
\begin{figure}[ht]
    \centering
    \begin{tikzpicture}[scale=0.6,rotate=45]
        \def\n{4} 
        \foreach \x in {1,...,\n}
        \draw[gray!70!white, very thin] (\x-1,\n-\x) rectangle +(2*\n-2*\x+1,2*\x -1)
        ;
        
        \foreach \x in {2,4,...,\n}{
            \foreach \y in {2,4,...,\n}{
                \draw[blue!90!black]
                (-0.5+\n-\x+\y,0.5+\x+\y-4) node {\scalebox{.7}{$b$}}
                (-0.5+\n-\x+\y,2.5+\x+\y-4) node {\scalebox{.7}{$b$}};
                \draw[red!90!black]
                (-1.5+\n-\x+\y,1.5+\x+\y-4) node {\scalebox{.7}{$a$}}
                ( 0.5+\n-\x+\y,1.5+\x+\y-4) node {\scalebox{.7}{$a$}};
        }}
        \draw[red!90!black,line width=2.4pt]
        (2,5)--+(1,0)
        ;
        
        \draw[blue!90!black,line width=2.4pt]
        ;
        
        \draw[red!25!white,line width=2.4pt]
        (0,4)--+(0,-1)
        (2,6)--+(1,0)
        ;
        
        \draw[blue!25!white,line width=2.4pt]
        (1,5)--+(0,-1)
        (3,7)--+(1,0)
        ;

        \foreach \x in {1,...,2}
        \foreach \y in {0,...,4}
        \filldraw[fill=black,draw=black]
        (\x-1+\y,\n-\x+\y) circle (1.7pt);
        \foreach \x in {1,...,2}
        \foreach \y in {0,...,3}
        \filldraw[fill=black,draw=black]
        (\x-1+\y,\n+1-\x+\y) circle (1.7pt);
        
        \filldraw[fill=yellow!70!white, draw=black]
        ;
        
        \filldraw[fill=green!70!white, draw=black]
        (2,5) circle(3.5pt)
        ;
        
    \end{tikzpicture}
    \hspace{0.5cm}
    \begin{tikzpicture}[scale=0.6,rotate=45]
        \def\n{4} 
        \foreach \x in {1,...,\n}
        \draw[gray!70!white, very thin] (\x-1,\n-\x) rectangle +(2*\n-2*\x+1,2*\x -1)
        ;
        
        \foreach \x in {2,4,...,\n}{
            \foreach \y in {2,4,...,\n}{
                \draw[blue!90!black]
                (-0.5+\n-\x+\y,0.5+\x+\y-4) node {\scalebox{.7}{$b$}}
                (-0.5+\n-\x+\y,2.5+\x+\y-4) node {\scalebox{.7}{$b$}};
                \draw[red!90!black]
                (-1.5+\n-\x+\y,1.5+\x+\y-4) node {\scalebox{.7}{$a$}}
                ( 0.5+\n-\x+\y,1.5+\x+\y-4) node {\scalebox{.7}{$a$}};
        }}
        \draw[red!90!black,line width=2.4pt]
        (2,5)--+(1,0)
        (4,5)--+(1,0)
        (3,4)--+(0,-1)
        ;
        
        \draw[blue!90!black,line width=2.4pt]
        ;
        
        \draw[red!25!white,line width=2.4pt]
        (0,4)--+(0,-1)
        (2,6)--+(1,0)
        (2,4)--+(0,-1)
        (4,6)--+(1,0)
        ;
        
        \draw[blue!25!white,line width=2.4pt]
        (1,5)--+(0,-1)
        (3,7)--+(1,0)
        (1,3)--+(0,-1)
        ;

        \foreach \x in {1,...,3}
        \foreach \y in {0,...,4}
        \filldraw[fill=black,draw=black]
        (\x-1+\y,\n-\x+\y) circle (1.7pt);
        \foreach \x in {1,...,3}
        \foreach \y in {0,...,3}
        \filldraw[fill=black,draw=black]
        (\x-1+\y,\n+1-\x+\y) circle (1.7pt);
        
        \filldraw[fill=yellow!70!white, draw=black]
        (3,4) circle(3.5pt)
        ;
        
        \filldraw[fill=green!70!white, draw=black]
        (2,5) circle(3.5pt)
        (4,5) circle(3.5pt)
        ;
        
    \end{tikzpicture}
    \hspace{0.5cm}
    \begin{tikzpicture}[scale=0.6,rotate=45]
        \def\n{4} 
        \foreach \x in {1,...,\n}
        \draw[gray!70!white, very thin] (\x-1,\n-\x) rectangle +(2*\n-2*\x+1,2*\x -1)
        ;
        
        \foreach \x in {2,4,...,\n}{
            \foreach \y in {2,4,...,\n}{
                \draw[blue!90!black]
                (-0.5+\n-\x+\y,0.5+\x+\y-4) node {\scalebox{.7}{$b$}}
                (-0.5+\n-\x+\y,2.5+\x+\y-4) node {\scalebox{.7}{$b$}};
                \draw[red!90!black]
                (-1.5+\n-\x+\y,1.5+\x+\y-4) node {\scalebox{.7}{$a$}}
                ( 0.5+\n-\x+\y,1.5+\x+\y-4) node {\scalebox{.7}{$a$}};
        }}
        \draw[red!90!black,line width=2.4pt]
        (2,5)--+(1,0)
        (4,5)--+(1,0)
        (3,4)--+(0,-1)
        (2,1)--+(1,0)
        (4,3)--+(1,0)
        ;
        
        \draw[blue!90!black,line width=2.4pt]
        (6,5)--+(0,-1)
        ;
        
        \draw[red!25!white,line width=2.4pt]
        (0,4)--+(0,-1)
        (2,6)--+(1,0)
        (2,4)--+(0,-1)
        (4,6)--+(1,0)
        (2,2)--+(1,0)
        (4,4)--+(1,0)
        ;
        
        \draw[blue!25!white,line width=2.4pt]
        (1,5)--+(0,-1)
        (3,7)--+(1,0)
        (1,3)--+(0,-1)
        ;

        \foreach \x in {1,...,4}
        \foreach \y in {0,...,4}
        \filldraw[fill=black,draw=black]
        (\x-1+\y,\n-\x+\y) circle (1.7pt);
        \foreach \x in {1,...,4}
        \foreach \y in {0,...,3}
        \filldraw[fill=black,draw=black]
        (\x-1+\y,\n+1-\x+\y) circle (1.7pt);
        
        \filldraw[fill=yellow!70!white, draw=black]
        (3,4) circle(3.5pt)
        ;
        
        \filldraw[fill=green!70!white, draw=black]
        (2,5) circle(3.5pt)
        (4,5) circle(3.5pt)
        (6,5) circle(3.5pt)
        (2,1) circle(3.5pt)
        (4,3) circle(3.5pt)
        ;
        
    \end{tikzpicture}
    \hspace{0.5cm}
    \begin{tikzpicture}[scale=0.6,rotate=45]
        \def\n{4} 
        \foreach \x in {1,...,\n}
        \draw[gray!70!white, very thin] (\x-1,\n-\x) rectangle +(2*\n-2*\x+1,2*\x -1)
        ;
        
        \foreach \x in {2,4,...,\n}{
            \foreach \y in {2,4,...,\n}{
                \draw[blue!90!black]
                (-0.5+\n-\x+\y,0.5+\x+\y-4) node {\scalebox{.7}{$b$}}
                (-0.5+\n-\x+\y,2.5+\x+\y-4) node {\scalebox{.7}{$b$}};
                \draw[red!90!black]
                (-1.5+\n-\x+\y,1.5+\x+\y-4) node {\scalebox{.7}{$a$}}
                ( 0.5+\n-\x+\y,1.5+\x+\y-4) node {\scalebox{.7}{$a$}};
        }}
        \draw[red!90!black,line width=2.4pt]
        (2,5)--+(1,0)
        (4,5)--+(1,0)
        (3,4)--+(0,-1)
        (2,1)--+(1,0)
        (4,3)--+(1,0)
        (5,2)--+(0,-1)
        (7,4)--+(0,-1)
        ;
        
        \draw[blue!90!black,line width=2.4pt]
        (6,5)--+(0,-1)
        (3,0)--+(1,0)
        (6,3)--+(0,-1)
        ;
        
        \draw[red!25!white,line width=2.4pt]
        (0,4)--+(0,-1)
        (2,6)--+(1,0)
        (2,4)--+(0,-1)
        (4,6)--+(1,0)
        (2,2)--+(1,0)
        (4,4)--+(1,0)
        (4,2)--+(0,-1)
        ;
        
        \draw[blue!25!white,line width=2.4pt]
        (1,5)--+(0,-1)
        (3,7)--+(1,0)
        (1,3)--+(0,-1)
        ;

        \foreach \x in {1,...,\n}
        \foreach \y in {1,...,\x}
        \filldraw[fill=black,draw=black]
        (\x-1,\n-\y) circle (1.7pt)
        (\x-1,\n-1+\y) circle (1.7pt)
        (2*\n-\x,\n-\y) circle (1.7pt)
        (2*\n-\x,\n-1+\y) circle (1.7pt);
        
        \filldraw[fill=yellow!70!white, draw=black]
        (3,4) circle(3.5pt)
        (5,2) circle(3.5pt)
        (7,4) circle(3.5pt)
        ;
        
        \filldraw[fill=green!70!white, draw=black]
        (2,5) circle(3.5pt)
        (4,5) circle(3.5pt)
        (6,5) circle(3.5pt)
        (2,1) circle(3.5pt)
        (4,3) circle(3.5pt)
        (3,0) circle(3.5pt)
        (6,3) circle(3.5pt)
        ;
        
    \end{tikzpicture}
    \caption{Construction level by level of a configuration of order $4$.}
    \label{fig:Construction by level}
\end{figure}

In addition to that, the light and dark red dimers come in equal number and with the same weight by the same argument stated page~\pageref{page:argument of flip} (the same goes for the blue dimers), so that the weights of the light dimers can be included into those of the dark ones, by replacing their weights $a$ and $b$ by $a^2$ and $b^2$. Finally, each non-adjacent particle lies around a flippable plaquette and contributes a weight $a^2+b^2$. Altogether, this leads to the probability distribution for the total system
\begin{equation}
    \rho(x^{(1)},\dots,x^{(n+1)}) = 
    \frac{(a^2 + b^2)^{\frac{n(n+1)}{2}}}{Z_n} 
    \cdot 
    \Omega(x^{(1)},\dots,x^{(n)}) \cdot \chit(x^{(1)}\prec\dots\prec x^{(n+1)}),
    \label{eq:tot pdf, 2P}
\end{equation}
where we define
\begin{equation}
    \Omega(x^{(1)},\dots,x^{(n)})
    =
    \frac{(a^2)^{\alpha_{\text{inf}}^\text{o}(x^{(1)},\dots,x^{(n)})+\alpha_{\text{sup}}^\text{e}(x^{(1)},\dots,x^{(n)})} (b^2)^{\alpha_{\text{inf}}^\text{e}(x^{(1)},\dots,x^{(n)})+\alpha_{\text{sup}}^\text{o}(x^{(1)},\dots,x^{(n)})}} {(a^2 +b^2)^{\alpha(x^{(1)},\dots,x^{(n)})}}.
\end{equation}
This result is valid for $n$ even or odd despite the fact that two-periodic ADs for $n$ odd break the symmetry between the weight $a$ and $b$. The total density found here looks very similar to the one for the biased case; only the factor $\Omega$ is different, reflecting the fact that particles are weighted differently when they are adjacent. In some sense, it is as if all the complexity of the probability measure is reduced into this notion of adjacency. We will see in the following that although the change in the distribution seems innocuous at this stage, it has a huge impact on the integration of the distribution and further computations.

\begin{proposition}
    \label{prop:JPDF m to n, 2P}
    The joint probability distribution over the subset of particles on levels $m$ to $n+1$ is given by
    \begin{equation}
        \rho(x^{(m)},\dots,x^{(n+1)}) 
        = 
        \frac{(a^2 +b^2)^{\frac{n(n+1)}{2}}}
        {Z_n} \cdot \tilde\Delta(x^{(m)})\cdot \Omega(x^{(m)},\dots,x^{(n)}) \cdot \chit(x^{(m)}\prec\dots\prec x^{(n+1)}) 
        \label{eq:partially int pdf 2P}
    \end{equation}
    with
    \begin{equation}
        \tilde\Delta(x^{(m)}) =\det_{1\le i,j \le m}\left[\frac{(x_{i}^{(m)})^{j-1}}{(j-1)!} + \varepsilon_m(x_{i}^{(m)}) \sum_{k=0}^{j-2} \frac{(x_{i}^{(m)})^{k}}{k!} \alpha_{j-2-k} \right] .
        \label{eq:Vandermonde 2P}
    \end{equation}
    The coefficients $\alpha_{k}$ for $k\in\mathbb{N}$ are given by
    \begin{gather}
        \alpha_0 = v,
        \qquad
        \alpha_k = \sum_{\ell=2}^{k} \left(v \,\alpha_{\ell-2}\, \alpha_{k-\ell} - \frac{2^\ell B_\ell}{\ell!}  \, \alpha_{k-\ell}\right),
        \\
        f(z) = \sum_{k=0}^\infty \alpha_k z^k =
        \frac{1}{2v z \tanh(z)}\left(1-\sqrt{1-4 v^2 \tanh^2(z)} \right),
        \label{eq:alpha cst}
    \end{gather}
    with $B_\ell$ the Bernoulli numbers and $v=\frac{1}{2}\frac{b^2-a^2}{a^2+b^2} \in[-\frac{1}{2},\frac{1}{2}]$.
\end{proposition}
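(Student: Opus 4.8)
The plan is to mirror the inductive strategy used for Proposition~\ref{prop:JPDF m to n}, now carrying along the parity of each particle. Writing $g_j^{(m)}(x)$ for the $(i,j)$ entry of the determinant \eqref{eq:Vandermonde 2P} viewed as a function of $x=x_i^{(m)}$, the base case $m=1$ is immediate, since $\tilde\Delta(x^{(1)})=g_1^{(1)}(x_1^{(1)})=1$ reduces \eqref{eq:partially int pdf 2P} to \eqref{eq:tot pdf, 2P}. For the induction step I would assume \eqref{eq:partially int pdf 2P} on the range $[m,n+1]$ and sum out level $m$. After isolating the only $x^{(m)}$-dependent adjacency contribution, call it $\omega(x^{(m)})=\prod_i\omega(x_i^{(m)})$, this amounts to the single-level identity
\begin{equation}
    \sum_{x^{(m)}} \tilde\Delta(x^{(m)})\,\omega(x^{(m)})\,\chit(x^{(m)}\prec x^{(m+1)}) = \tilde\Delta(x^{(m+1)}),
\end{equation}
the exact analogue of \eqref{eq:rec int}. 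Here $\omega(x_i^{(m)})$ equals $\tfrac12+v\,\varepsilon_m(x_i^{(m)})$ at an inferior adjacency $x_i^{(m)}=x_i^{(m+1)}$, equals $\tfrac12-v\,\varepsilon_m(x_i^{(m)})$ at a superior adjacency $x_i^{(m)}=x_{i+1}^{(m+1)}$, and equals $1$ otherwise, with $v=\tfrac12(b^2-a^2)/(a^2+b^2)$.

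Since the determinant is multilinear in its rows, I would factor $\omega$ through the rows and pull the summations inside, reducing everything to the single-variable sum $\sum_{x=y_i}^{y_{i+1}} g_j^{(m)}(x)\,\omega(x)$ with $y_i=x_i^{(m+1)}$, $y_{i+1}=x_{i+1}^{(m+1)}$. Splitting off the two endpoints, the bulk sum $\sum_{x=0}^{Y}g_j^{(m)}(x)$ is evaluated by Faulhaber's formula on its polynomial part and by the alternating analogue $\sum_{x=0}^{Y}(-1)^x x^k$ (Euler polynomials) on its parity part; combined with the boundary weights this collapses purely algebraically to a difference
\begin{equation}
    \sum_{x=y_i}^{y_{i+1}} g_j^{(m)}(x)\,\omega(x) = \Phi_j(y_{i+1})-\Phi_j(y_i), \qquad \Phi_j(Y)=\sum_{x=0}^{Y}g_j^{(m)}(x) - \big(\tfrac12+v\,\varepsilon_m(Y)\big)g_j^{(m)}(Y),
\end{equation}
exactly as the biased computation produced $p_j(y_{i+1})-p_j(y_i)$. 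A convenient bookkeeping device is the generating function $\sum_{j\ge1}g_j^{(m)}(x)\,t^{j-1}=e^{xt}\big(1+\varepsilon_m(x)\,t f(t)\big)$, which packages both parts of each entry and makes all the summations transparent.

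With the difference structure in hand, the determinant surgery is identical to \eqref{eq:extension of antisym matrix}: telescoping row operations turn the $m\times m$ determinant of differences into the $(m+1)\times(m+1)$ determinant $\det[\Phi_{j-1}(y_i)]$ with $\Phi_0=1$, and I would then clear lower columns by column operations to reach $\det[g_j^{(m+1)}(y_i)]=\tilde\Delta(x^{(m+1)})$. Crucially, the parity flips between levels, $\varepsilon_{m+1}=-\varepsilon_m$, so the target entries $g_j^{(m+1)}$ carry the opposite parity; the column reduction succeeds only if each $\Phi_{j-1}$ matches $g_j^{(m+1)}$ up to a level-independent, unit-diagonal triangular change of basis, with any genuinely $Y$-independent remainder (generated by the $(-1)^m$ prefactor of the alternating sums) absorbed into the first column, which is $g_1\equiv1$.

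The heart of the argument, and the step I expect to be the main obstacle, is verifying precisely this matching, because it is what forces the coefficients $\alpha_k$. Passing to generating functions, the requirement splits into two identities, one for the coefficient of $e^{Yt}$ and one for that of $\varepsilon_m(Y)e^{Yt}$; the first fixes the triangular multiplier $\gamma(t)=\tfrac{t}{2}\coth(t/2)-v t^2 f(t)$ (with $\gamma(0)=1$), and demanding consistency with the second, after using $\coth(t/2)+\tanh(t/2)=2\coth t$, collapses to the single functional equation $v t^2 f(t)^2 - \tfrac{t}{\tanh t}\,f(t) + v = 0$. This is exactly the quadratic solved by \eqref{eq:alpha cst}, and expanding it in powers of $t$ returns the stated recurrence for the $\alpha_k$, the numbers $\tfrac{2^\ell B_\ell}{\ell!}$ being the Taylor coefficients of $\tfrac{z}{\tanh z}-1$. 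Thus the $\alpha_k$ are not auxiliary constants but are uniquely dictated by closure of the induction; the remaining labour is the careful parity bookkeeping needed to separate the $e^{Yt}$ and $\varepsilon_m(Y)e^{Yt}$ contributions and to confirm that the leftover $Y$-independent terms indeed land in the first column.
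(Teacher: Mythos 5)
Your proposal is correct, and its skeleton coincides with the paper's proof: induction on the level, factorization of $\Omega$ over particles, reduction to the single sums $\sum_{x=y_i}^{y_{i+1}}g_j^{(m)}(x)\,\omega(x)$, which are antisymmetric differences $\Phi_j(y_{i+1})-\Phi_j(y_i)$ (your $\omega$ agrees with the paper's $\Omega_{\text{sup}},\Omega_{\text{inf}}$ after the translation $\varepsilon_{m+1}=-\varepsilon_m$, and your $\Phi_j$ is exactly what the summation formulas of Appendix~\ref{apx:Appendix sum of powers} produce), followed by the enlargement of the determinant as in \eqref{eq:extension of antisym matrix}. Where you genuinely depart from the paper is in the closure step that forces the $\alpha_k$. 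The paper argues column by column: it assumes, by a nested induction, that the first $m$ columns of the enlarged determinant already have the target form \eqref{eq:Column j}, performs the single explicit operation $C_{m+1}\mapsto C_{m+1}-\sum_{j}\bigl(\tfrac{B_{m-j}}{(m-j)!}-v\,\alpha_{m-2-j}\bigr)C_{j+1}$, and reads off the recurrence for the $\alpha_k$ by matching the last column; the quadratic \eqref{eq:quadra eq on generating function} is derived only afterwards, as a consequence of that recurrence. You instead treat all columns simultaneously: with the packaging $\sum_{j\ge1}g_j^{(m)}(x)\,t^{j-1}=e^{xt}\bigl(1+\varepsilon_m(x)\,tf(t)\bigr)$, the entire matching becomes multiplication of the column generating function by the unit series $\gamma(t)=\tfrac t2\coth(t/2)-v t^2 f(t)$, and compatibility of the $e^{Yt}$ and $\varepsilon_m(Y)e^{Yt}$ components reduces, via $\tanh(t/2)+\coth(t/2)=2\coth t$, to $vt^2f(t)^2-t\coth(t)f(t)+v=0$, which is exactly \eqref{eq:quadra eq on generating function}; the recurrence then follows by expansion, since $t\coth t-1=\sum_{\ell\ge2}\tfrac{2^\ell B_\ell}{\ell!}t^\ell$. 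I checked the bookkeeping: summing your $\Phi_j$ against $t^{j-1}$ gives $e^{Yt}\bigl(\tfrac12\coth(t/2)-vtf(t)\bigr)+\varepsilon_m(Y)e^{Yt}\bigl(\tfrac12 tf(t)\tanh(t/2)-v\bigr)$ plus $Y$-independent terms, and the index shift coming from the enlargement supplies precisely the extra factor of $t$ in your $\gamma$, with $\gamma(0)=1$ as needed for a unit-diagonal change of basis. The one point to spell out in a full write-up is the constant bookkeeping: the $Y$-independent remainder equals $1+tC_m(t)$ with $C_m(t)=-\tfrac1{e^t-1}+\tfrac{(-1)^m tf(t)}{1+e^t}$, and the pole of $-\tfrac{1}{e^t-1}$ cancels the leading $1$, so the remainder is $O(t)$; hence the constants sit only in columns $j\ge2$ and can indeed be removed against the all-ones first column, as you assert. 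Your route buys a cleaner, simultaneous treatment of the columns and makes it structurally transparent that the $\alpha_k$ are dictated by solvability of a functional equation; the paper's route keeps every step an elementary determinant manipulation, at the price of the nested column induction and an after-the-fact derivation of $f$.
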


One can remark that only odd powers of $v$ appear in the determinant. As a consequence, the inversion of the weight $v\mapsto -v$ has exactly the same effect on the distribution as the inversion of the parity (even and odd) of the particles, as expected from \eqref{eq:tot pdf, 2P}. We also note that our definition of $\tilde\Delta$ differs slightly from the previous section; for convenience, we absorbed the factorials inside the determinant. 

To lighten the notations, let us set $x_i\equiv x_i^{(m)}\quad (1\le i\le m)$ and $y_i\equiv x_i^{(m+1)}\quad (1\le i\le m+1)$.
In anticipation of the proof, it is useful to rewrite the function $\Omega(x_i^{(m)})$ in terms of the level $m+1$,
\begin{equation}
    \Omega(x_i) 
    =
    \left\{\begin{array}{ll}
        \Omega_{\text{sup}} = \frac{1}{2} +v \varepsilon_{m+1}(y_{i+1}) & \text{if } x_i=y_{i+1} ,
        \\[0.15cm]
        \Omega_{\text{inf}} = \frac{1}{2} -v \varepsilon_{m+1}(y_{i}) & \text{if } x_i=y_{i} ,
        \\[0.15cm]
        1 & \text{otherwise. }
    \end{array}\right.
\end{equation}
We also need the value of a sum of powers with modified boundary terms,
\begin{equation}
    \sum_{x_i=y_i}^{y_{i+1}} x_i^{j-1}\Omega(x_i) 
    =
    \frac{y_{i+1}^j-y_i^j}{j} + v(\varepsilon_{m+1}(y_{i+1}) y_{i+1}^{j-1}-\varepsilon_{m+1}(y_i) y_i^{j-1}) + \sum_{k=2}^{j-1} \frac{B_k}{j} \binom{j}{k}  (y_{i+1}^{j-k}-y_i^{j-k}),
\end{equation}
as well as its alternating version,
\begin{equation}
    \sum_{x_i=y_i}^{y_{i+1}} \varepsilon_{m}(x_i) x_i^{j-1}\Omega(x_i) 
    =
    -v( y_{i+1}^{j-1}- y_i^{j-1}) - \sum_{k=2}^{j} \frac{B_k}{j}\binom{j}{k} (2^k-1) (\varepsilon_{m+1}(y_{i+1}) y_{i+1}^{j-k}- \varepsilon_{m+1}(y_i)y_i^{j-k}),
\end{equation}
for which a proof is given in Appendix~\ref{apx:Appendix sum of powers}. We note that, due to the presence of $\varepsilon$ in the result of these summations, we no longer find polynomials. However, the result is still antisymmetric in the bounds of the summations, which is the important ingredient to extend the size of the determinant. We now present the proof for Proposition~\ref{prop:JPDF m to n, 2P}.

\begin{proof}
    In the following, we proceed by induction. The case $m=1$ is verified since $\tilde\Delta(x^{(1)})=1$. Supposing that the result holds for $m$, we have to show that
    \begin{equation}
        \sum_{x^{(m)}} \tilde\Delta(x^{(m)}) \Omega(x^{(m)})
        \chit(x^{(m)}\prec x^{(m+1)})
        = 
        \tilde\Delta(x^{(m+1)}).
    \end{equation}
    If we pull all functions and sums inside the determinant, it remains to evaluate the quantity
    \begin{equation}
        \det_{1\le i,j\le m}\left[ \sum_{x_i=y_i}^{y_{i+1}} 
        \left(\frac{x_i^{j-1}}{(j-1)!} + \varepsilon_m(x) \sum_{k=0}^{j-2} \frac{x_i^{k}}{k!} \alpha_{j-2-k} \right)\Omega(x_i) \right].
    \end{equation}
    We compute the summations inside the determinant by using the previous formulas and, by the argument of antisymmetry \eqref{eq:extension of antisym matrix}, extend the $m\times m$ determinant into a determinant of size $(m+1)$,
    \begin{equation}\begin{aligned}
            \det_{1\le i,j\le m+1}\Bigg[&
            \frac{y_{i}^{j-1}}{(j-1)!} + v \varepsilon_{m+1}(y_{i}) \frac{y_{i}^{j-2}}{(j-2)!}
            +
            \sum_{k=2}^{j-2} \frac{B_k}{k!} \frac{y_{i}^{j-1-k}}{(j-1-k)!}
            \\[0.2cm]&
            -
            \sum_{k=1}^{j-3} \alpha_{j-3-k}\left(v \frac{y_{i}^k}{k!} + \varepsilon_{m+1}(y_{i})\sum_{\ell=2}^{k+1} \frac{B_\ell}{\ell!}\frac{y_{i}^{k+1-\ell}}{(k+1-\ell)!}(2^\ell-1) \right)\Bigg],
            \label{eq:heavy determinant, 2P}
    \end{aligned}\end{equation}
    where we used the convention that $(-1)!=\infty$. The shift in the $j$ index is induced by the enlargement of the determinant. We want to find some column operations that simplify this determinant. For that, let us first assume that the entries of the $m$ first columns can be written as
    \begin{equation}
        C_{j} =
        \frac{y_i^{j-1}}{(j-1)!} + \varepsilon_{m+1}(y_i) \sum_{k=0}^{j-2} \frac{y_i^{k}}{k!} \alpha_{j-2-k}.
        \label{eq:Column j}
    \end{equation}
    It is certainly true for $j=1$, so that we only have to show that it is also correct for $m+1$. For this, we start by rearranging the summations in \eqref{eq:heavy determinant, 2P}, and express the entries of the last column as
    \begin{equation}
            \frac{y_i^{m}}{m!} + v \varepsilon_{m+1}(y_i) \frac{y_i^{m-1}}{(m-1)!}
            +
            \sum_{k=1}^{m-2} \frac{y_i^{k}}{k!} \left(\frac{B_{m-k}}{(m-k)!} - v \alpha_{m-2-k}\right) 
            -
            \varepsilon_{m+1}(y_i) \sum_{k=0}^{m-3} \frac{y_i^{k}}{k!} \sum_{\ell=2}^{m-1-k}  \frac{B_\ell}{\ell!}(2^\ell-1) \alpha_{m-1-k-\ell}.
            \label{eq:Column m+1}
    \end{equation}
    We apply the column operation
    \begin{equation}
        C_{m+1}\mapsto C_{m+1}'= C_{m+1}-\sum_{j=1}^{m-2} \left( \frac{B_{m-j}}{(m-j)!}-v \alpha_{m-2-j}\right)\,C_{j+1}
    \end{equation}
    in order to cancel the terms which do not contain $\varepsilon$ functions as prefactor (except the very first term). Let us note that, as $B_{j}$ and $\alpha_j$ are equal to zero for $j$ odd, only the columns of the same parity as $C_{m+1}$ really appear in this transformation, although it is implicit in our notations. One can verify that, after this operation, the last column of the determinant becomes
    \begin{equation}\begin{aligned}
            C_{m+1}'=
            \frac{y_i^{m}}{m!} + v \varepsilon_{m+1}(y_i) \frac{y_i^{m-1}}{(m-1)!}
            -
            \varepsilon_{m+1}(y_i) \sum_{k=2}^{m-1} \frac{y_i^{m-1-k}}{(m-1-k)!} \sum_{\ell=2}^{k}  \left(\frac{2^\ell B_\ell}{\ell!}  - v\,\alpha_{\ell-2}\right)\alpha_{k-\ell}.
    \end{aligned}\end{equation}
    Comparing this expression with the last column of $\tilde\Delta(x^{(m+1)})$, we see that we have the equality between the two if and only if the coefficients $\alpha_k$ obey the following recurrence relation, for all $k\ge 1$,
    \begin{equation}
        \alpha_0 = v,
        \qquad
        \alpha_k = \sum_{\ell=2}^{k} \left(v \,\alpha_{\ell-2}\, \alpha_{k-\ell} - \frac{2^\ell B_\ell}{\ell!}  \, \alpha_{k-\ell}\right),
    \end{equation}
    which is indeed the case.
    
    For the sake of completeness, we compute the generating function $f(z)=\sum_{k=0}^\infty \alpha_k z^k$ of these numbers. Multiplying each side of the recurrence relation by $z^k$ and summing over $k$ gives a quadratic equation for $f$,
    \begin{equation}
        v \,z^2 f(z)^2 -
        z\coth(z) f(z) + v = 0.
        \label{eq:quadra eq on generating function}
    \end{equation}
    Among the two solutions, we select the one for which $f(0)=\alpha_0$, namely
    \begin{equation}
        f(z) = 
        \frac{1}{2v z \tanh(z)}\left(1-\sqrt{1-4 v^2 \tanh^2(z)} \right).
    \end{equation}
\end{proof}

One may notice that, despite the complexity of the weighting system, the determinantal structure of the problem still allows us to perform explicit computations. However, the drawback is that the determinant can no longer be evaluated in a closed form.

\medskip\noindent
\underline{\it Remark 1.}\; The similarity between the generating function of Catalan numbers $c(z) = \frac{1}{2z}\left(1-\sqrt{1-4 z} \right) = \sum_{k=0}^\infty C_k z^k$ and the function $f$ allows us to rewrite the $\alpha_k$ in term of the $C_k=\binom{2k}{k}\frac{1}{k+1}$ by using the expansion
\begin{equation}
    f(z) = \frac{v \tanh(z)}{z}\, c(v^2 \tanh^2(z)) =\sum_{k=0}^\infty z^k \sum_{\ell=0}^\infty \frac{C_\ell v^{2\ell+1}}{(k+1)!}\partial_w^{k+1}\tanh^{2\ell+1}(w)\Big|_{w=0}.
\end{equation}

\medskip\noindent
\underline{\it Remark 2.}\; By taking $v=0$ (equivalently $a=b$), one directly recovers the result on the uniform AD since in this case $\alpha_m=0$ for all $m$, and the determinant in the density becomes a Vandermonde. Another interesting limit of these numbers is when $v\to \pm1/2$. This is equivalent to taking extreme values for the weights; $a/b\to 0$ or $a/b\to \infty$. In this case, the $\alpha_m$ numbers also simplify. At the level of the generating function, we have
\begin{equation}
    \frac{1}{2v z \tanh(z)}\left(1-\sqrt{1-4 v^2 \tanh^2(z)} \right)\Big|_{v=\pm1/2}
    =
    \pm\frac{\tanh(z/2)}{z} = \mp g(z).
\end{equation}
The function $g$ is the exponential generating function for the Genocchi numbers (shifted by two units).

\medskip\noindent
\underline{\it Remark 3.}\; Similarly to the biased case, it is possible to find the partition function by evaluating the distribution \eqref{eq:partially int pdf 2P} at $m=n+1$, which gives the equality 
\begin{equation}
    Z_n
    = 
    (a^2 +b^2)^{\frac{n(n+1)}{2}} \tilde\Delta(x_i^{(n+1)}).
\end{equation}
It requires however to compute the determinant $\tilde\Delta(x_i^{(n+1)}=i-1)$ on the last particle level. The computation can be performed recursively where the induction step consists of a triangularization of the first two columns. The remaining determinant after this operation can be shown to be proportional to $\tilde\Delta(x_i^{(n-1)}=i)=\tilde\Delta(x_i^{(n-1)}=i-1; -v)$ which concludes the induction argument. Although seemingly not so complicated, the proof involves tedious computations and is not so much relevant for our purposes. We merely give the end result:
\begin{proposition}
    The determinant $\tilde\Delta_{n+1}=\tilde\Delta(x_i^{(n+1)}=i-1)$ can be computed exactly, and is given by 
    \begin{equation}
        \tilde\Delta_{n+1}
        = 
        (1-4v^2)^{\frac{1}{2}\lfloor\frac{(n+1)^2}{4}\rfloor}\times
        \left\{\begin{array}{cl}
            1 & \text{if } n\ne 1 \mod 4, \\
            \sqrt{\frac{1-2v}{1+2v}} & \text{if } n= 1 \mod 4.
        \end{array}\right.
        \label{eq:det last level}
    \end{equation}
    \label{prop:Delta tilde}
\end{proposition}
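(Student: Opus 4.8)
The plan is to follow the inductive reduction sketched in Remark~3: evaluate the $(n+1)\times(n+1)$ determinant $\tilde\Delta_{n+1}$ at $x_i^{(n+1)}=i-1$ and express it through the same determinant two sizes down. First I would record the explicit entries at $x_i=i-1$: the first column is constant (all ones, since the $j=1$ entry reduces to $(x_i)^0/0!=1$), the second column is $(i-1)+\varepsilon_{n+1}(i-1)\,v$, and in general the $j$-th entry splits as a polynomial part $p_j(x_i)=(x_i)^{j-1}/(j-1)!$ plus a parity-weighted correction $\varepsilon_{n+1}(x_i)\,q_j(x_i)$ with $q_j(x)=\sum_{k=0}^{j-2}\tfrac{x^k}{k!}\alpha_{j-2-k}$. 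The parity $\varepsilon_{n+1}(i-1)=(-1)^{n+i}$ alternates with $i$, and the vanishing of $\alpha_k$ and $B_k$ for odd $k$ makes the even and odd columns decouple, which is what makes a clean reduction possible.

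The heart of the argument is a translation-covariance lemma that I would isolate first, since it is precisely the identity asserted in Remark~3. Both families $\{p_j\}$ and $\{q_j\}$ obey the \emph{same} binomial shift recursion, $p_j(y+1)=\sum_{k\ge0}\tfrac{1}{k!}p_{j-k}(y)$ and likewise $q_j(y+1)=\sum_{k\ge0}\tfrac{1}{k!}q_{j-k}(y)$; the latter is a purely combinatorial consequence of $\tfrac{(y+1)^k}{k!}=\sum_{i}\tfrac{y^i}{i!(k-i)!}$ and does not depend on the precise values of the $\alpha_k$. Because $\varepsilon_m(y+1)=-\varepsilon_m(y)$, shifting every position by one sends the entry $E_j(y;v)=p_j(y)+\varepsilon_m(y)q_j(y;v)$ to $\sum_p \tfrac{1}{(j-p)!}E_p(y;-v)$; applying the inverse (unipotent, determinant-one) shift by columns then yields
\begin{equation}
    \tilde\Delta(\{y_i+1\};v)=\tilde\Delta(\{y_i\};-v),
\end{equation}
which specialises to $\tilde\Delta(x_i^{(n-1)}=i)=\tilde\Delta(x_i^{(n-1)}=i-1;-v)$.

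Next I would carry out the reduction proper. After using the unipotent column operations relating the shifted and unshifted monomial bases to normalise the entries, I would clear the first two columns below the top $2\times2$ block by row operations (the ``triangularization of the first two columns''), exploiting the constancy of column one and the affine-plus-alternating form of column two together with the $\alpha_k$ recurrence to collapse the correction terms. Expanding along these two columns leaves a $2\times2$ pivot factor $c_n(v)$ times an $(n-1)\times(n-1)$ minor, and one identifies the minor with $\tilde\Delta(x_i^{(n-1)}=i)$; combined with the lemma this gives the recursion $\tilde\Delta_{n+1}(v)=c_n(v)\,\tilde\Delta_{n-1}(-v)$. I would pin down $c_n(v)$ explicitly (a product of factors $1\pm2v$ whose shape depends on the parity of $n$), check it against the base determinants $\tilde\Delta_1=1$ and $\tilde\Delta_2=1-2v$, and then iterate. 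Since each step flips $v\mapsto-v$ while contributing factors $1\pm2v$, the accumulated product is a power of the even quantity $1-4v^2$ up to a leftover odd factor whose sign is governed by the number of flips; tracking this parity down to the base case $m=1$ (for $n$ even) or $m=2$ (for $n$ odd) produces the exponent $\tfrac12\lfloor(n+1)^2/4\rfloor$ and the $n\equiv1\bmod4$ dichotomy with residual $\sqrt{(1-2v)/(1+2v)}$.

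I expect the triangularization step to be the main obstacle. Unlike the biased case the entries are no longer polynomials, so clearing the first two columns requires carefully combining the two summation formulas for $\sum_{x} x^{j-1}\Omega(x)$ and its alternating version $\sum_{x}\varepsilon_m(x)x^{j-1}\Omega(x)$ with the defining recurrence of the $\alpha_k$ (equivalently, the quadratic satisfied by their generating function) to show that every correction term telescopes into the next-smaller determinant. Correctly bookkeeping the scalar $c_n(v)$ through this collapse, and verifying that the surviving minor is exactly the level-$(n-1)$ determinant at the shifted positions rather than some deformation of it, is where the tedious but decisive computation lies.
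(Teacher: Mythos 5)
Your proposal follows exactly the route that the paper itself takes: the paper gives no detailed proof of this proposition either, only the sketch in the surrounding remark --- a recursive reduction whose induction step is ``a triangularization of the first two columns'', after which the remaining minor is identified with $\tilde\Delta(x_i^{(n-1)}=i)=\tilde\Delta(x_i^{(n-1)}=i-1;-v)$. Your one genuine addition is a correct proof of that last identity: the binomial shift relations $p_j(y+1)=\sum_{k\ge 0}\tfrac{1}{k!}\,p_{j-k}(y)$ and $q_j(y+1)=\sum_{k\ge 0}\tfrac{1}{k!}\,q_{j-k}(y)$, combined with $\varepsilon_m(y+1)=-\varepsilon_m(y)$ and the oddness of the $\alpha_k$ in $v$, do show that the matrix at shifted positions equals the matrix at $-v$ times a unipotent upper-triangular factor, whence $\tilde\Delta(\{y_i+1\};v)=\tilde\Delta(\{y_i\};-v)$. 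This cleanly establishes a step the paper merely asserts, and your bookkeeping of the base cases ($\tilde\Delta_1=1$, $\tilde\Delta_2=1-2v$) and of how the alternating $v\mapsto -v$ flips accumulate into the exponent $\tfrac12\lfloor (n+1)^2/4\rfloor$ and the $n\equiv 1 \bmod 4$ dichotomy is consistent with the stated formula.

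However, as a proof your proposal stops exactly where the paper stops: the decisive step --- actually performing the triangularization, verifying that the surviving $(n-1)\times(n-1)$ minor is precisely $\tilde\Delta(x_i^{(n-1)}=i)$ rather than a deformation of it, and computing the pivot factor $c_n(v)$ --- is described but never executed, as you acknowledge yourself. Without it, the recursion $\tilde\Delta_{n+1}(v)=c_n(v)\,\tilde\Delta_{n-1}(-v)$ is only a template: one needs the explicit values $c_n(v)=(1-4v^2)^{n/2}$ for $n$ even and $c_n(v)=(1-4v^2)^{(n-1)/2}(1-2v)$ for $n$ odd to arrive at the claimed exponent and residual square-root factor, and producing these values (through the interplay of the two summation formulas with the $\alpha_k$ recurrence) is the entire content of the ``tedious computation'' the paper declines to print. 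So your attempt matches, and on the shift lemma slightly exceeds, the level of rigour of the paper's own treatment, but it is not a complete proof of the proposition: the gap is the explicit reduction step, which you have located correctly but not closed.
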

The partition function follows directly form this proposition,
\begin{equation}
    Z_n
    = 
    (2ab)^{\lfloor\frac{(n+1)^2}{4}\rfloor}
    (a^2+b^2)^{\lfloor\frac{n^2}{4}\rfloor}
    \times
    \left\{\begin{array}{cl}
        1 & \text{if } n\ne 1 \mod 4, \\
        a/b & \text{if } n= 1 \mod 4.
    \end{array}\right.
\end{equation}
This quantity has been computed in \cite{di2014octahedron} by using the relation between Aztec diamonds and the octahedron recurrence and in \cite{ruelle2023lambda} by using $\lambda$-determinants. 

We note that the expression \eqref{eq:det last level} is relatively simple, in the sense that it depends only on a certain power of factors $(1-2v)$ and $(1+2v)$. It means in particular that all the combinatorics hidden in the $\alpha_k$ coefficients seems to vanish when we compute the determinant on the last particle level, which is rather surprising.

\subsection{Distribution on first levels}
\label{sec:Integration from the right, 2P}

Similarly to the biased case, it is possible to guess the distribution for the first $m$ levels and prove it to be correct by integrating from the higher to the lower levels. The strategy of the proof is the same as in Proposition~\ref{prop:int from the right}, only the complexity of the expressions involved increases. The distribution may seem hard to guess in this case, but a part of the intuition on how we initially derived it is given at the end of the section.

\begin{proposition}
    The probability density of the particle process on the first $m$ levels is given by
    \begin{equation}
        \rho(x^{(1)},\dots,x^{(m)}) = A_{n,m} \det_{1\le i,j\le m} \left[ L_{n-m+i,x_j^{(m)}} \big((-1)^{m}v\big)\right]
        \cdot
        \Omega(x^{(1)},\dots,x^{(m-1)})
        \cdot
        \chit(x^{(1)}\prec\dots\prec x^{(m)}),
        \label{eq:pdf 1 to m, x 2P}
    \end{equation}
    where, for $i\ge 0$, the $L_{i,j}$'s are defined by the initial condition $L_{0,j}=\delta_{0,j}$, and the recurrence
    \begin{equation}
        L_{i+1,j}(v) = \big(L_{i,j}(v) +  L_{i,j-1}(-v)\big)\beta_{i}(v),
        \qquad
        \beta_{i}(v)=
        \left\{\begin{array}{cl}
            (1-2v)^{-1} & \text{if } i=0 \mod 4, \\
            (1+2v)^{-1} & \text{if } i=2 \mod 4, \\
            1 & \text{if } i=1 \mod 2,
        \end{array}\right.
        \label{eq:def L_i,j}
    \end{equation}
    while 
    \begin{equation}
        A_{n,m}
        =
        \left( \frac{1-4v^2}{4}\right)^{m(n+1-m)/2} 
        \left\{\begin{array}{cl}
            \sqrt{\frac{1+2v}{1-2v}}    & \text{if } n=1 \pmod{4} 
            \text{ and } m=1 \pmod{2}, \\
            \sqrt{\frac{1-2v}{1+2v}}    & \text{if } n=3 \pmod{4} 
            \text{ and } m=1 \pmod{2},
            \\
            1      & \text{otherwise} .
        \end{array}\right.
        \label{eq:Anm}
    \end{equation}
    \label{prop:int from the right, 2P}
\end{proposition}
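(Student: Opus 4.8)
The plan is to follow, step for step, the inductive scheme of Proposition~\ref{prop:int from the right}, integrating the density from the top level downwards, while tracking the two genuinely new features of the two-periodic setting: the sign flip $v\mapsto-v$ of the argument of the $L_{i,j}$'s between consecutive levels, and the period-four factors $\beta_i(v)$ in the recurrence \eqref{eq:def L_i,j}. First I would settle the base case $m=n+1$. There $x_j^{(n+1)}=j-1$, and since $L_{i,j}(v)=0$ for $j>i$ the matrix $\big[L_{i-1,j-1}((-1)^{n+1}v)\big]$ is lower triangular, so its determinant is $\prod_{i=0}^{n}L_{i,i}((-1)^{n+1}v)$. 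The diagonal obeys $L_{i,i}(v)=\beta_{i-1}(v)\,L_{i-1,i-1}(-v)$, whose iteration produces a product of $(1-4v^2)$-powers together with a residual $\sqrt{(1\pm2v)/(1\mp2v)}$ factor. Combined with $A_{n,n+1}$ from \eqref{eq:Anm}, this must reproduce $1/\tilde\Delta_{n+1}$, hence the total density \eqref{eq:tot pdf, 2P} through Proposition~\ref{prop:Delta tilde}; this is a direct, if bookkeeping-heavy, verification.

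For the induction step, writing $x_j\equiv x_j^{(m)}$ and $y_j\equiv x_j^{(m+1)}$ with $x_0=0$, $x_{m+1}=n$, I would prove the two-periodic analogue of \eqref{eq:induction step, biased},
\begin{equation}\begin{aligned}
&\sum_{x^{(m+1)}} \det_{1\le i,j\le m+1}\Big[L_{n-m+i-1,\,x_j^{(m+1)}}\big((-1)^{m+1}v\big)\Big]\,\Omega(x^{(m)})\,\chit(x^{(m)}\prec x^{(m+1)}) \\
&\qquad= \frac{A_{n,m}}{A_{n,m+1}}\,\det_{1\le i,j\le m}\Big[L_{n-m+i,\,x_j^{(m)}}\big((-1)^{m}v\big)\Big].
\end{aligned}\end{equation}
As before I would reinterpret $\Omega(x^{(m)})$ as a function $\leftArr\Omega$ of the level $m+1$, carrying the parity-weighted values $\Omega_{\text{sup}}=\tfrac12+v\varepsilon_{m+1}(y_{i+1})$ and $\Omega_{\text{inf}}=\tfrac12-v\varepsilon_{m+1}(y_i)$ recorded above, pull the sums $\sum_{y_j=x_{j-1}}^{x_j}$ inside the determinant, and apply the column operation $C_j\mapsto\sum_{k=1}^{j}C_k$. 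The crucial cancellation $\Omega_{\text{sup}}+\Omega_{\text{inf}}=1$ at each interior bound $y=x_k$ survives verbatim, so the entries collapse to $\sum_{y=0}^{x_j}L_{n-m+i-1,y}((-1)^{m+1}v)\,\leftArr\Omega(y)$, with $\leftArr\Omega$ acting only on the upper limit. In the last column the weight at $y=n$ is the AD boundary value $1$, so that entry becomes the total $\sum_{y=0}^{n}L_{n-m+i-1,y}((-1)^{m+1}v)$, which I would evaluate from \eqref{eq:def L_i,j} as an explicit product of $(1\pm2v)^{-1}$ factors.

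I would then clear the last column by row operations $R_i\mapsto R_i-c_i R_{i-1}$, with $c_i$ the ratio of consecutive totals just computed, and reduce the $(m{+}1)\times(m{+}1)$ determinant to the $m\times m$ one by expansion along that column. The bulk entries must be simplified using $L_{i+1,j}(v)=\beta_i(v)\big(L_{i,j}(v)+L_{i,j-1}(-v)\big)$: the sign-flipped second term is exactly the one carrying the argument $(-1)^m v=-(-1)^{m+1}v$ required at level $m$, and summing $\sum_{y=0}^{x_j}$ telescopes it, the boundary value of $\leftArr\Omega$ at $y=x_j$ supplying the leftover parity factor so that each reduced entry is proportional to $L_{n-m+i,x_j}((-1)^m v)$. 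Finally I would collect the scalar prefactors — from the last-column total, the $\leftArr\Omega$ boundary weights, and the row-operation coefficients — and check their product equals $A_{n,m}/A_{n,m+1}$ computed from \eqref{eq:Anm}, closing the induction.

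The main obstacle is precisely this last reconciliation. In the biased case the row-operation coefficient $(1+\lambda)$ coincided simultaneously with the total-sum ratio and with the recurrence factor, giving an immediate single-step telescoping; in the two-periodic case the total ratio $\sum_{y}L_{i,\cdot}$ and the recurrence factor $\beta_i$ no longer align, and the residual term involves a \emph{partial} sum of $L(\,\cdot\,)$ that does not simplify on its own. Resolving this forces careful modular bookkeeping — the coefficients $c_i$, the total factors, and the boundary weights all depend on the residues of $i$, $m$ and $n$ modulo $4$ — and may require combining more than two consecutive rows before the telescoping closes. Keeping the argument $v$ versus $-v$ consistent through every step, and verifying that the various $(1-2v)^{\pm1}$ and $(1+2v)^{\pm1}$ contributions assemble into the clean piecewise ratio of \eqref{eq:Anm} — including the $\sqrt{(1\pm2v)/(1\mp2v)}$ correction present only for $n$ and $m$ both odd — is where the difficulty lies, rather than in any single algebraic identity.
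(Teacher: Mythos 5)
Your overall scheme coincides with the paper's: induction starting at $m=n+1$ (lower-triangular determinant, diagonal product matched against $\tilde\Delta_{n+1}$ via Proposition~\ref{prop:Delta tilde}), then, for the step down, rewriting $\Omega$ as $\leftArr\Omega$, pulling the sums inside the determinant, column operations $C_j\mapsto\sum_{k\le j}C_k$, and row operations $R_i\mapsto R_i-c_iR_{i-1}$ with $c_i$ the ratio of consecutive last-column totals (the paper's $c_i=2\beta_{n-m+i-1}\beta_{n-m+i-2}$). But there is a genuine gap at the decisive step, and you half-acknowledge it yourself. You claim the bulk entries left after the row operations are handled because the recurrence \eqref{eq:def L_i,j} makes the partial sum over $y$ "telescope". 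It does not: writing the relevant combination with the recurrence gives
\begin{equation}
L_{i,y}(v)-2\beta_i(v)\beta_{i-1}(v)\,L_{i-1,y}(v)
=\beta_{i-1}(v)\,L_{i-1,y-1}(-v)+\beta_{i-1}(v)\big(1-2\beta_i(v)\big)L_{i-1,y}(v),
\end{equation}
and because the first term on the right carries argument $-v$ while the second carries $+v$ (and the factor $1-2\beta_i$ is not $-1$ for even $i$), this is not a difference of consecutive terms of a single sequence in $y$. The sum over $y\le x_j$ therefore does not collapse to boundary terms, in sharp contrast with the biased case, where $L_{i,y}-(1+\lambda)L_{i-1,y}=\lambda(L_{i-1,y-1}-L_{i-1,y})$ telescopes exactly. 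Your closing paragraph concedes the misalignment and speculates that "combining more than two consecutive rows" might fix it, but offers no identity and no proof; that unresolved step is precisely where the whole difficulty of the two-periodic case lives.

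The paper closes this gap not by any telescoping but by two standalone identities on the $L_{i,j}$'s: Lemma~\ref{lemma:relation on the L_ij, 1}, giving the closed form of the total $\sum_k L_{n,k}$ (which you also assert rather than prove, though it is the easier of the two), and, crucially, Lemma~\ref{lemma:relation on the L_ij, 2}, which states
\begin{equation}
-2\overline\beta_i\,\overline\beta_{i-1}\sum_{k=0}^{j}\big(L_{i,k}-2\beta_i\beta_{i-1}L_{i-1,k}\big)\Big[\tfrac{1}{2}-v(-1)^{j}\Big]^{\delta_{k,j}}=\overline L_{i,j},
\end{equation}
i.e.\ exactly that the weighted partial sum appearing in your reduced entries equals, up to an explicit prefactor, the entry $L_{n-m+i,x_j}$ evaluated at the opposite sign of $v$ — which is what produces the alternation $(-1)^m v\mapsto(-1)^{m+1}v$ between levels. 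Both lemmas are proved by separate inductions showing that the two sides obey the same recurrence (for the second, $(\overline L_{i,j}+L_{i,j-1})\overline\beta_i=\overline L_{i+1,j}$, using the properties \eqref{eq:relation on beta} of the $\beta_k$). Without these identities, or an equivalent substitute, your induction step does not close; once they are in place, the remaining reconciliation of scalar prefactors with $A_{n,m}/A_{n,m+1}$ is, as in the paper, tedious but routine.
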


Observe that, due to the initial conditions, it is immediate that $L_{i,j}=0$ for $j<0$ and $j>i$. Similarly to the biased case, the recurrence generalizes Pascal's triangular relation but here with a coefficient $\beta_i$, that depends on $i$ mod $4$, and a replacement $v\mapsto-v$ in $L_{i,j-1}$. An analogous recurrence for related quantities $S_{i,j}$ has been derived in \cite{ruelle2022double} where refined partition functions of two-periodic Aztec diamonds were considered. The quantities introduced in \cite{ruelle2022double} are in fact closely related to ours by the transformation $S_{i,j}\prod_{k=0}^{i-1}\beta_k = L_{i,j}$. Before proving the proposition, we need the following results on the $L_{i,j}$ coefficients:
\begin{lemma}
    For $n\ge 0$, we have
    \begin{equation}
        \sum_{k=0}^{n} L_{n,k} = A_{n,1}^{-1}(-v) = 2^n \prod_{k=1}^n \beta_{k} \beta_{k-1}.
        \label{eq:sum of Lij}
    \end{equation}
    \label{lemma:relation on the L_ij, 1}
\end{lemma}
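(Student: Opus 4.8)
The plan is to prove the two equalities in \eqref{eq:sum of Lij} separately: first the combinatorial identity $\sum_{k} L_{n,k}(v) = 2^n\prod_{k=1}^n \beta_k(v)\beta_{k-1}(v)$ by induction on $n$, and then the purely algebraic identity $A_{n,1}^{-1}(-v) = 2^n\prod_{k=1}^n\beta_k(v)\beta_{k-1}(v)$ by direct comparison of explicit formulas.

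Write $S_n(v) := \sum_{k} L_{n,k}(v)$, the sum being over $0\le k\le n$ or, equivalently, over all integers since $L_{n,k}=0$ outside this range. Summing the defining recurrence \eqref{eq:def L_i,j} over $k$ and shifting the index ($j=k-1$) in the second term gives the closed recurrence
\[
S_{n+1}(v) = \beta_n(v)\big(S_n(v) + S_n(-v)\big), \qquad S_0(v)=1 .
\]
The essential feature is the coupling between $S_n(v)$ and $S_n(-v)$ produced by the replacement $v\mapsto -v$ in the term $L_{i,j-1}(-v)$. My strategy is therefore to prove the statement for all $v$ simultaneously, so that the induction hypothesis at level $n$ supplies both $S_n(v)=P_n(v)$ and $S_n(-v)=P_n(-v)$, where I set $P_n(v):=2^n\prod_{k=1}^n\beta_k(v)\beta_{k-1}(v)$.

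It then remains to check that $P_n$ obeys this same recurrence and initial condition. From \eqref{eq:def L_i,j} one reads off that $\beta_k(v)\beta_{k-1}(v)=(1-2v)^{-1}$ when $k\equiv 0,1$ and $(1+2v)^{-1}$ when $k\equiv 2,3 \pmod 4$, so I would collect the product as $P_n(v)=2^n(1-4v^2)^{-n/2}\,\big[(1+2v)/(1-2v)\big]^{(a_n-b_n)/2}$, where $a_n,b_n$ count the two kinds of factors; a short count modulo $4$ gives $a_n-b_n\in\{0,1,0,-1\}$ according to $n\equiv 0,1,2,3$. Plugging this into $\beta_n(v)\big(P_n(v)+P_n(-v)\big)$, using $P_n(v)=P_n(-v)$ for $n\equiv 0,2$ and the elementary identity $\sqrt{\tfrac{1+2v}{1-2v}}+\sqrt{\tfrac{1-2v}{1+2v}}=\tfrac{2}{\sqrt{1-4v^2}}$ for $n\equiv 1,3$, one verifies in each of the four residue classes that $\beta_n(v)\big(P_n(v)+P_n(-v)\big)=P_{n+1}(v)$. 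Since $P_0=1=S_0$, the induction closes and $S_n=P_n$.

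Finally, the identification $P_n(v)=A_{n,1}^{-1}(-v)$ follows from the same bookkeeping modulo $4$: setting $m=1$ in \eqref{eq:Anm} gives $A_{n,1}(v)=\big(\tfrac{1-4v^2}{4}\big)^{n/2}$ times $\sqrt{(1+2v)/(1-2v)}$, $\sqrt{(1-2v)/(1+2v)}$, or $1$ according to $n\equiv 1$, $n\equiv 3$, or otherwise $\pmod 4$; inverting and sending $v\mapsto -v$ reproduces exactly the closed form found above for $P_n(v)$. No step here is genuinely hard; the only point requiring care is the $v\mapsto -v$ coupling in the $S_n$ recurrence, which I handle by formulating the induction hypothesis uniformly in $v$, together with clean tracking of the four residue classes modulo $4$.
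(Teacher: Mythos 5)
Your proof is correct, but it runs the induction in the opposite direction from the paper's, and the technical core is different. Both arguments rest on the observation that the two sides of \eqref{eq:sum of Lij} satisfy a common first-order recurrence in $n$, but you choose the recurrence natural to the sum, $S_{n+1}(v)=\beta_n(v)\bigl(S_n(v)+S_n(-v)\bigr)$, which follows immediately from \eqref{eq:def L_i,j}, and then do the real work on the product side: you compute the closed form $P_n(v)=2^n(1-4v^2)^{-n/2}\bigl[(1+2v)/(1-2v)\bigr]^{(a_n-b_n)/2}$ and verify the coupled recurrence by a four-case analysis modulo $4$. The paper does the reverse: it takes the recurrence natural to the product, $R_{n+1}=2\beta_{n+1}\beta_n R_n$, and shows the \emph{sum} satisfies it, by using the $L$-recurrence at two consecutive levels together with the relations \eqref{eq:relation on beta} (specifically $\beta_{n+1}=\overline\beta_{n-1}$ and $\beta+\overline\beta=2\beta\overline\beta$, where the bar denotes $v\mapsto -v$) to reduce the claim to the telescoping identity $\sum_k\bigl(L_{n-1,k}-L_{n-1,k-2}\bigr)=0$. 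The paper's route is shorter and needs no closed forms or residue-class bookkeeping, since it manipulates the $\beta$'s abstractly; your route is more explicit and buys two things the paper leaves implicit: an actual closed formula for $2^n\prod_{k=1}^n\beta_k\beta_{k-1}$, and an explicit verification of the first equality $A_{n,1}^{-1}(-v)=2^n\prod_{k=1}^n\beta_k\beta_{k-1}$ from \eqref{eq:Anm}, which the paper never checks in print. Your point about strengthening the induction hypothesis to hold for $v$ and $-v$ simultaneously is exactly the right way to handle the sign coupling in the sum recurrence; note that the paper sidesteps this issue entirely because its recurrence step is an unconditional identity rather than one that consumes the induction hypothesis.
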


\begin{lemma}
    For $i\ge 1$ and $j\in\mathbb{N}$, we have
    \begin{equation}
        -2\overline\beta_i \overline\beta_{i-1} \sum_{k=0}^{j}\big( L_{i,k}-2\beta_i \beta_{i-1}  L_{i-1,k}\big)\left[ \frac{1}{2}-v(-1)^{j}\right]^{\delta_{k,j}}= \overline L_{i,j}.
    \end{equation}
    \label{lemma:relation on the L_ij, 2}
\end{lemma}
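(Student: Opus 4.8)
Throughout I write $\overline{X}$ for the image of a quantity $X$ under the substitution $v\mapsto-v$, so that the recurrence \eqref{eq:def L_i,j} and its conjugate read
\[
L_{i,j}=\beta_{i-1}\bigl(L_{i-1,j}+\overline{L}_{i-1,j-1}\bigr),\qquad
\overline{L}_{i,j}=\overline{\beta}_{i-1}\bigl(\overline{L}_{i-1,j}+L_{i-1,j-1}\bigr).
\]
I will repeatedly use the elementary weight identities coming from \eqref{eq:def L_i,j}: since exactly one of $i,i-1$ is odd, the product $\beta_i\beta_{i-1}$ collapses to $(1-2v)^{-1}$ when $i\equiv0,1\pmod4$ and to $(1+2v)^{-1}$ when $i\equiv2,3\pmod4$, whence $2\,\beta_i\beta_{i-1}\,\overline{\beta}_i\overline{\beta}_{i-1}=\tfrac{2}{1-4v^2}$ for every $i\ge1$. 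Writing $M_{i,k}:=L_{i,k}-2\beta_i\beta_{i-1}L_{i-1,k}$, the left-hand side of the lemma is $-2\overline{\beta}_i\overline{\beta}_{i-1}\,S_{i,j}$ with $S_{i,j}:=\sum_{k=0}^{j}M_{i,k}\bigl[\tfrac12-v(-1)^j\bigr]^{\delta_{k,j}}$, so the claim to prove is exactly $S_{i,j}=-\overline{L}_{i,j}/(2\overline{\beta}_i\overline{\beta}_{i-1})$.

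The plan is to fix $i\ge1$ and induct on $j$. For the anchor I would invoke Lemma~\ref{lemma:relation on the L_ij, 1}: the two product formulas in \eqref{eq:sum of Lij} give $\sum_{k}L_{i,k}=2\beta_i\beta_{i-1}\sum_{k}L_{i-1,k}$, hence the total unweighted sum $\sum_{k\ge0}M_{i,k}$ vanishes; since $L_{i,k}=0$ for $k>i$, this already yields $S_{i,j}=0=-\overline{L}_{i,j}/(2\overline{\beta}_i\overline{\beta}_{i-1})$ for all $j\ge i$, anchoring the induction at the top (a cruder anchor at $j=0$ is also available from $L_{i,0}=\prod_{k=0}^{i-1}\beta_k$, giving $M_{i,0}=(1-2\beta_i)L_{i,0}$). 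The inductive descent rests on a purely formal telescoping relation: separating the $k=j$ term and subtracting $S_{i,j-1}$ produces
\[
S_{i,j}-S_{i,j-1}=\Bigl(\tfrac12-v(-1)^{j}\Bigr)\bigl(M_{i,j}+M_{i,j-1}\bigr),
\]
so that, thanks to the anchor, the entire lemma reduces to the single local identity
\[
-2\,\overline{\beta}_i\overline{\beta}_{i-1}\Bigl(\tfrac12-v(-1)^{j}\Bigr)\bigl(M_{i,j}+M_{i,j-1}\bigr)=\overline{L}_{i,j}-\overline{L}_{i,j-1}. \qquad(\ast)
\]

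The heart of the argument, and the step I expect to be the main obstacle, is $(\ast)$. Inserting the recurrence once gives the compact form $M_{i,k}=\beta_{i-1}\bigl[(1-2\beta_i)L_{i-1,k}+\overline{L}_{i-1,k-1}\bigr]$, after which both sides of $(\ast)$ are expressed purely through the level-$(i-1)$ coefficients $L_{i-1,\bullet}$ and $\overline{L}_{i-1,\bullet}$. The essential difficulty is that $(\ast)$ is \emph{not} an identity in the weights alone: small cases show the two sides agree only after exploiting genuine relations between $L_{i-1,\bullet}$ and $\overline{L}_{i-1,\bullet}$ (for instance $\overline{L}_{i-1,k}=(2\beta_i-1)L_{i-1,k}$ holds on the flat lowest levels but fails from level $2$ on, so no such shortcut is available in general). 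I would therefore attempt to close $(\ast)$ by a nested induction on $i$, substituting the conjugate recurrence a second time to push the combination down to level $i-2$ and reorganising it into the level-$(i-1)$ instance of the same weighted partial-sum statement. All the real work sits in this bookkeeping—tracking $v\mapsto-v$ through the conjugate recurrence and splitting according to $i\bmod4$ and $j\bmod2$ (the latter entering through $(-1)^j$)—whereas, once the top anchor and the telescoping relation are secured, $(\ast)$ is the only nontrivial input and the result follows by descent in $j$.
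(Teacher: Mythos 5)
Your reduction is sound as far as it goes, but it stops short of the actual content of the lemma. Two points. First, a small error in the anchor: it is not true that $S_{i,j}=0$ for all $j\ge i$. At $j=i$ the term $M_{i,i}=L_{i,i}\neq 0$ carries the non-unit weight $\tfrac12-v(-1)^i$, so that $S_{i,i}=-L_{i,i}\bigl(\tfrac12+v(-1)^i\bigr)\neq 0$ (and correspondingly $\overline L_{i,i}\neq 0$); the vanishing anchor is only valid for $j\ge i+1$, which is harmless since the $j=i$ case would be recovered by one step of descent. Second, and decisively: the identity $(\ast)$, which you yourself call the main obstacle, is never proved. Given the anchor, $(\ast)$ for all $j$ is \emph{equivalent} to the lemma (it is the lemma differenced in $j$), so your telescoping argument reduces the statement to an equally hard restatement of itself, and the proposal then ends with a sketch (``I would attempt to close $(\ast)$ by a nested induction on $i$\dots'') of which no step is carried out. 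As you observe yourself, $(\ast)$ is not an identity in the weights alone but requires genuine relations between $L_{i-1,\bullet}$ and $\overline L_{i-1,\bullet}$ --- that requirement is precisely the content of the lemma, so the proof has a hole exactly where the work lies.

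For comparison, the paper proves the lemma by a single induction on $i$, treating all $j$ at once, with no differencing in $j$ at all. The right-hand side obeys the conjugate of the recurrence \eqref{eq:def L_i,j}, namely $\overline L_{i+1,j}=\bigl(\overline L_{i,j}+L_{i,j-1}\bigr)\overline\beta_i$, and one verifies that the left-hand side obeys the same recurrence: substitute the claimed sum formula for $\overline L_{i,j}$ and its $v\mapsto -v$ conjugate for $L_{i,j-1}$ into $\bigl(\overline L_{i,j}+L_{i,j-1}\bigr)\overline\beta_i$, shift the summation index in the conjugate sum, and use \eqref{eq:def L_i,j} together with the $\beta$-identities \eqref{eq:relation on beta} to reassemble the result into the sum formula at level $i+1$; the base case $i=1$ is a direct check from $L_{0,k}=\delta_{0,k}$ and $L_{1,k}=(\delta_{0,k}+\delta_{1,k})\beta_0$. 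Note the structural point that your differenced formulation obscures: the recurrence in $i$ naturally pairs the formula at $(i,j)$ with its \emph{conjugate} at $(i,j-1)$, and it is this pairing --- not a relation between $M_{i,j}$ and $M_{i,j-1}$ at fixed $i$ --- that closes the induction. If you want to salvage your route, the cleanest repair is to abandon the induction on $j$ and run your ``nested induction on $i$'' directly on the undifferenced sums $S_{i,j}$; carried out, that is exactly the paper's proof.
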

The bar above some quantities is a shortcut notation to say that they are evaluated at $-v$ rather than~$v$, for instance, $\overline\beta_i = \beta_i(-v)$. In this section, we will also frequently need some properties on the $\beta_k$ coefficients that we gather here:
\begin{equation}
    \beta_k = \overline\beta_{k+2} = \beta_{k+4},
    \qquad
    \beta_k + \overline\beta_{k} = 2 \beta_k \overline\beta_{k},
    \label{eq:relation on beta}
\end{equation}
for all $k\in\mathbb{Z}$.

\begin{proof}[\underline{Proof of Lemma~\ref{lemma:relation on the L_ij, 1}}]
    In some sense, this lemma may appear trivial; if we assume the proposition to be true, it just ensures that the distribution at $m=1$ is normalized,
    \begin{equation}
        A_{n,1}(v)\sum_{k=0}^{n} L_{n,k}(-v) = \sum_{k=0}^{n}\rho(x_1^{(1)}=k)=1.
        \label{eq:normalized distrib}
    \end{equation}
    The lemma can nevertheless be obtained independently. Hereafter, we show that the left- and the right-hand side of \eqref{eq:sum of Lij} obey the same recurrence relation. For $n=0$, it is clear that both sides equal $1$, and, denoting the rhs by $R_n$, that it satisfies $2\beta_{n+1}\beta_n R_n = R_{n+1}$. Therefore, it only remains to show that the lhs respect the same relation, namely
    \begin{equation}
        2\beta_{n+1}\beta_{n}\sum_{k=0}^{n} L_{n,k} = \sum_{k=0}^{n+1} L_{n+1,k}.
    \end{equation}
    By using the recurrence of $L_{i,j}$ on the rhs and playing with the properties of $\beta_k$, this is equivalent to
    \begin{equation}
        \sum_{k=0}^{n+1} \frac{L_{n,k}}{\beta_{n-1}} - \frac{\overline L_{n,k-1}}{\overline\beta_{n-1}} = 0.
    \end{equation}
    The expression inside the sum is equal to $L_{n-1,k}-L_{n-1,k-2}$ so that the summation on $k$ indeed gives zero. 
\end{proof}

\begin{proof}[\underline{Proof of Lemma~\ref{lemma:relation on the L_ij, 2}}]
    In this proof, we follow the same strategy as in the previous one, and show that both sides of the equality obey the same recurrence relation. The case $i=1$ is easy to prove and only relies on the values of $L_{0,k}=\delta_{0,k}$ and $L_{1,k}=(\delta_{0,k}+\delta_{1,k})\beta_0$.
    For the recurrence, it is clear that the right-hand side obeys $(\overline L_{i,j}+ L_{i,j-1})\overline\beta_{i} =  \overline L_{i+1,j}$. Hence, it suffices to show the lhs satisfies the same relation. By inserting the lhs expression into $(\overline L_{i,j}+ L_{i,j-1}) \overline\beta_{i}$ instead of $\overline L_{i,j}$, we find it is equal to
    \begin{equation}
        -2\overline\beta_{i}\sum_{k=0}^{j} \left[\overline\beta_{i}\overline\beta_{i-1}L_{i,k} + \beta_{i}\beta_{i-1}\overline L_{i,k-1}-2\overline\beta_{i}\overline\beta_{i-1}\beta_{i}\beta_{i-1}(L_{i-1,k}+\overline L_{i-1,k-1})\right]\left[ \frac{1}{2}-v(-1)^{j}\right]^{\delta_{k,j}},
    \end{equation}
    where we used that $L_{i,j}=0$ for $j<0$ and $j>i$. Using the recurrence on the $L_{i,j}$'s and \eqref{eq:relation on beta}, we rewrite this expression as
    \begin{multline}
        -2\overline\beta_{i}\beta_i\sum_{k=0}^{j}\left[-\overline\beta_{i-1}L_{i,k} + \beta_{i-1}\overline L_{i,k-1}\right]\left[ \frac{1}{2}-v(-1)^{j}\right]^{\delta_{k,j}}
        \\=
        -2\overline\beta_{i+1}\overline\beta_{i} \sum_{k=0}^{j}\left[L_{i+1,k} -2 \beta_{i+1}\beta_{i} L_{i,k}\right]\left[ \frac{1}{2}-v(-1)^{j}\right]^{\delta_{k,j}}.
    \end{multline}
    This last equality brings the expression back to its original form evaluated at $i+1$, which concludes the proof.
\end{proof}

\begin{proof}[\underline{Proof of Proposition~\ref{prop:int from the right, 2P}}]
    We proceed by induction starting from $m=n+1$. In this case, the determinant is lower triangular, and, in order to recover \eqref{eq:tot pdf, 2P}, we need to show that $\prod_{k=0}^n L_{k,k}\big((-1)^{n+1}v\big)=(a^2+b^2)^{n(n+1)/2}/Z_n = [\tilde\Delta_{n+1}(v)]^{-1}$. The product can be computed by using
    \begin{equation}
        L_{k,k}(v)= (1-4v^2)^{-\lfloor\frac{k}{4}\rfloor}\times 
        \left\{\begin{array}{cl}
            1 & \text{if } k=0 \mod 4, \\
            (1-2v)^{-1} & \text{if } k=1 \mod 4, \\
            (1+2v)^{-1} & \text{if } k=2 \mod 4, \\
            (1-4v^2)^{-1} & \text{if } k=3 \mod 4,
        \end{array}\right.
    \end{equation}
    and by separating each value of $n$ mod $4$. One can verify that it gives the claimed result by comparison with \eqref{eq:det last level}.
    For the induction step, we have to show that the distribution integrated over its level $m+1$ gives back the density for levels $1$ to $m$. It is equivalent to verify that
    \begin{multline}
        \sum_{x^{(m+1)}} \det_{1\le i,j\le m+1} \left[ L_{n-m+i-1,x_j^{(m+1)}}\big((-1)^{m+1}v\big)\right]
        \cdot
        \Omega(x^{(m)})
        \cdot
        \chit(x^{(m)}\prec x^{(m+1)})
        \\=
        \frac{A_{n,m}}{A_{n,m+1}}\det_{1\le i,j\le m} \left[ L_{n-m+i,x_j^{(m)}}\big((-1)^{m}v\big)\right].
        \label{eq:induction step, 2P}
    \end{multline}
    As before, we set $x_j\equiv x_j^{(m)}\quad (1\le j\le m)$ and $y_j\equiv x_j^{(m+1)}\quad (1\le j\le m+1)$.
    Similarly to what we did in Section~\ref{sec:Integration from the right}, we write $\Omega(x^{(m)}) = \leftArr\Omega(x^{(m+1)})$ as a function of $x^{(m+1)}$ such that
    \begin{equation}
        \leftArr\Omega(y_j) 
        =
        \left\{\begin{array}{ll}
            \frac{1}{2}+v\varepsilon_{m}(x_j)=\frac{1}{2}-v(-1)^{m+1+x_{j}} & \text{if } y_j=x_j ,
            \\[0.15cm]
            \frac{1}{2}-v\varepsilon_{m}(x_{j-1})=\frac{1}{2}+v(-1)^{m+1+x_{j-1}} & \text{if } y_j=x_{j-1} ,
            \\[0.15cm]
            1 & \text{otherwise. }
        \end{array}\right.
    \end{equation}
    We rewrite the lhs in \eqref{eq:induction step, 2P} with the sums in the determinant,
    \begin{equation}
        \det_{1\le i,j\le m+1}  \left[\sum_{y_{j}=x_{j-1}}^{x_j}  L_{n-m+i-1,y_{j}}\big((-1)^{m+1}v\big) \;
        \leftArr\Omega(y_{j})\right],
    \end{equation}
    where we defined $x_0=0$ and $x_m=n$. (Note that $\leftArr\Omega$ takes the value $1$ in these special cases since it corresponds to the boundary of the AD and not to the presence of a particle.) Without loss of generality, we absorb the sign $(-1)^{m+1}$ in a redefinition of $v\mapsto(-1)^{m+1} v$. We simplify this determinant with the following column operations: $C_j\mapsto \sum_{k=1}^j C_k$. Since we have $\leftArr\Omega(y_i=x_i)+\leftArr\Omega(y_{i+1}=x_i)=1$, the entries become
    \begin{equation}
        \sum_{y_{j}=0}^{x_j}  L_{n-m+i-1,y_{j}} \left[\frac{1}{2}-v(-1)^{x_{j}}\right]^{\delta_{y_{j},x_{j}}}.
    \end{equation}
    In the last column, the expression simplifies further: $\sum_{y=0}^{n}  L_{n-m+i-1,y} = 2^{n-m+i-1} \prod_{k=1}^{n-m+i-1} \beta_{k} \beta_{k-1}$ due to Lemma~\ref{lemma:relation on the L_ij, 1}. Since we want to reduce the size of the determinant, we choose to apply the row operations $R_i\mapsto R_i - 2\beta_{n-m+i-1} \beta_{n-m+i-2}R_{i-1}$ in order to obtain $0$'s in the last column. The rest of the entries, for $i=2,\dots,m+1$ and $j=1,\dots,m$, are given by
    \begin{equation}
        \sum_{y_{j}=0}^{x_j} \big( L_{n-m+i-1,y_{j}}-2\beta_{n-m+i-1} \beta_{n-m+i-2}  L_{n-m+i-2,y_{j}}\big)\left[ \frac{1}{2}-v(-1)^{x_j}\right]^{\delta_{y_{j},x_j}}.
    \end{equation}
    We can use Lemma~\ref{lemma:relation on the L_ij, 2} to simplify the expression and find something proportional to the desired entries. In summary, the determinant is now given by
    \begin{equation}
        \begin{vNiceArray}{cw{c}{5cm}c|c}[margin]
            \Block{1-3}{\sum_{y=0}^{x_j}  L_{n-m,y} \;\leftArr\Omega(y ; (-1)^{m+1} v)} & & &  2^{n-m} \prod_{k=1}^{n-m} \beta_{k} \beta_{k-1} \\[0.2cm]
            \hline
            \Block{3-3}{-\dfrac{L_{n-m+i-1,x_j}}{2\overline\beta_{n-m+i-1}\overline\beta_{n-m+i-2}} } & & & 0  \\
            & & & \vdots \\
            & & & 0 \\
        \end{vNiceArray}_{(m+1)\times (m+1)}
        \simeq
        \det_{1\le i,j\le m} \Big[ \overline L_{n-m+i,x_j}\Big].
    \end{equation}
    After a tedious computation, the proportionality factor between the two determinants can be shown to correspond exactly to $A_{n,m}\big((-1)^{m+1}v\big)/A_{n,m+1}\big((-1)^{m+1}v\big)$.
\end{proof}

\medskip\noindent
\underline{\it Remark.}\; In this proof, we used the knowledge of the partition function to slightly simplify the expression of $A_{n,m}$. However, it is not difficult to rewrite it while keeping $Z_n$ implicit, by defining $\tilde A_{n,m} = A_{n,m} \frac{Z_n}{(a^2+b^2)^{n(n+1)/2}}$. The induction step would not change since $A_{n,m-1}/A_{n,m} = \tilde A_{n,m-1}/\tilde A_{n,m}$. In the end, it would be easy to deduce the partition function from the proposition, just by integrating the case $m=1$. 
\medskip

By the same integration performed in Proposition~\ref{prop:JPDF m to n, 2P}, we can write down the one level density distribution\footnote{Similarly, it is not difficult to obtain the distribution $\rho(x^{(m)},\dots,x^{(m+k)})$ for $k+1$ consecutive levels.}
\begin{equation}
    \rho(x^{(m)})= A_{n,m} \,
    \det_{1\le i,j\le m}  \left[L_{n-m+i,x_j^{(m)}} \big((-1)^{m}v\big) \right]\cdot \tilde\Delta(x^{(m)}) ,
    \label{eq:proba x one line 2P}
\end{equation}
where $\tilde\Delta(x^{(m)})$ is still given by \eqref{eq:Vandermonde 2P}. It is a relatively simple formula given by the product of two determinants of identical sizes. We have not been able to evaluate these determinants in closed form and the product of the two does not seem to simplify the expression. The question of whether this last expression can be related to orthogonal or biorthogonal polynomials (similarly to the biased case with the Krawtchouk ensemble) remains open.

Another question is whether it is still possible to obtain the shape of the (outer) arctic curve by looking at the support of this distribution in the limit where $n$ becomes large (method proposed by \cite{Forrester} in the uniform case). In practice, the above distribution contains all the necessary information; the problem rather lies in the feasibility of the computation from the determinantal expressions.

In the remainder of this section, we attempt to give a better understanding of the distribution \eqref{eq:pdf 1 to m, x 2P}. Originally, following the method proposed in \cite{Forrester}, we deduced it from \eqref{eq:partially int pdf 2P} by using the Jacobi identity, sometimes also referred to as Jacobi complementary minor theorem. For a square matrix $M$, the formula reads
\begin{equation}
    \det M_{J,K} = (-1)^{\sum_i j_i + k_i} \det\left[ (M^{-1})_{K^c,J^c}\right] \det M ,
    \label{eq:Jacobi formula}
\end{equation}
where $J=\{j_1,j_2,\dots\},K=\{k_1,k_2,\dots\}$ are subsets of rows and columns of $M$, and $J^c, K^c$ their complements. The above formula makes the bridge between the minors of a given matrix and the minors of its inverse. Interestingly, this formula can be used to connect the distribution of particles with the distribution of holes (the absence of particles). For instance, we can take $M$ to be $\tilde\Delta_{n+1}(\pm v)$. In this case, the rows selected in $J$ correspond to the particle positions, and consequently, $J^c$ gives the hole positions. 

The difficulty then lies in the inverse matrix, which can be hard to determine in principle, but since the equality holds at the determinant level, the exact inverse of $M$ is not needed. 
Suppose that we have a lower triangular matrix $L$ such that $L M = U$ with $U$ an upper triangular matrix with ones on its diagonal. The inverse of $M$ is given by $M^{-1} = U^{-1}  L $, however, due to its triangular shape, $U^{-1}$ do not contribute to the minors containing the last $m$ rows of $M^{-1}$ which are simply given by $|M^{-1}_{K^c,J^c} |= |L_{K^c,J^c}|$.

Our claim is that such a matrix $\big(L_{i,j}\big)_{i,j=0}^n$ is actually given by the coefficients $L_{i,j}$ in \eqref{eq:def L_i,j} where the $-1$ prefactor in the Jacobi's formula has been absorbed inside the determinant. The exact sign of $v$ in the different expressions depends on the parity of $n$ and $m$. Once we have the distribution of holes, one can use the symmetries of the Aztec diamonds to exchange the roles of particles and holes and recover the distribution \eqref{eq:pdf 1 to m, x 2P}. It is also needed to know how to relate the adjacency of particles and holes, but this can be done by explicitly counting the different kinds of dimers on a given level according to the particle positions.

\subsection{Generating functions, asymptotics and convergence to the GUE-corners process}
\label{sec:conv to GUE 2P}

In this section, our first objective is to find an asymptotic formula for the $L_{i,j}$ coefficients, allowing us to understand the behaviour of the determinant near the edge in the large $n$ limit. Such an analysis for a similar recurrence has been carefully realized in \cite{ruelle2022double} so that we will not detail all the computations in the present paper. Since the $L_{i,j}$'s obey a linear recurrence relation, their generating function is rational, and the tools developed in \cite{pemantle2008twenty} are quite adapted. The only difficulty lies in the functions $\beta_i(v)$ that depends on the value of $i$ modulo $4$. It leads us to split the generating function into four functions, according to each subsequence. In the regime where $i$ and $j$ become large, the large deviation function leading to the asymptotics of these subsequences is the same; however, the prefactor and subleading terms may slightly differ for each of them. 

The theorem 3.19 of \cite{pemantle2008twenty} gives the asymptotics for large $n$ of the coefficients
\begin{equation}
    L_{sn,rn}(v)=\frac{1}{\sqrt{2\pi sn}} \text{e}^{sn F(\frac{r}{s})} 
    \sum_{k=0}^\infty \frac{N_k^{[sn]}(\frac{r}{s})}{(sn)^{k/2}},
    \label{eq:asympt formula pemantle}
\end{equation}
for $s,r\in [0,1]$. $F(t)$ is the large deviation function and depends, in practice, only on the denominator of the generating function. The $N_k^{[sn]}(t)$ give the corrections at all orders in $n$ and depend on derivatives of numerator and denominator of the generating function, they are therefore supposed to be smooth in $t$; these correction terms also depend implicitly on $v$. The exponent $[sn]$ reminds that the function may depend on $sn$ but only on its value modulo 4 so that we have here four different functions $N_k^{[sn]}$ for each $k\in\mathbb{N}$. An explicit form of these corrections is not needed for our purposes\footnote{A formula for $N_0^{[sn]}$ is given in \cite{pemantle2008twenty}.}; however, since several cancellations occur in the determinant, we need to verify they do not interfere with the leading order. Once we have a clear knowledge of these cancellations, one can always truncate this series to only keep the necessary terms. 

Concerning the function $F$, it turns out that the discussion made in Section~4.1 of \cite{ruelle2022double} applies almost verbatim in our case: in particular, it can be shown that $F(t)=F_1(t)-\frac{1}{2}\log\sqrt{1-4v^2}$, where $F_1$ is defined in the equation (4.12) of \cite{ruelle2022double}. The function $F$ can then be proven to be strictly concave with a unique maximum at $r_*=1/2$ where $F(r_*)=\log\frac{2}{\sqrt{1-4v^2}}$ and $F''(r_*)=\frac{-4}{1-4v^2}$.
    We can now formulate the main result of this work.

\begin{theorem}
    If we define $z_i^{(\ell)} = \frac{x_i^{(\ell)} - \mu}{\sigma}$ to be the rescaling of the $x$-particle system, with $\mu=n r_*$ and $\sigma^2=\frac{n}{-F''(r_*)}$, then the JPDF $\rho(z^{(1)},\dots,z^{(m)})$ converges in distribution to $\rho_{\rm{GUE-corners}}(z^{(1)},\dots,z^{(m)})$ in the limit where $n\to\infty$ and $m$ remains finite.
\end{theorem}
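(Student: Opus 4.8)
The plan is to mirror the proof given in the biased case, replacing the elementary Stirling estimate by the Pemantle--Wilson asymptotics \eqref{eq:asympt formula pemantle} for the entries $L_{i,j}$. Starting from the density \eqref{eq:pdf 1 to m, x 2P}, I would first perform the change of variables $x_i^{(\ell)}\mapsto z_i^{(\ell)}=(x_i^{(\ell)}-\mu)/\sigma$, producing a Jacobian $\sigma^{m(m+1)/2}$ and leaving the interlacing indicator $\chit$ intact (the rescaling is monotone, and $\rho_{\mathrm{GUE-corners}}$ carries the same interlacing). As in the biased case, since the number of particles stays finite while the number of available sites grows, adjacencies become improbable, so $\Omega\to 1$ and may be dropped at leading order. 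The whole problem thus reduces to the large-$n$ behaviour of $A_{n,m}\,\sigma^{m(m+1)/2}\det_{1\le i,j\le m}\big[L_{n-m+i,\,x_j^{(m)}}((-1)^m v)\big]$.

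Next I would insert \eqref{eq:asympt formula pemantle} into each entry. Here $sn=n-m+i\sim n$, so $s\to 1$, while the second index is $x_j^{(m)}=\mu+\sigma z_j=n r_*+\sigma z_j$, giving $t_{ij}=x_j^{(m)}/(n-m+i)=r_*+\big(\sigma z_j-r_*(m-i)\big)/n+O(n^{-3/2})$. Since $F'(r_*)=0$, a Taylor expansion of $(n-m+i)F(t_{ij})$ around $r_*$ yields a factor $e^{nF(r_*)}$ common to all entries (contributing $e^{mnF(r_*)}$ to the $m\times m$ determinant, to be absorbed by $A_{n,m}$), a pure row factor $e^{(i-m)F(r_*)}$, the Gaussian $e^{-\tfrac12 z_j^2}$ (this is exactly where $\sigma^2=n/(-F''(r_*))$ enters, using $F''(r_*)=-4/(1-4v^2)$), and a decisive mixed term $\tfrac{r_*(i-m)}{\sigma}z_j$. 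Writing each entry as $R_i\,e^{-z_j^2/2}\big(1+\tfrac{b_i}{\sigma}z_j+\dots\big)$ with $b_i=r_*(i-m)$ and expanding the determinant by multilinearity in the columns, the leading contribution selects the powers $z_j^0,\dots,z_j^{m-1}$ and factorizes into $\Delta(z^{(m)})$ times a Vandermonde in the $b_i$; since the $b_i$ form an arithmetic progression of step $r_*$, this second factor equals $r_*^{\binom{m}{2}}\prod_{k=0}^{m-1}k!$, and the product of factorials cancels the $1/k!$ generated by the exponential, leaving $r_*^{\binom{m}{2}}$. A bookkeeping of the $\sigma$-powers (the Jacobian $\sigma^{m(m+1)/2}$, the prefactors $(2\pi n)^{-m/2}$ from \eqref{eq:asympt formula pemantle}, and $\sigma^{-\binom{m}{2}}$ from the determinant) shows they cancel exactly, while the remaining constants, assembled from $A_{n,m}$, $e^{mnF(r_*)}$, $r_*^{\binom{m}{2}}$ and the values $N_0^{[n-m+i]}(r_*)$, collapse to $(2\pi)^{-m/2}$; this reproduces $\rho_{\mathrm{GUE-corners}}$.

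The main obstacle is the rigorous control of the subleading terms in \eqref{eq:asympt formula pemantle}, and especially their dependence on $sn \bmod 4$. Two points must be settled. First, the $k\ge 1$ corrections $N_k^{[sn]}/(sn)^{k/2}$, and the Taylor remainder of $N_0^{[sn]}(t_{ij})$ in $t_{ij}-r_*\sim z_j/(\sigma\,(-F''(r_*)))$, produce $z_j$-dependent pieces of order $n^{-1/2}$, the same order as the decisive mixed term; I would have to show that, after factoring out the row factors $R_i$ and the $i$-independent column factors (which tend to $1$), the only surviving $i$-dependent, $z_j$-linear contribution at leading order is $\tfrac{r_*(i-m)}{\sigma}z_j$, so that the $b_i$ genuinely stay an arithmetic progression. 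Second, the period-$4$ behaviour of the $N_k^{[sn]}$ means the four residue classes of rows are governed by different prefactors; the key fact to verify is that this periodicity enters only through multiplicative, $z_j$-independent constants (hence through $R_i$, and ultimately through the normalization via $\prod_i N_0^{[n-m+i]}(r_*)$), and never through the $z_j$-profile of the entries, since otherwise the Vandermonde in the $b_i$ would be spoiled. Establishing these cancellations, as flagged in the text, is precisely the delicate part; once done, truncating \eqref{eq:asympt formula pemantle} to the needed order and collecting constants finishes the proof with the usual $\big(1+O(n^{-1/2})\big)$ relative error, yielding convergence in distribution.
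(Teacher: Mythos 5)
Your proposal follows the paper's own route for most of its length: the rescaling with Jacobian $\sigma^{m(m+1)/2}$, discarding $\Omega$ because adjacencies become improbable, inserting the Pemantle--Wilson asymptotics \eqref{eq:asympt formula pemantle}, expanding $s_i n F(r_j/s_i)$ around $r_*$ to produce the common factor ${\rm e}^{mnF(r_*)}$, the Gaussians ${\rm e}^{-z_j^2/2}$, the row factors and the mixed term $r_*(i-m)z_j/\sigma$, and finally extracting a Vandermonde $\Delta(z^{(m)})$ whose power of $n$ cancels the remaining prefactors --- this is exactly the content of \eqref{eq:sin}--\eqref{eq:rho} in the paper. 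The genuine gap is in your last step, the identification of the leftover constant with $(2\pi)^{-m/2}$, which you reduce to a ``key fact'' that you flag but do not prove.

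That key fact --- that the period-$4$ structure of the $N_k^{[sn]}$ enters only through $z_j$-independent row factors $R_i$ --- is false as stated. The entry $(i,j)$ carries the factor $N_0^{[n-m+i]}(r_j/s_i)$, so the residue class of the row index multiplies the entire $r$-dependence (hence the $z_j$-profile) of that entry. Concretely, the coefficient of $z_j/\sigma$ in row $i$ is not $r_*(i-m)$ but $r_*(i-m) + \frac{(N_0^{[n-m+i]})'(r_*)}{(-F''(r_*))\,N_0^{[n-m+i]}(r_*)}$, and the added term is $4$-periodic in $i$ rather than constant or affine; thus your $b_i$ are not an arithmetic progression, $\Delta(b)$ is not $r_*^{\binom{m}{2}}\prod_k k!$, and the same pollution affects every higher Taylor coefficient. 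What would remain to prove is that the resulting constant determinant --- the paper's $B_m^{[n]}$, built from derivatives at $r_*$ of $\tilde{N}_0^{[n-m+i]}(r) = N_0^{[n-m+i]}(r)\,{\rm e}^{(i-m)[F(r)-rF'(r)]}$ of \eqref{eq:zeroth} --- combines with $C_{n,m}(v)$ and $\bigl(\frac{1-4v^2}{4}\bigr)^{m/2}$ to give exactly $1$. The authors state that this direct check looks awkward and that they verified it only for $m\le 4$, and in closed form only for $v=0$, where $N_0$ is row-independent and your Vandermonde argument does go through (this is precisely the paper's closing Remark). The paper's proof closes the argument without ever evaluating the constant: it keeps it as the unevaluated bracket in \eqref{eq:final} and observes that, since both the discrete density and the GUE-corners density are normalized, the bracket must equal $1$. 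Substituting this normalization argument for your ``delicate part'' is what completes the proof; attempting to establish your key fact head-on would amount to proving the very identity the authors deliberately sidestepped.
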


\begin{proof}
We start by recalling the expression of the particle distribution at finite $n$, given in (\ref{eq:pdf 1 to m, x 2P}),
\begin{equation}
    \rho(x^{(1)},\dots,x^{(m)}) = A_{n,m} \det_{1\le i,j\le m} \left[ L_{n-m+i,x_j^{(m)}} \big((-1)^{m}v\big)\right]
    \cdot
    \Omega(x^{(1)},\dots,x^{(m-1)})
        \cdot
        \chit(x^{(1)}\prec\dots\prec x^{(m)}).
\end{equation}
    
As in Section \ref{sec:conv to GUE biased}, we neglect the factor $\Omega(x^{(1)}, \ldots,x^{(m-1)})$ since the probability that some particles be adjacent decays to 0 in the large $n$ limit. On level $m \ll n$, the particles are close to the contact point of the arctic curve with the boundary, located at $\mu = nr_* = \frac n2$. We set\footnote{The superscript $m$ is omitted in the definition of the $r_j$ to keep the notation lighter.} $r_j=x_j^{(m)}/n$ and $s_i=1-\frac{m-i}{n}$ in order to make contact with the asymptotic form \eqref{eq:asympt formula pemantle} for the $L_{sn,rn}$ coefficients. Thus the $s_i$ are all close to 1, and the $r_j$ are close to $r_*$ and related to the $z_j$ by the relation $r_j = r_* + z_j/\sqrt{-nF''(r_*)} + \ldots$\,, where the dots refer to the fluctuations of $r_j$ around $r_*$ of scales larger than $\sqrt{n}$.
    
From \eqref{eq:asympt formula pemantle}, the coefficient $L_{s_in,r_jn}$ contains the function $s_i n F(r_j/s_i)$ which we can expand around $s_i=1$ and $r_j=r_*$,
\begin{align}
    s_i n F(r_j/s_i) &= n F(r_j) - (m-i) F(r_j) + (m-i) r_j F'(r_j) + \ldots \nonumber\\
    &= n F(r_*) - \frac 12 z_j^2 - (m-i) \big[ F(r_j) - r_j F'(r_j)\big] + \ldots
\label{eq:sin}
\end{align}
where the dots are terms which go to 0 when $n$ goes to infinity.

Using \eqref{eq:asympt formula pemantle}, we can now write the determinant as 
\begin{equation}
    \det_{1\le i,j\le m} \left[ L_{s_i n,r_j n} \right] = (2\pi n)^{-m/2} \, {\rm e}^{mnF(r_*)} \, \prod_{j=1}^m\text{e}^{-\frac{1}{2}z^2_j} \times \det_{1\le i,j\le m} \left[G_i(r_j)\right],
    \label{eq:det}
\end{equation}
where the functions $G_i(r_j)$ are given by
\begin{equation}
    G_i(r_j) = \frac 1{\sqrt{s_i}} \, {\rm e}^{s_i n F(r_j/s_i) - n F(r_*) + z_j^2/2} \, \sum_{k=0}^\infty \frac{N_k^{[s_in]}(r_j/s_i)}{(s_in)^{k/2}}.
\end{equation}
From \eqref{eq:sin}, each function $G_i(r)$ has an expansion in inverse powers of $n$, whose dominant, order 0 term is 
\begin{equation}
    \tilde{N}_0^{[n-m+i]}(r) \equiv G_i(r)\Big|_{n^0} = N_{0}^{[n-m+i]}(r) \, \text{e}^{(i-m) [F(r)-r F'(r)]}.
\label{eq:zeroth}
\end{equation}
    
Before evaluating the asymptotic value of this determinant, we examine the other various factors depending on $n$ that will enter the probability distribution in the scaling limit. There are four such terms: the $A_{n,m}$ factor, given in \eqref{eq:Anm}, the Jacobian of the change of variables $x_j \mapsto z_j$, equal to $\sigma^{m(m+1)/2}$, and the terms $(2\pi n)^{-m/2}$ and ${\rm e}^{mnF(r_*)}$ in \eqref{eq:det}. Collecting their explicit values, we find
\begin{equation}
        (2\pi n)^{-m/2} A_{n,m} \, \sigma^{m(m+1)/2} \, {\rm e}^{mnF(r_*)} = \frac{C_{n,m}(v)}{(2\pi)^{m/2}} \, \Big(\frac 4{1-4v^2}\Big)^{m(m-3)/4} \, n^{m(m-1)/4},
\end{equation}
where $C_{n,m}(v)$ is the function on $v$, given in \eqref{eq:Anm}, which depends on $n \bmod 4$ and on $m \bmod 2$.

Collecting all terms, the scaling limit of the discrete distribution is given in terms of the remaining determinant as
\begin{equation}
\squeeze{0.6}{
    \rho(z^{(1)},\ldots,z^{(m)}) = \lim_{n \to \infty} \frac{C_{n,m}(v)}{(2\pi)^{m/2}} \, \Big(\frac 4{1-4v^2}\Big)^{m(m-3)/4} \, n^{m(m-1)/4} \, \prod_{j=1}^m\text{e}^{-\frac{1}{2}z^2_j} \, \det_{1\le i,j\le m} \left[G_i(r_j)\right] \cdot \, \chit(z^{(1)}\prec\dots\prec z^{(m)}). 
}
\label{eq:rho}
\end{equation}

To evaluate the determinant, we expand the functions $G_i(r_j)$ around $r_j = r_* + \varepsilon_j$, to obtain 
\begin{equation}
    G_i(r_j) = G_i(r_*) + \varepsilon_j \, G'_i(r_*) + \frac {\varepsilon_j^2}2\,G''_i(r_*) + \ldots + \frac {\varepsilon_j^{m-1}}{(m-1)!}\,G^{(\ge m-1)}_i(r_*),
\end{equation}
    where $G^{(\ge m-1)}_i(r_*)$ contains the tail of the Taylor expansion, namely,
\begin{equation}
    G^{(\ge m-1)}_i(r_*) = G^{(m-1)}_i(r_*) + \frac{\varepsilon_j}{m} \, G^{(m)}_i(r_*) + \frac{\varepsilon_j^2}{m(m+1)} \, G^{(m+1)}_i(r_*)+ \ldots
\end{equation}
The matrix inside the determinant factorizes so that, with $G^{(m-1)}_i(r_*)$ standing for $G^{(\ge m-1)}_i(r_*)$,
\begin{equation}
    \det_{1\le i,j\le m} \left[G_i(r_j)\right] = \det_{1\le i,j\le m} \left[\sum_{k=1}^m \frac 1{(k-1)!}\,G_i^{(k-1)}(r_*) \, \varepsilon_j^{k-1} \right] = \det_{1\le i,j\le m} \Big[\frac 1{(j-1)!}\,G_i^{(j-1)}(r_*) \Big] \: \det_{1\le i,j\le m} \big[\varepsilon_j^{i-1}\big].
\end{equation}
The Vandermonde determinant yields, upon using $\varepsilon_j = z_j/\sqrt{-nF''(r_*)} + \ldots$\,,
\begin{equation}
    \det_{1\le i,j\le m} \big[\varepsilon_j^{i-1} \big] = \Delta\Big(\frac{z_j^{(m)}}{\sqrt{-n F''(r_*)}} + \ldots\Big) = \big(\!\!-\!nF''(r_*)\big)^{-m(m-1)/4} \, \Delta(z_j^{(m)}) + \ldots
\end{equation}
Thus its dominant order in $n$ precisely cancels the power $n^{m(m-1)/4}$ in the expression \eqref{eq:rho}, implying, in the large $n$ limit, that the corrections in the previous equation will decay to zero, and that the contribution of the derivatives $G_i^{(j-1)}(r_*)$ will be given by the derivatives of their zeroth order in $n$, that is,
\begin{equation}
    B_m^{[n]} \equiv \lim_{n \to \infty} \, \det_{1\le i,j\le m} \Big[\frac 1{(j-1)!}\,G_i^{(j-1)}(r_*) \Big] = \det_{1\le i,j\le m} \Big[\frac 1{(j-1)!}\,\frac{{\rm d}^{j-1}\tilde{N}_0^{[n-m+i]}(r)}{{\rm d}r^{j-1}}\Big|_{r_*} \Big].
\end{equation}
    
We obtain the following final result for the particle distribution in the scaling limit,
\begin{equation}
    \rho(z^{(1)},\ldots,z^{(m)}) = \frac{1}{(2\pi)^{m/2}} \, \left\{C_{n,m}(v) \, B_m^{[n]} \, \Big(\frac {1-4v^2}4\Big)^{m/2}\right\} \, \Delta\big(z_j^{(m)}\big) \, \prod_{j=1}^m\text{e}^{-\frac{1}{2}(z_j^{(m)})^2} \, 
    \chit(z^{(1)}\prec\dots\prec z^{(m)}), 
\label{eq:final}
\end{equation}
    precisely the GUE-corners distribution except for the factor within the curly brackets. The proper normalizations of both the discrete distribution we started from and of the GUE-corners process ensure that this factor must be equal to 1.
\end{proof}

\medskip\noindent
\underline{\it Remark.}\; It would be clearly desirable to check explicitly that the factor within the brackets is actually equal to 1. Although the final result is surprisingly simple, the explicit calculation of $B_m^{[n]}\big((-1)^m v\big)$ looks awkward: it requires multiple derivatives of the functions $F(r)$ and $N_0^{[0]}(r),\ldots,N_0^{[3]}(r)$, which are all rather complicated for $v\neq 0$. We have nevertheless checked that the factor is indeed equal to 1 for $m \le 4$. For $v=0$ however the situation simplifies drastically. For any $v$, the arguments in the above proof indeed show that, in the large $n$ limit, the determinant of the matrix $\big(G_i(r_j)\big)_{i,j}$ can be reduced to the determinant of its zeroth order $\big(\tilde{N}_0^{[n-m+i]}(r_j)\big)_{i,j}$. However, for $v=0$, the functions $N_0^{[n-m+i]}(r)=N_0(r)$ do not depend on $i$, and we obtain
\begin{align}
    &\det_{1\le i,j\le m}  \Big[N_{0}^{[n-m+i]}(r_j) \, \text{e}^{(i-m) [F(r_j)-r_j F'(r_j)]}\Big] 
    =
    \Big[ \prod_{j=1}^m N_0(r_j) \Big] \, (-1)^{m(m-1)/2} \, \Delta\Big(\text{e}^{-[F(r_j)-r_j F'(r_j)]}\Big) 
    \nonumber\\ & \qquad=
    \big[N_0(r_*)\big]^m \, \Big[\big(\!\!-\!r_* F''(r_*)\big) {\rm e}^{-F(r_*)}\Big]^{m(m-1)/2} \, \Delta\Big(\frac{z_j^{(m)}}{\sqrt{-n F''(r_*)}}\Big).
\end{align}
As stated in Proposition \ref{prop:int from the right}, $L_{i,j}$ is given by a binomial coefficient, $L_{i,j} = \binom{i}{j}$. The asymptotic value relevant here is 
\begin{equation}
    L_{n,rn} \simeq \frac{r^{-rn} (1-r)^{-(1-r)n}}{\sqrt{2\pi n \,r(1-r)}},
\end{equation}
from which one obtains $F(r) = -r \log r - (1-r) \log{(1-r)}$ (implying, as announced above, $r_*=\frac 12$, $F(r_*) = \log 2$ and $F''(r_*)=-4$) and $N_0(r) = [r(1-r)]^{-1/2}$, with $N_0(r_*)=2$. Then, the coefficient in front of the Vandermonde determinant, to be identified with $B_m^{[n]}$, yields $B_m^{[n]}(0)=2^m$. With $C_{n,m}(0)=1$, we obtain that the factor in the curly brackets in \eqref{eq:final} is indeed equal to 1.

\section{Discussion}

In this work, we showed that the GUE-corners distribution still appears as the limiting distribution in Aztec diamonds near the contact point between the arctic curve and the edge of the domain when the probability measure on the configuration set is modified. More precisely, we examined the biased (or one-periodic) and the two-periodic measures. The main ingredients that we used are the determinantal structure of the particle process and the asymptotic Gaussian behaviour of the determinant entries. We believe these hypotheses to be sufficient to observe the convergence to the GUE-corners process in Aztec diamonds with any non-pathological probability measure, even though a deeper analysis would be required to settle this question. On the other hand, these hypotheses are certainly not necessary, since there exist discrete non-determinantal point processes (for instance, the uniformly weighted alternating sign matrices \cite{gorin2014alternating}) that also converge to GUE-corners. Under such considerations, it would not be surprising that further generalizations of the weighting system on Aztec diamonds continue to be associated with GUE, for instance, higher periodicity or two-periodic with an additional vertical bias, although fully explicit calculations in these cases would probably become difficult to perform with our method.

It is also known that partition functions of Aztec diamonds are related to the octahedron recurrence \cite{kuo2004applications,speyer2007perfect}, and one natural question would be to know whether the distribution of particles $\rho(x^{(1)},\dots,x^{(m)})$ also share connections with this recurrence. Since the density $\rho$ gives the probability to observe a configuration while the position of a certain number of particle is prescribed, it is easily related to the notion of \textit{refined partition functions} (enumeration of weighted configurations with the same constraints). These functions have to satisfy a modified octahedron system (as it is shown in \cite{ruelle2022double} for the one-refined partition function). However, when one considers a refinement on several levels, the equations produced by the octahedron recurrence do not simplify very much, and although it effectively allows us to compute $\rho(x^{(1)},\dots,x^{(m)})$ recursively, the system of equations becomes rapidly complicated as $m$ increases and it seems that no general formula for $\rho$ can be extracted from it.

One last peculiarity that we observed is that \textit{multi-refined partition functions} (in the sense that the particle positions on the $m$ first levels are prescribed) appear to be proportional to the determinant of one-refined partition functions. In terms of the probability distribution, it is equivalent to observe that the $m$-level correlation function for Aztec diamonds of order $n$ is proportional to a determinant of $1$-level correlation functions of ADs of smaller size,
\begin{equation}
    \rho_n(x^{(1)},\dots,x^{(m)}) \sim \det_{1\le i,j \le m} \left[\rho_{n-m+i}(x_j^{(m)})\right] \Omega(x^{(1)},\dots,x^{(m-1)}) \chit(x^{(1)}\prec\dots\prec x^{(m)}).
\end{equation}
The octahedron equation provide a partial answer to this observation since it shows that $m$-refined partition functions can always be related to the $m-1$ ones, but it does not really shed any light on the underlying structure. A similar construction has already been observed in a slightly different context \cite{debin2021factorization}, even though the details of this factorization remain unclear to us. It would be interesting to investigate whether there exists a general structure behind it.

\section*{Acknowledgements}

The authors are grateful to Tom Claeys and Christophe Charlier for valuable discussions. This work was supported by the Fonds de la Recherche Scientifique - FNRS. PR is a Senior Research Associate of FSR-FNRS (Belgian Fund for Scientific Research). NR is supported by the Belgian FRIA grant FC50105.

\begin{appendices}

\section{Alternating sum of powers}
\label{apx:Appendix sum of powers}


In this appendix, we prove the formula 
\begin{equation}
    \sum_{x=y_1}^{y_2} \varepsilon_\ell(x)\, x^{j-1} \,\Omega(x)
    = 
    -v\left( y_2^{j-1}- y_1^{j-1}\right) - \sum_{k=2}^{j} \frac{B_k}{j}\binom{j}{k} (2^k-1) \left( \varepsilon_{\ell+1}(y_2)\,y_2^{j-k} - \varepsilon_{\ell+1}(y_1)\,y_1^{j-k}\right).
    \label{eq:alt sum powers with epsilon}
\end{equation}

\begin{lemma}
    For $y$ and $j$ strictly positive integers, we have
    \begin{equation}
        \sum_{x=0}^{y} (-1)^x x^{j-1} 
        =
        (-1)^y \left[\frac{y^{j-1}}{2}+\sum_{k=2}^{j} \frac{B_k}{j}\binom{j}{k}(2^k-1) y^{j-k} \right] - \frac{B_{j}}{j}(2^{j}-1).
        \label{eq:alt sum powers}
    \end{equation}
    \label{lemma:alternate sum powers}
\end{lemma}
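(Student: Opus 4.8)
The plan is to reduce the \emph{alternating} power sum to ordinary (non-alternating) power sums, for which Faulhaber's formula immediately produces the Bernoulli numbers. Isolating the even-indexed terms and using $\sum_{x}(-1)^x x^{j-1}=2\sum_{x\text{ even}}x^{j-1}-\sum_x x^{j-1}$, one has
\[
\sum_{x=0}^{y}(-1)^x x^{j-1} = 2\!\!\sum_{\substack{0\le x\le y\\ x\text{ even}}}\!\! x^{j-1} - \sum_{x=0}^{y}x^{j-1} = 2^{j}\sum_{t=0}^{\lfloor y/2\rfloor} t^{j-1} - \sum_{x=0}^{y}x^{j-1},
\]
where the factor $2^{j}$ arises from $2\cdot(2t)^{j-1}=2^{j}t^{j-1}$. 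Writing $S(N)=\sum_{x=0}^{N}x^{j-1}$, the claim becomes the identity $\sum_{x=0}^y(-1)^x x^{j-1}=2^{j}S(\lfloor y/2\rfloor)-S(y)$, which already makes clear why the floor (and hence a case split on the parity of $y$) is intrinsic to the problem.

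Next I would insert Faulhaber's formula in the form $S(N)=\frac{N^{j}}{j}+\frac{N^{j-1}}{2}+\sum_{k=2}^{j-1}\frac{B_k}{j}\binom{j}{k}N^{j-k}$. For even $y=2M$ we have $\lfloor y/2\rfloor=y/2$, and substituting $N=y/2$ multiplies the $k$-th term by $2^{j}/2^{j-k}=2^{k}$; subtracting $S(y)$ then cancels the leading $y^{j}/j$ terms and converts each coefficient into $(2^{k}-1)$, producing exactly $\frac{y^{j-1}}{2}+\sum_{k=2}^{j-1}\frac{B_k}{j}\binom{j}{k}(2^{k}-1)y^{j-k}$. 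This matches the claimed right-hand side with $(-1)^y=1$, once one notes that the missing $k=j$ term $\frac{B_j}{j}(2^{j}-1)$ is precisely cancelled by the additive constant $-\frac{B_j}{j}(2^{j}-1)$. Thus the even case closes by a direct computation.

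The main obstacle is the odd case $y=2M+1$, where $\lfloor y/2\rfloor=(y-1)/2$ forces powers of $y-1$ (rather than $y$) into $2^{j}S(\lfloor y/2\rfloor)$, so the expression does not assemble into a clean polynomial in $y$ directly. I would circumvent this by induction on $y$: since the left-hand side $A(y)$ trivially obeys $A(y)-A(y-1)=(-1)^y y^{j-1}$, and the even case is already settled, it suffices to verify that the proposed right-hand side $R(y)$ satisfies $R(y)-R(y-1)=(-1)^y y^{j-1}$ when passing from an even $y-1$ to an odd $y$. Verifying this step is where the work lies: it requires re-expanding the powers $(y-1)^{j-k}$ via the binomial theorem and invoking a standard Bernoulli identity to recombine the terms, thereby reproducing the $(-1)^y$ prefactor and the extension of the summation range up to $k=j$. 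A convenient base point is $y=0$, for which $A(0)=0$ and $R(0)=\frac{B_j}{j}(2^j-1)-\frac{B_j}{j}(2^j-1)=0$.

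Alternatively, both parities can be unified from the outset through the telescoping relation $E_{j-1}(x+1)+E_{j-1}(x)=2x^{j-1}$ for the Euler polynomials, which gives directly $2\sum_{x=0}^{y}(-1)^x x^{j-1}=E_{j-1}(0)+(-1)^y E_{j-1}(y+1)$. Converting this closed form into the stated Bernoulli expansion then only requires the value $B_j(1/2)=(2^{1-j}-1)B_j$, which produces the constant $-\frac{B_j}{j}(2^j-1)$, together with the binomial expansion of the Bernoulli polynomial $B_j$. This route avoids the floor entirely, at the cost of trading the elementary parity split for the Euler-to-Bernoulli polynomial identities; I would keep it in reserve as a cross-check on the direct computation.
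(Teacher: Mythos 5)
Your proposal is correct, but it takes a genuinely different route from the paper's. The paper proves the lemma in one stroke with exponential generating functions in $j$: multiplying both sides by $z^{j-1}/(j-1)!$ and summing, the left side becomes the geometric sum $\frac{1-(-{\rm e}^z)^{y+1}}{1+{\rm e}^z}$, the right side assembles into $\frac{(-1)^y}{2}{\rm e}^{zy}\bigl(1+\tanh(z/2)\bigr)+\frac12\bigl(1-\tanh(z/2)\bigr)$, and equality is checked directly — both parities of $y$ are handled at once through the factor $(-{\rm e}^z)^{y+1}$. You instead work at fixed $j$: the parity decomposition $\sum_{x}(-1)^x x^{j-1}=2^jS(\lfloor y/2\rfloor)-S(y)$, Faulhaber for even $y$, and a telescoping/induction step for odd $y$. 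Your even-case computation is right, and the odd-case step you defer does go through: after simplification it is exactly the identity
\begin{equation*}
    \sum_{k=2}^{j}\binom{j}{k}B_k(2^k-1)\bigl[y^{j-k}+(y-1)^{j-k}\bigr]
    =
    \frac{j}{2}\bigl[y^{j-1}-(y-1)^{j-1}\bigr],
\end{equation*}
which follows from the multiplication theorem $B_j(y/2)+B_j((y+1)/2)=2^{1-j}B_j(y)$ together with $B_j(x+1)-B_j(x)=jx^{j-1}$, or equivalently from $\sum_{k\ge2}\frac{B_k}{k!}(2^k-1)z^k=\frac{z}{2}\tanh(z/2)$ — the same $\tanh$ identity that powers the paper's computation, so the two proofs are cousins at heart. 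Your Euler-polynomial alternative is also correct ($2\sum_{x=0}^y(-1)^xx^{j-1}=E_{j-1}(0)+(-1)^yE_{j-1}(y+1)$ by telescoping, with $E_{j-1}(0)=-\frac{2(2^j-1)}{j}B_j$ giving the constant) and is arguably the cleanest, since it avoids the parity split entirely. What the paper's approach buys is brevity and uniformity in $y$; what yours buys is that every step in the main line is an elementary finite manipulation, at the cost of the case split and of quoting Bernoulli/Euler polynomial facts whose own proofs are generating-function arguments. One small caveat: your Faulhaber step as written assumes $j\ge2$ (for $j=1$ the term $N^{j-1}/2$ is absent and the summand $x=0$ contributes $0^0=1$), so the case $j=1$ should be checked separately, which is immediate using $B_1=-\tfrac12$.
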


\begin{proof}
    We compute the exponential generating function by multiplying by $z^{j-1}/(j-1)!$ both sides of the equality. The lhs gives 
    \begin{equation}
        \sum_{j=1}^{\infty}\sum_{x=0}^{y} (-1)^x \frac{(zx)^{j-1}}{(j-1)!} 
        =
        \sum_{x=0}^{y} (-1)^x \e^{zx}
        =
        \frac{1-(-\e^{z})^{y+1}}{1+\e^{z}}.
    \end{equation}
    While we find for the right-hand side
    \begin{equation}\begin{aligned}
            &\frac{(-1)^y}{2} \sum_{j=1}^\infty \frac{(zy)^{j-1}}{(j-1)!} + (-1)^y \sum_{k=2}^\infty \frac{B_k}{k!}(2^k-1)z^{k-1} \sum_{j=0}^\infty \frac{(zy)^{j}}{j!} - \sum_{j=1}^\infty \frac{B_j^{-}}{j!}(2^k-1)z^{j-1}
            \\[0.2cm] &=
            \frac{(-1)^y}{2}\e^{zy} \big(1+\tanh(z/2)\big) + \frac{1}{2} \big(1-\tanh(z/2)\big).
    \end{aligned}\end{equation}
    Using the exponential definition of $\tanh$, it is immediate to prove the equality between the two generating functions.
\end{proof}

We recall that $\varepsilon_\ell(x)=(-1)^{\ell+x}=-\varepsilon_{\ell+1}(x)$ and 
\begin{equation}
    \Omega(x) 
    =
    \left\{\begin{array}{ll}
        \Omega_{\text{sup}} = \frac{1}{2} +v \varepsilon_{\ell+1}(y_2) & \text{if } x=y_2 ,
        \\[0.15cm]
        \Omega_{\text{inf}} = \frac{1}{2} -v \varepsilon_{\ell+1}(y_1) & \text{if } x=y_1 ,
        \\[0.15cm]
        1 & \text{otherwise.}
    \end{array}\right.
\end{equation}
Finally, to evaluate the desired sum, one can separate the terms as
\begin{equation}
    \sum_{x=y_1}^{y_2} \varepsilon_\ell(x)\, x^{j-1} \,\Omega(x) 
    =
    \sum_{x=0}^{y_2} \varepsilon_{\ell}(x)\, x^{j-1}  
    -\sum_{x=0}^{y_1} \varepsilon_{\ell}(x)\, x^{j-1}  
    + \Omega_{\text{inf}}\,\varepsilon_{\ell}(y_1)\, y_1^{j-1}
    + (\Omega_{\text{sup}}-1)\,\varepsilon_{\ell}(y_2)\, y_2^{j-1},
\end{equation}
and use the lemma above to obtain the claimed formula \eqref{eq:alt sum powers with epsilon}.

\end{appendices}


\end{document}